\documentclass{lmcs} 

\keywords{Categorical semantics, Linear Logic, Differential Linear Logic}

\usepackage{hyperref}
\usepackage{amssymb}
\usepackage{amsmath}
\usepackage{lipsum}
\usepackage{bbm}
\usepackage{mathrsfs}
\usepackage{cmll}
\usepackage{stmaryrd}
\usepackage{faktor}
\usepackage{ebproof}
\usepackage{tikz-cd}
\usepackage{csquotes}
\theoremstyle{plain} %\crefname{satz}{Satz}{S\"atze}
\newcommand{\id}{\mathrm{id}}
\newcommand{\set}[1]{\{#1\}}
\newcommand{\banana}[1]{\llparenthesis#1\rrparenthesis}
\newcommand{\comp}{\circ}

\renewcommand{\implies}{\Rightarrow}
\newcommand{\op}{\mathit{op}}

\newcommand{\Type}{\mathrm{Type}}
\newcommand{\U}{\mathcal{U}}
\newcommand{\F}{\mathcal{F}}
\newcommand{\Tan}{\mathcal{T}}
\newcommand{\D}{\mathscr{D}}
\newcommand{\dif}{\mathrm{D}}
\renewcommand{\L}{\mathscr{L}}
\newcommand{\C}{\mathscr{C}}
\newcommand{\Simp}{S}
\newcommand{\simp}{\mathbf{s}}
\newcommand{\LSimp}{LS}
\newcommand{\lsimp}{\mathbf{ls}}
\newcommand{\dom}{\mathbf{dom}}
\newcommand{\codom}{\mathbf{cod}}

\newcommand{\dig}{\mathbf{p}}

\newcommand{\dere}{\mathbf{d}}

\newcommand{\weak}{\mathbf{w}}

\newcommand{\cont}{\mathbf{c}}

\newcommand{\tensor}{\otimes}

\begin{document}

% If the title is longer than 55 characters, then specify a shorter running title as the optional argument to \title. The running title should be roughyl at most 55 characters:
\title[A Fibrational Perspective on DiLL]{A Fibrational Perspective on Differential Linear Logic}
%\titlecomment{{\lsuper*}OPTIONAL comment concerning the title, \eg,
 %if a variant or an extended abstract of the paper has appeared elsewhere.}
%\thanks{thanks, optional.}	%optional

% affiliations are numbered automatically with a, b, c (see below)
% use the optional argument to indicate the affiliation(s) of each author
% omit the argument if there is only one author, or only one affiliation
\author[J.~Koleilat]{Jad Koleilat \lmcsorcid{0009-0009-1352-0455}}

% affiliation 1 (automatically numbered a)
\address{Université Sorbonne Paris Nord, LIPN, CNRS, UMR 7030, F-93430, Villetaneuse, France}	%optional
% write emails for all authors having that affiliation
\email{jad.koleilat@lipn.univ-paris13.fr}  %optional

%% etc.

%% required for running head on odd and even pages, use suitable
%% abbreviations in case of long titles and many authors:

%%%%%%%%%%%%%%%%%%%%%%%%%%%%%%%%%%%%%%%%%%%%%%%%%%%%%%%%%%%%%%%%%%%%%%%%%%%

%% the abstract has to PRECEDE the command \maketitle:
%% be sure not to issue the \maketitle command twice!

\begin{abstract}
 \noindent Differential Linear Logic (DiLL) is a sequent calculus that expresses differentiation via symmetries between linear and non-linear formulas. In this paper, we express categorical models of DiLL as a pair of Grothendieck fibrations equipped with a tangent functor. To do so, we adapt methods from categorical semantics of type theory to linear-non-linear adjunctions. This is a first step towards unifying DILL and dependent types.
\end{abstract}

\maketitle

%% start the paper here:
\section{Introduction}

Differential Linear Logic (DiLL) \cite{ehrhard_differential_2006}\cite{ehrhard_introduction_2018} is a sequent calculus capturing the notion of higher order differentiation from a logical perspective. It is an extension of Linear Logic (LL) \cite{girard_linear_1987} with three additional rules: cocontraction, coweakening, and codereliction. As their name suggests, those rules are symmetric to the contraction, weakening, and dereliction rules already present in LL. The differentiation of DiLL is related to the usual mathematical notion, as best illustrated by the convenient vector space model \cite{blute_convenient_2012} and the duality results between Cartesian Differential Categories (CDC)  
and categorical models of DiLL \cite{garner_cartesian_2021}\cite{blute_cartesian_2009}. \\
Differentiation, in all its generality, is a dependent construction: the differential of a smooth function $f : M \to N$ is a dependent function $D_x(f) : TM_x \multimap TN_{f(x)}$. Yet, DiLL is a simply typed theory; this raises the question, does a dependent version of DiLL exist? In this paper, we take a first step towards answering this question by expressing its models in a fibred way. \\ 
Given a smooth function $f : M \to N$, its differential $D(f)$ takes as inputs a point $x$ in $M$, a small variation $v_x$ at point $x$, and outputs a vector $D_x(f)(v_x)$ quantifying how much the variation in input $v_x$ influences the output $f(x)$. This is a dependent construction as the vector space of \enquote{small variations at point $x$}, denoted $TM_x$ and called the tangent space of $M$ at $x$, is a dependent family of vector spaces. This mix of dependence and linearity naturally guides us towards a dependent version of LL where one could express the type of $D(f)$ as $\Pi_{x : M} (TM_x \multimap TN_{f(x)})$.\\
Mixing dependency and linearity is a subtle subject, as linearity ensures variables only appear once, and type dependency allows variables to appear in terms and types \cite{krishnaswami_integrating_2015}\cite{pitts_categorical_2015}. Fortunately, we only need dependence on non-linear types to express the constructions of differential geometry. More explicitly, we need linear types depending on non-linear ones, but not the converse. Looking back at the typing of $D(f)$, there is a family of vector spaces (linear types), $TM$ and $TN$, depending on manifolds (non-linear types), $M$ and $N$. Such type theories exist, and we have an explicit description of their categorical semantics \cite{pitts_categorical_2015} \cite{lundfall_models_nodate} in terms of Grothendieck fibrations. \\
Fibrations are a categorical framework used to express dependency. A fibration is the data of two categories, $\mathbb{B}$ called the base category and $\mathbb{E}$ called the total category, as well as a functor $p : \mathbb{E} \to \mathbb{B}$ called the fibration. Intuitively, for every object $B \in \mathbb{B}$, $p^{-1}(B)$ is the category of objects depending on $B$. This formalism provides a way to interpret dependent typed theories \cite{jacobs_categorical_1999}. It can also provide a semantics for simply typed theories, as simply typed theories are particular instances of dependent typed theories where terms don't appear in any types. Concretely, models of simply typed theories can be described as a particular fibration called the simple fibration \cite{jacobs_categorical_1999}. 

\paragraph*{Contributions} 
From any model of LL expressed as a linear-non-linear adjunction \cite{mellies_categorical_2009}, we construct a fibration that we call the linear simple fibration. It is the linear logic equivalent of the simple category for simply typed theories. We show that any linear-non-linear adjunction corresponds to a fibred linear-non-linear adjunction between the simple fibration and the linear simple fibration. Moreover, requiring the existence of a particular section of this fibration, which we call a linear tangent functor, yields a model of DiLL. This extends the semantics of DiLL from Seely categories to any additive linear-non-linear adjunction with biproduct (see def. \ref{def:AddLNL}). This opens the search for new kinds of models arising from tangent categories. If they exist, they are good candidates for what should be models of dependent DiLL.

\paragraph*{Related Works} M. Kerjean, M. Rogers, and V. Maestracci \cite{kerjean_functorial_2025} showed that much of the structure of differentiation in DiLL is a consequence of the existence of a so-called differentation functor defined on the co-slice of the cartesian category. They extend the semantics of DiLL to any linear-non-linear adjunction with biproducts. Their approach is not a fibrational one, but it is not far away, as it relies on the co-slice category above the terminal object. This allows them to express the differential at point 0 of any morphism and deduce most axioms of differential Seely categories, but not the monoidal rule. Their failure to deduce this rule is due to the lack of axioms describing the behavior of partial differential (which expresses differentiation in context). The idea presented in section \ref{sec:GDSC} of constructing the deriving transform from the unit of the adjunction is taken from their work. \\
G. Cruttwell, J. Gallagher, JS. P. Lemay and D. Pronk \cite{cruttwell_monoidal_2022} discuss fibrations associated with a monoidal differential category with Seely isomorphisms. Their construction $\L_{\oc}[X]$ is the particular case of our linear simple category in the setting of Seely categories. \\
There has been work done to unify the major categorical framework for differentiation: (reversed) Cartesian Differential Categories, (reversed) Generalised Differential Categories, (reversed) Tangent Categories, around the notion of fibration \cite{capucci_fibrational_2024}. Differential Categories, which models DiLL \cite{blute_differential_2006} \cite{blute_differential_2020} \cite{hutchison_differential_2007},  are not present in their paper. We contribute to their work by adding differential Seely categories to this list.\\
M. Vákár \cite{pitts_categorical_2015} and M. Lundfall \cite{lundfall_models_nodate} describe the categorical semantic of dependent LL, which we crucially use in this work.

\paragraph*{Outline} In section \ref{prelim}, we introduce the notion of Grothendieck fibration and recall the semantics of LL and DiLL. In section \ref{LNL_fib}, we introduce the linear simple category of a linear-non-linear adjunction. This allows the construction of a fibred adjunction from any linear-non-linear adjunction. In section \ref{sec:GDSC}, using this formalism, we define generalised differential Seely categories and explain how they generalise models of DiLL. In section \ref{sec:CDC}, we link our work to Cartesian differential categories. In the last section \ref{conclu}, we conclude and discuss future works.

\paragraph*{Conventions}  We denote by \enquote{$;$} the diagramatic composition. Let $f : A \to B$ and $g : B \to C$, then $f;g : A \to C$ is their composite. We assume the monoidal categories we consider are strict monoidal. This is done to simplify calculations. We discuss the implications of this hypothesis in our results when it is necessary.

\section{Preliminary}
\label{prelim}

We recall notions of fibred category theory as well as the semantics of LL and DiLL.

\subsection{Grothendieck Fibrations}

Grothendieck fibrations, also called fibrations, are a categorical structure useful in type theory for expressing dependency. A fibration is the data of two categories $\mathbb{B}$ called the base category, $\mathbb{E}$ called the total category, and a functor $p : \mathbb{E} \to \mathbb{B}$ such that morphisms in $\mathbb{B}$ lift \enquote{nicely} in $\mathbb{E}$. The notion of Cartesian morphism captures this. 

\begin{defi}[Cartesian Morphism \cite{jacobs_categorical_1999}]
Let $p : \mathbb{E} \to \mathbb{B}$ be a functor. A morphism $f : X \to Y$ in $\mathbb{E}$ is said to be cartesian over $u : I \to J$ in $\mathbb{B}$ if $pf = u$ and for every $g : Z \to Y$ in $\mathbb{E}$ such that $pg = w;u$ for some $w : pZ \to I$, then there exists a unique $h : Z \to X$ in $\mathbb{E}$ above $w$, such that $h;f = g$. 
\begin{center}
\begin{tikzcd}[column sep=large, row sep=large]
Z \arrow[rd, "g"] \arrow[d, "h"', dotted] & \\
X \arrow[r, "f"'] & Y
\end{tikzcd}
\hspace{3em}
\begin{tikzcd}[column sep=large, row sep=large]
pZ \arrow[d, "w"'] \arrow[dr, "w;u = pg"] & \\
I \arrow[r, "u"'] & J
\end{tikzcd}
\end{center}
\end{defi}

\begin{defi}[Fibration \cite{jacobs_categorical_1999}]
A functor $p: \mathbb{E} \to \mathbb{B}$ is a fibration if for every $u : I \to J$ in $\mathbb{B}$ and for every choice of $Y \in \mathbb{E}$ above $J$, there exists a cartesian morphism $f : X \to Y$ above $u$. 
\end{defi}

\begin{defi}[Fibred Functor \cite{jacobs_categorical_1999}]
Let $p : \mathbb{E} \to \mathbb{B}$ and $q : \mathbb{D} \to \mathbb{B}$ be two fibrations, a fibred functor $F : p \to q$ is a functor $F : \mathbb{E} \to \mathbb{D}$ such that, $qF = p$ and $F$ preserves cartesian morphisms. 
\begin{center}
\begin{tikzcd}[row sep=large]
\mathbb{E} \arrow[dr, "p"'] \arrow[rr, "F"] & & \mathbb{D} \arrow[dl, "q"] \\
 & \mathbb{B}
\end{tikzcd}
\end{center}
\end{defi}

\begin{defi}[Fibre Above an Object \cite{jacobs_categorical_1999}]
For every fibration $p: \mathbb{E} \to \mathbb{B}$ and $J \in \mathbb{B}$, a morphism in $\mathbb{E}$ is said to be vertical above $J$ if it is sent to $\id_J$ by $p$. The fibre above $J$, denoted $\mathbb{E}_J$, is the category composed of the objects above $J$ and the vertical morphisms above $J$. 
\end{defi}

From a type theoretic perspective, one can think of the base category $\mathbb{B}$ as the category with types as objects and terms as morphisms. For example, a morphism $t : I \to J$ can be thought of as a term $t$ of type $J$ in context $I$. The fibre above an object $I$ is the category with objects in context depending on $I$. For example, the fiber above $\mathbf{Nat}$ might contain the context $n : \mathbf{Nat}, l : \mathbf{list}(n)$. A morphims in $\mathbb{E}_I$ between to context $i: I, x : \Gamma$ and $i: I, x' : \Gamma'$ is a term $i : I, x : \Gamma \vdash t : \Gamma'$.\\
Given a term $i: I \vdash t : J$, one obtains a functor from the fibre above $J$ to the fibre above $J$. This functor is denoted by $t^*$, takes as input a context depending on $J$; $j: J; x : \Gamma(j)$ and outputs the results of the substitution of the free variables of type $J$ by $t$; $i :I, x : \Gamma(t(i))$. This rewriting operation is captured categorically by the notion of reindexing functors.  

\begin{defi}[Cloven Fibration \cite{jacobs_categorical_1999}]
A fibration $p : \mathbb{E} \to \mathbb{B}$ is said to be cloven if it comes equipped with a choice of cartesian morphisms. Concretely, for every $u : I \to J$ in $\mathbb{B}$ and $Y \in \mathbb{E}$ above $J$, denote by $\overline{u}_Y: u^*(I) \to Y$ the chosen cartesian morphisms above $u : I \to pY$. A fibred functor is cloven if it preserves the choice of cartesian morphisms.
\end{defi}

\begin{prop}[Reindexing Functors \cite{jacobs_categorical_1999}]
\label{prop:ReindexingFunctors}
Let $p : \mathbb{E} \to \mathbb{B}$ be a cloven fibration then, every morphisms $u : I \to J$ in $\mathbb{B}$ induces a functor, called a reindexing functor, from $u^* :\mathbb{E}_J \to \mathbb{E}_I$. To build it, consider a morphism $f : X \to Y \in \mathbb{E}_J$. Then, $p (\overline{u}_X;f) = id_I; p(\overline{u}_Y)$. Since $\overline{u}_Y$ is a cartesian morphism above $u : I \to pY$, there exists a unique morphism denoted $u^*(f) : u^*(X) \to u^*(Y)$ above $\id_I$ such that $u^*(f); \overline{u}_Y = \overline{u}_X;f$. 
\begin{center}
\begin{tikzcd}[column sep=large, row sep=large]
u^*(X) \arrow[r, "\overline{u}_X"] \arrow[d, "u^*(f)"', dotted] & X \arrow[d, "f"] \\
u^*(Y) \arrow[r, "\overline{u}_Y"'] & Y
\end{tikzcd}
\end{center}
\end{prop}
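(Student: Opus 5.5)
We must show that $u^*$ as defined is a functor $\mathbb{E}_J \to \mathbb{E}_I$. Three things need checking, in this order: that $u^*(f)$ is well defined, that $u^*$ preserves identities, and that $u^*$ preserves composition. All three follow from the uniqueness clause in the definition of a cartesian morphism. The general principle is this: two morphisms $h, h' : Z \to u^*(Y)$ that lie above the same $w$ and satisfy $h;\overline{u}_Y = h';\overline{u}_Y$ must be equal.

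\textbf{Well-definedness.} Take $f : X \to Y$ in $\mathbb{E}_J$, so $pf = \id_J$, and note $p(\overline{u}_X) = u$. Then
\[ p(\overline{u}_X;f) = u;\id_J = u = \id_I;u = \id_I; p(\overline{u}_Y). \]
So we may apply cartesianness of $\overline{u}_Y$ with $w = \id_I$. This gives a unique $u^*(f) : u^*(X) \to u^*(Y)$ above $\id_I$ with $u^*(f);\overline{u}_Y = \overline{u}_X;f$. Since it lies above $\id_I$, $u^*(f)$ is a vertical morphism, hence a morphism of $\mathbb{E}_I$. On objects, $u^*$ sends $Y$ to the domain of the chosen lift $\overline{u}_Y$, which lies above $I$. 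Both assignments are fixed once and for all by the cleavage.

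\textbf{Identities.} The morphism $\id_{u^*(X)}$ lies above $\id_I$. It also satisfies $\id_{u^*(X)};\overline{u}_X = \overline{u}_X = \overline{u}_X;\id_X$. By uniqueness, $u^*(\id_X) = \id_{u^*(X)}$.

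\textbf{Composition.} Take $f : X \to Y$ and $g : Y \to Z$ in $\mathbb{E}_J$. The morphism $u^*(f);u^*(g)$ lies above $\id_I;\id_I = \id_I$, by functoriality of $p$. Pasting the two defining squares gives
\[ u^*(f);u^*(g);\overline{u}_Z = u^*(f);\overline{u}_Y;g = \overline{u}_X;f;g. \]
The morphism $u^*(f;g)$ is by definition the unique morphism above $\id_I$ with this same property, so $u^*(f;g) = u^*(f);u^*(g)$.

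None of these steps is a real obstacle. The only point needing care is the composition step: we must check that the composite lies above $\id_I$ before invoking uniqueness. This is immediate from $p$ being a functor.
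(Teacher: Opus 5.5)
Your proposal is correct and follows the paper's approach: you obtain $u^*(f)$ from the uniqueness clause of the cartesian morphism $\overline{u}_Y$ with $w = \id_I$, exactly as the statement does. The paper stops at this construction and defers to Jacobs for the rest, so your checks that identities and composites are preserved, by the same uniqueness argument, just fill in the routine functoriality verification the paper leaves implicit.
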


\begin{rem}
\label{rem:FibFuncIndex}
Given two morphisms $u : I \to J$ and $v : J \to K$ in $\mathbb{B}$, one does not necessarily have $v^*; u^* = (u;v)^*$ and $\id_I^* = \id_{\mathbb{E}_I}$. We can show that there is a canonical isomorphisms $v^*; u^* \simeq (u;v)^*$ and $\id_I^* \simeq \id_{\mathbb{E}_I}$. Fibred functors preserve reindexing functors. 
\end{rem}

\begin{rem}
\label{rem:FibreFactor}
Every morphisms $f : X \to Y$ in $\mathbb{E}$ can be factored as a morphism $f' : X \to f^*(Y)$ in $\mathbb{E}_{pX}$ composed with $\overline{f}_Y$. 
\end{rem}

\begin{defi}[Split Fibration \cite{jacobs_categorical_1999}]
\label{def:SplitFib}
A fibration $p : \mathbb{E} \to \mathbb{B}$ is said to be split if it is cloven and for every $u : I \to J$ and $v : J \to K$ in $\mathbb{B}$, the canonical isomorphisms $v^*; u^* \simeq (u;v)^*$ and $\id_I^* \simeq \id_{\mathbb{E}_I}$ are equalities. In other words, in a split fibration $v^*; u^* = (u;v)^*$ and $\id_I^* = \id_{\mathbb{E}_I}$. A functor of fibrations is split if it preserves the splitting. 
\end{defi}

An important notion in the theory of fibrations is fibred structures. By this, we mean a structure on each fibre that is preserved by reindexing functors.

\begin{defi}[Fibred Cartesian Product \cite{jacobs_categorical_1999}]
A fibration $p : \mathbb{E} \to \mathbb{B}$ has fibred cartesian products if for each object $J$ in $\mathbb{B}$ the fibre $\mathbb{E}_J$is a cartesian category. Moreover, we require that every reindexing functor preserves the cartesian structure.
\end{defi}

The matter of fibred monoidal categories is more subtle, as one can require more or less structure to be preserved more or less strongly. Hence, there exist different definitions in particular monoidal fibrations \cite{shulman_framed_2008} and lax-monoidal fibrations \cite{zawadowsk_lax_2011}. Monoidal fibration is the stronger notion as it implies the second one. In this paper, the fibred monoidal categories we consider are monoidal fibrations. 

\begin{defi}[Monoidal Fibration \cite{shulman_framed_2008}]
\label{def:MonoidalFibration}
A fibration $p : \mathbb{E} \to \mathbb{B}$ is a monoidal fibration if:
\begin{itemize}
\item $\mathbb{E}$ and $\mathbb{B}$ are monoidal categories.
\item $p$ is a strict monoidal functor
\item the monoidal product of $\mathbb{E}$ preserves cartesians arrows. 
\end{itemize}
\end{defi}

\subsection{Linear-non-linear Adjunctions}

At the core of LL lies the distinction between linear proofs and non-linear ones. Both notions are linked by the introduction of the  \enquote{of course} $\oc$ connective, signalling when a resource can be duplicated or erased arbitrarily. The introduction of $\oc$ forces the disctinction between ditive conjunction $\with$ and a multiplicative one $\tensor$, and a linear implication $\multimap$ and a non-linear one $\implies$ (defined as $\oc(-) \multimap -$). The categorical framework capturing this structure is called a Linear-Non-Linear adjunction (LNL) \cite{mellies_categorical_2009}. A remarkable feature of this structure is that the monoidal closure (required to interpret $\multimap$), the additive conjunction (required to interpret $\with$), and the $*$-autonomous structure (required to interpret negation in classical LL) are are independant and sufficient features. By this, we mean that simply requiring their existence is enough for all the required compatibility conditions to be fulfilled. As such, a model of Intuitionistic Linear Logic (ILL) with additive conjunction is simply a model of ILL that has (categorical) products. 

\begin{defi}[Linear-non-linear adjunction]
\label{lnl}
A linear-non-linear adjunction is a symmetric monoidal adjunction between a symmetric monoidal category $(\L, \tensor, 1)$ and a cartesian category $(\C, \times, I)$ 
\begin{center}
\begin{tikzcd}[column sep=large]
(\C, \times, I)
\arrow[r, "{(\F, m)}"{name=F}, bend left=25] &
(\L, \tensor, 1)
\arrow[l, "{(\U, n)}"{name=G}, bend left=25]
\arrow[phantom, from = F, to = G, "\dashv" rotate=-90]
\end{tikzcd}
\end{center}
\end{defi}

\begin{rem}
This definition differs from the one traditionally found in the literature \cite{mellies_categorical_2009} in that we do not require $(\L, \tensor, 1)$ to be monoidal closed.  
\end{rem}

Intuitively, the monoidal category $\L$ is the \enquote{world of linear resources}, and the cartesian category $\C$ is the \enquote{world of non-linear resources}. In $\L$, objects cannot be freely duplicated or erased, whereas in $\C$, every object is a cocommutative comonoid (every object can be duplicated or erased). The monoidal adjunction enables resources to be considered as duplicable or non-duplicable. We expand on this idea later in this subsection. \\
Throughout this paper, $\C$ will be called the cartesian category and $\L$ the monoidal category of the LNL. Denote by $\eta : \id_\C \to \U \comp \F$ the unit of the adjunction and $\dere : \F \comp \U \to \id_\L$ the counit. Define $\oc := \F \comp \U$ the comonad on $\L$ induced by the adjunction with counit $\dere : \oc \to \id_\L$ and comultiplication $\dig : \oc \to \oc \oc$. 

\begin{prop}
\label{m_is_colax}
Given an LNL adjunction as above, $(\F, m)$ is a strong monoidal functor with the oplax structure constructed from $n$ as in proposition \ref{lax-colax_duality}.
\end{prop}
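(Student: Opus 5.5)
This is Kelly's doctrinal adjunction specialised to our setting: the left adjoint of a monoidal adjunction is strong monoidal. I would not redo the general 2-categorical argument. Instead I would work directly with the oplax structure produced by Proposition \ref{lax-colax_duality} (the mates correspondence) and show its components are inverse to $m$.

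\textbf{Step 1: the oplax structure on $\F$.} By Proposition \ref{lax-colax_duality}, the lax structure $n$ on $\U$ corresponds under the adjunction $\F \dashv \U$ to an oplax structure $(\bar m, \bar m_0)$ on $\F$. Concretely:
\[
\bar m_{A,B} : \F(A \times B) \to \F A \tensor \F B
\]
is the transpose of $(\eta_A \times \eta_B); n_{\F A, \F B} : A \times B \to \U(\F A \tensor \F B)$. Similarly, $\bar m_0 : \F I \to 1$ is the transpose of $n_0 : I \to \U 1$.

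\textbf{Step 2: use that the adjunction is monoidal.} By Definition \ref{lnl}, $(\F,m) \dashv (\U,n)$ is a symmetric monoidal adjunction. This means $\eta$ and $\dere$ are monoidal natural transformations for the lax structures $m$ and $n$. Equivalently, $m$ is itself the mate of $n$ in the opposite direction, namely $m_{A,B} = \F\eta_A \tensor \F\eta_B ; \F n ; \dere$, written up to composition order.

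\textbf{Step 3: check that $m$ and $\bar m$ are inverse.} I would verify both composites $m;\bar m = \id$ and $\bar m; m = \id$ by passing to transposes. For the composite $\F(A\times B) \to \F(A \times B)$, it suffices to show its transpose is $\eta_{A\times B}$. Unfolding the transpose gives $(\eta_A\times\eta_B); n ; \U m$. Monoidality of $\eta$ says exactly that this equals $\eta_{A\times B}$.

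For the other composite, $\F A \tensor \F B \to \F A \tensor \F B$, I would use the universal property of $\F A\tensor \F B$ differently. Here the argument needs the monoidality of the counit $\dere$ together with the triangle identities. The plan is to show that precomposing with $m$ yields $m$, and then to use the first identity; more honestly, I would run the standard mate calculation, which combines naturality of $m$, monoidality of $\dere$, and the triangle identity $\F\eta;\dere_{\F}=\id$. The units $m_0$ and $\bar m_0$ are handled in the same way with $I$ and $1$.

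\textbf{Main obstacle.} I expect the hard part to be this second inverse law, since $\F A \tensor \F B$ is not in the image of $\F$ and transposition cannot be applied directly. If the direct computation gets tangled, the fallback is to cite Kelly's doctrinal adjunction theorem: an adjunction with a lax monoidal right adjoint lifts to a monoidal adjunction exactly when the left adjoint's mate is strong. Finally, the paper's strictness convention removes the associator and unitor bookkeeping from the coherence checks.
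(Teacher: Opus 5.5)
The paper gives no proof of this proposition. It is imported from the literature (the doctrinal-adjunction fact recalled from Melli\`es), and only its by-product is used later, e.g.\ in Lemmas \ref{cont_dagger} and \ref{U_weak}: the explicit formula $m^{-1}_{X,Y}=\F(\eta_X\times\eta_Y);\F(n_{\F X,\F Y});\dere_{\F X\tensor\F Y}$. Your direct verification is therefore a different, self-contained route, and its core is sound. Step 1 produces exactly the paper's $p$. Your proof of $\bar m;m=\id$ by transposition and monoidality of $\eta$ is correct, and so is its nullary analogue. Compared with citing Kelly, your route derives the formula for $m^{-1}$ that the paper relies on, instead of taking it on trust.

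Two things need fixing.

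First, delete the ``equivalently'' sentence in Step 2, because it is false. The formula $\F\eta_X\tensor\F\eta_Y;\F n;\dere$ does not typecheck. After $\F\eta_X\tensor\F\eta_Y$ you are at $\F\U\F X\tensor\F\U\F Y$, which is not of the form $\F(\U C\times\U D)$, so $\F n$ cannot be applied without using $m$ itself. Moreover, $\dere$ lands in a tensor, not in $\F(X\times Y)$. Mates only relate lax structures on the right adjoint to oplax structures on the left adjoint. So $m$ is not a mate of $n$; it is the inverse of one, which is exactly what you are proving. Step 3 only uses the correct content of Step 2 (monoidality of $\eta$ and $\dere$), so nothing is lost.

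Second, the inverse law you flag as the main obstacle is immediate, and your first plan for it (``precomposing with $m$ yields $m$'') cannot work. The identity $m;\bar m;m=m$ already follows from $\bar m;m=\id$, and it would give $m;\bar m=\id$ only if $m$ were mono. Instead, just compute:
\begin{align*}
m_{X,Y};\F(\eta_X\times\eta_Y);\F(n_{\F X,\F Y});\dere_{\F X\tensor\F Y}
&=\F\eta_X\tensor\F\eta_Y;\,m_{\U\F X,\U\F Y};\F(n_{\F X,\F Y});\dere_{\F X\tensor\F Y}\\
&=(\F\eta_X;\dere_{\F X})\tensor(\F\eta_Y;\dere_{\F Y})=\id_{\F X\tensor\F Y}.
\end{align*}
The three steps use, in order:
\begin{enumerate}
\item naturality of $m$;
\item monoidality of $\dere$, whose binary component says $m_{\U A,\U B};\F(n_{A,B});\dere_{A\tensor B}=\dere_A\tensor\dere_B$;
\item the triangle identity.
\end{enumerate}
The unit case $m_1;\F(n_I);\dere_1=\id_1$ is literally the nullary half of monoidality of $\dere$. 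With this, no fallback to Kelly's theorem is needed.
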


\begin{prop}[Lax-colax duality \cite{mellies_categorical_2009}]
\label{lax-colax_duality}
Given an adjunction
\begin{center}
\begin{tikzcd}[column sep=large]
(\C, \times, I)
\arrow[r, "{\F}"{name=F}, bend left=25] &
(\L, \tensor, 1)
\arrow[l, "{\U}"{name=G}, bend left=25]
\arrow[phantom, from = F, to = G, "\dashv" rotate=-90]
\end{tikzcd}
\end{center}
If $(\U, n)$ is a lax monoidal functor, then we can endow $\F$ with a colax structure $p$. This colax structure is given by :
\begin{equation*}
p_{X,Y} :=   \F(\eta_X \times \eta_Y); \F(n_{\F(X), \F(Y)}); \dere_{\F(X) \tensor  \F(Y)}  
\end{equation*}
\begin{equation*}
p_I := \F(n); \dere_1
\end{equation*}
Dualy, if $(\F, p)$ is colax monoidal functor then we can endow $\U$ with a lax monoidal structure $n$ defined as follows :
\begin{equation*}
n_{A,B} :=  \eta_{\U(A) \times \U(B)}; \U(p_{\U(A), \U(B)});  \U(\dere_A \tensor \dere_B)
\end{equation*}
\begin{equation*}
n_I := \eta_I; \U(p)
\end{equation*}
The first transformation defines a function $\phi$ from the lax monoidal structures on $\U$ to the colax monoidal structures on $\F$. The second transformation defines a function as well, which we can prove to be the inverse of $\phi$.
\end{prop}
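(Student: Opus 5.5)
We prove the statement in three steps. First we check that the two formulas define natural transformations satisfying the colax (resp. lax) coherence axioms. Then we check that the two constructions are mutually inverse.

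\textbf{Colax structure on $\F$ from lax $n$.} Naturality of $p_{X,Y}$ in $X,Y$ follows from naturality of $\eta$, of $n$ and of $\dere$. Concretely, for $f : X\to X'$ we slide $\F(f\times g)$ through $\F(\eta\times\eta)$ using $f;\eta_{X'} = \eta_X;\U\F(f)$. We then pass $\U\F f\times\U\F g$ through $n$ and finally through $\dere$, using $\F\U(h);\dere = \dere;h$.

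For the coassociativity and counit axioms, we transpose along the adjunction. A cleaner way to organise this is to observe that $p_{X,Y}$ is, by construction, the adjoint transpose $\F(X\times Y)\to \F X\tensor\F Y$ of the map
\[
\eta_X\times\eta_Y; n_{\F X,\F Y} : X\times Y \to \U(\F X\tensor \F Y).
\]
Indeed, the transpose of $g : C\to \U A$ is $\F g;\dere_A$. Under this bijection, each colax axiom for $p$ transposes to an equation between maps into $\U(\cdots)$. The transposed equation then follows from the corresponding lax axiom for $n$ together with naturality of $n$ and the triangle identities. For example, for coassociativity, both sides transpose to
\[
\eta\times\eta\times\eta;\; n\text{'s composed};\; \U(\text{assoc}),
\]
and these agree by the associativity axiom of $n$.

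\textbf{Lax structure on $\U$ from colax $p$.} This is the exact mirror argument. The map $n_{A,B}$ is the transpose of
\[
p_{\U A,\U B};\dere_A\tensor\dere_B : \F(\U A\times\U B)\to A\tensor B,
\]
because the transpose of $h : \F C\to A$ is $\eta_C;\U h$. The lax axioms then transpose to the colax axioms for $p$ plus naturality.

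\textbf{Mutual inverse.} We apply $\phi$ and then its candidate inverse $\psi$, and must show $\psi(\phi(n))=n$. It suffices to compare transposes. The transpose of $\psi(\phi(n))_{A,B}$ is
\[
\phi(n)_{\U A,\U B};\dere_A\tensor\dere_B,
\]
and its transpose back is
\[
\eta_{\U A}\times\eta_{\U A};\, n_{\F\U A,\F\U B};\,\U(\dere_A\tensor\dere_B).
\]
By naturality of $n$ this equals
\[
(\eta_{\U A};\U\dere_A)\times(\eta_{\U B};\U\dere_B);\, n_{A,B} = n_{A,B},
\]
using the triangle identity. The other composite is dual: $\phi(\psi(p))=p$ reduces via the other triangle identity $\F\eta;\dere_\F=\id$ and naturality of $p$. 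The unit components $n_I$ and $p_I$ are handled in the same way, and are simpler.

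\textbf{Main obstacle.} We expect the main effort to lie in verifying the coherence axioms, not the bijection. The key tool is to work systematically with adjoint transposes, which keeps the diagram chases short. The remaining subtlety is bookkeeping with the strictness conventions on $\tensor$ and $\times$.
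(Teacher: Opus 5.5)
Your proof is correct. The paper gives no proof of its own: it cites Melliès, and the result is an instance of Kelly's doctrinal adjunction. Your adjoint-transpose argument, transporting the coherence axioms along the hom-set bijection and reducing both round trips to naturality plus the two triangle identities, is the standard proof of it. One typo: in the round trip $\psi(\phi(n))$ the first factor should read $\eta_{\U A}\times\eta_{\U B}$, and identifying $\psi(\phi(n))_{A,B}$ with that map silently uses naturality of $\dere$ to rewrite $\phi(n)_{\U A,\U B};\dere_A\tensor\dere_B$ as $\F(g);\dere_{A\tensor B}$.
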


The interpretation of the connectives of ILL is straightforward with the multiplicative conjunction $\tensor$ being interpreted by the monoidal product of $\L$ and the \enquote{of course} modality $\oc$ being interpreted by the induced comonad $\oc$ on $\L$.  Rules of ILL are interpreted as natural transformations in $\L$. For example, the dereliction rule is interpreted using the counit of the comonad $\dere$. The contraction and weakening rules are respectively interpreted using natural transformations with components $\cont_A : \oc A \to \oc A \tensor \oc A$ and $\weak_A : \oc A \to 1$. This makes any object $\oc A$ into a cocommutative comonoid. 

\begin{defi}[Contraction and Weakening]
\label{con_weak_def}
To construct them, use the colax structure on $\F$ (prop. \ref{m_is_colax}) to transport the cocommutative comonoids $(X, \Delta_X, I_X)$ from $\C$ to $\L$ where $\Delta_X : X \to X \times X$ is the diagonal morphism and $I_X : X \to I$ is the terminal morphism. Explicitly:
\begin{center}
\begin{tikzcd}[column sep=large]
\cont_{X} := \F(X) \arrow[r, "\F(\Delta_X)"] & \F(X \times X) \arrow[r, "m^{-1}"] & \F(X) \tensor \F(X)
\end{tikzcd}
\begin{tikzcd}[column sep=large]
\weak_{X} := \F(X) \arrow[r, "\F(I_X)"] & \F(I) \arrow[r, "m^{-1}"] & 1 
\end{tikzcd}
\end{center} 
For every $X \in \C$, $(\F(X), \cont_X, \weak_X)$ is a cocommutative comonoid. Write $\cont_A$ and $\weak_A$ respectively for $\cont_{\U(A)}$ and $\weak_{\U(A)}$ which are the announced contraction and weakening natural transformations. 
\end{defi}

The $\oc$ modality of Linear Logic allows one to duplicate or erase at will a resource. This manifests in the semantic by $\oc A$ being a cocommutative comonoid in $(\L, \tensor, 1)$. A morphism $f : A \to B$ is thought of as a linear morphism as it consumes one resource $A$ and outputs $B$. On the other hand, a morphism of type $f : \oc A \to B$ is thought of as a non-linear morphism since it can use the resource $A$ arbitrarily many times. Moreover, such morphisms are in bijection, through the adjunction, with morphisms in $\C$ (the non-linear world). 

\begin{defi}
\label{def:dagger}
Let $f : \F(X) \to A$ in $\L$, define $f^\dagger$ as the image of $f$ throughout the adjunction: $f^\dagger := \eta_{X}; \U(f) : X \to \U(A)$.
\end{defi}

In Linear Logic, the sequent $\oc A \tensor \oc B \vdash \oc(A \tensor B)$ is provable. Categorically, this corresponds to $\oc$ being a (lax) monoidal functor. The compatibilities between the lax structure and the comonad structure are satisfied, making $\oc$ into a lax monoidal comonad \cite{blute_differential_2020}.

\begin{prop}
Given an LNL, define $m_\tensor$ as:
\begin{center}
\begin{tikzcd}[column sep=huge, row sep=large]
{m_\tensor}_{A,B} := \oc A \tensor \oc B \arrow[r, "m_{\U(A), \U(B)}"] & \F(\U(A) \times \U(B)) \arrow[r, " \F(n_{A,B})"] & \oc (A \tensor B)
\end{tikzcd}
\begin{tikzcd}[column sep=huge, row sep=large]
{m_\tensor}_{1} := 1 \arrow[r, "m_{1}"] & \F(I) \arrow[r, " \F(n_{I})"] & \oc 1
\end{tikzcd}
\end{center}
$(\oc, m_\tensor)$ is a symmetric (lax) monoidal endofuctor.
\end{prop}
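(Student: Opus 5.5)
The plan is to see $(\oc, m_\tensor)$ as a composite of symmetric monoidal functors and use the fact that such composites are again symmetric monoidal. By Definition \ref{lnl}, $(\U, n) : (\L, \tensor, 1) \to (\C, \times, I)$ is symmetric lax monoidal. By Proposition \ref{m_is_colax}, $(\F, m) : (\C, \times, I) \to (\L, \tensor, 1)$ is symmetric strong monoidal, so in particular symmetric lax monoidal with structure maps $m_{X,Y} : \F X \tensor \F Y \to \F(X \times Y)$ and $m_I : 1 \to \F(I)$.

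The general lemma I will use is the following. If $(G, g) : \mathscr{A} \to \mathscr{B}$ and $(H, h) : \mathscr{B} \to \mathscr{D}$ are lax monoidal, then $H \comp G$ is lax monoidal with structure
\[ h_{GA,GB}; H(g_{A,B}) \quad \text{and} \quad h_1; H(g_1). \]
Applying this with $G = \U$ and $H = \F$ gives exactly ${m_\tensor}_{A,B} = m_{\U A, \U B}; \F(n_{A,B})$ and ${m_\tensor}_1 = m_I; \F(n_I)$ as in the statement.

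Proving the lemma is routine diagram pasting:
\begin{itemize}
\item \emph{Naturality.} Naturality of $h$ in $(GA, GB)$ and functoriality of $H$ applied to the naturality square of $g$ give naturality of $m_\tensor$.
\item \emph{Associativity hexagon.} Strictness lets me ignore associators, so it suffices to show that the two composites $\oc A \tensor \oc B \tensor \oc C \to \oc(A \tensor B \tensor C)$ agree. Starting from $m_{\U A,\U B} \tensor \id$, I slide $\F(n_{A,B}) \tensor \id$ past $m$ using naturality of $m$. I then use the associativity axiom for $m$ to reach $m_{\U A, \U B \times \U C}$ composed with $\id \tensor m$. Finally, I apply $\F$ to the associativity axiom for $n$ and slide back by naturality.
\item \emph{Unit triangles.} These follow the same pattern: naturality of $m$, the unit axiom for $m$, then $\F$ applied to the unit axiom for $n$.
\end{itemize}

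For symmetry, I need ${m_\tensor}_{A,B}; \oc(\sigma_{A,B}) = \sigma_{\oc A,\oc B}; {m_\tensor}_{B,A}$. This follows from $\F(n); \F(\sigma) = \F(\sigma^{\times}); \F(n)$, which is $\F$ applied to the symmetry of $n$. Combined with $m; \F(\sigma^\times) = \sigma; m$, the symmetry of $m$, the two sides coincide.

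The only point needing care is bookkeeping, not any real obstacle. The paper assumes strict monoidal categories, but $\C$ is cartesian and is therefore strict only up to the chosen convention. In the hexagon I will keep the associator of $\times$ explicit, so that the square involving $\F(\alpha^\times)$ is discharged using the axioms of $m$ and $n$ together.

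Naturality of $m_\tensor$ in $A,B$ as an endofunctor $\oc = \F\U$ is then immediate from the lemma. This completes the argument; the comonad compatibilities are not part of this statement.
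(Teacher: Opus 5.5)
Your proposal is correct. The paper states this proposition without a separate proof and relies on the same standard fact, citing Blute et al.: ${m_\tensor}$ is by construction the lax structure of the composite $(\F,m)\comp(\U,n)$ of symmetric lax monoidal functors. Your composite lemma (naturality, hexagon, unit triangles, symmetry) is therefore essentially the intended argument; the only slip is notational, since in the symmetry step $\F(\sigma)$ should read $\F\U(\sigma_{A,B}) = \oc\sigma_{A,B}$.
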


Seely categories are an important part of the story of this paper as they provide the basis for the semantic of DiLL with products. A Seely category is a particular instance of an LNL where the monoidal category has products. As a direct consequence, it is involved in an LNL with the Kleisli category of the comonad $\oc$. 

\subsection{Differential Seely Categories}

Differential Seely Categories\footnote{In the literature, they are called Monoidal Storage Differential Categories.} \cite{blute_differential_2006} \cite{blute_differential_2020} (DSC) are the categorical descriptions of the semantic of DiLL with products. They are enriched categories over commutative monoids (hence, the product is a biproduct). This condition is required to interpret the derivative of $\cont$, which corresponds to the product rule. 

\begin{defi}[Additive monoidal category \cite{blute_differential_2020}]
\label{add_mon}
An additive monoidal category is a symmetric monoidal category $(\L, \tensor, 1)$ enriched over $\mathrm{CMon}$ (the category of commutative monoids) such that this enrichment is compatible with the monoidal structure:
\begin{equation*}
k \tensor (f+g) \tensor h = k \tensor f \tensor h + k \tensor g \tensor  h \quad \quad k \tensor 0 \tensor h = 0
\end{equation*}
\end{defi}

\begin{defi}[Additive Linear-non-linear Adjunction]
\label{def:AddLNL}
A LNL is additive if its monoidal category $(\L, \tensor, 1)$ is an additive monoidal category.
\end{defi}

\begin{defi}[Differential Seely Categories \cite{blute_differential_2020}]
\label{def:DSC}
A differential Seely category is an additive Seely category equipped with a natural transformation $\partial_A : \oc A \tensor A \to \oc A$ called a deriving transform satisfying the following conditions: 
\begin{enumerate}[{[d.1]}]
\item Constant rule: $\partial_A; \weak_A = 0$
\item Leibniz rule: $\partial_A; \cont_A = \cont_A \tensor \id_A; (\id_{\oc A} \tensor \partial_A) + (\id_{\oc A} \tensor \sigma_{\oc A, A};\partial_A \tensor \id_{\oc A})$
\item Linear rule: $\partial_A; \dere_A = \weak_A \tensor \id_A$
\item Chain rule: $ \partial_A; \dig_A = \cont_A \tensor \id_A; \dig_A \tensor \partial_A; \partial_{\oc A}$
\item Interchange rule: $ \id_{\oc A} \tensor \sigma_{A,A}; \partial_A \tensor \id_A; \partial_A = \partial_A \tensor \id_A; \partial_A$
\end{enumerate}
\end{defi}

Given a non-linear morphism $f : \oc A \to B$, its differential is obtained by pre-composition with $\partial_A$. The differential of $f$ is $\partial_A; f: \oc A \tensor A \to B$. $\oc A \tensor A$ plays the role of the tangent bundle of $A$. Since $A$ is a linear resource, we think of it as a vector space. In differential geometry, the tangent bundle of a vector space is its product with itself, yet one side is considered as points in a manifold and the other as a (trivial) vector bundle. This is captured in $\L$ by the typing, $\oc A$ (non-linear $A$) represents the points of $A$ seen as a manifold, and $A$ is the vector space $A$. This is made more rigorous later in the article. \\
This list of axioms is the categorical expression of the usual mathematical properties of the differential. The constant rule expresses that the differential of a constant is zero, the Leibniz rule expresses the differential of a product of functions, the linear rule states that the differential of a linear function is the linear function itself, the chain rule expresses the differential of the composition of functions, and finally, the interchange rule expresses symmetries in the second differential. 

\section{Linear-non-linear Adjunctions as Fibrations}
\label{LNL_fib}

The purpose of this section is to show that the simple category of simply typed theories has an equivalent in LL that we call the linear simple category. We start by reminding of the simple category construction, then describe the linear simple category construction. We detail how the structure of the LNL lifts to the fibrations and finally how it relates to the semantics of dependent LL.

\begin{defi}[Simple category \cite{jacobs_categorical_1999}]
Let $(\mathbb{C}, \times, I)$ be a cartesian category, define the simple category $\Simp(\mathbb{C})$ as follows :
\begin{itemize}
\item Objects:  pairs $(X,J)$ of objects of $\mathbb{C}$.
\item Morphisms: $(f,u) : (X,J) \to (Y,K)$ are pairs of morphisms $f : X \to Y$ and $u : X \times J \to K$.
\item Composition is given using the comonoid structure induced by the product: 
\begin{equation*}
(f,u);(g,v) := (f;g \ , \  \Delta_X \times \id_J; f \times u; v)
\end{equation*}
More visually, the second component of $(f,u);(g,v)$ is
\begin{center}
\begin{tikzcd}[column sep=huge, row sep=large]
X \times J \arrow[r, " \Delta_X \times \id_J"] & X \times X \times J \arrow[r, "f \times u"] & Y \times K \arrow[r, "v"] & L
\end{tikzcd}
\end{center}
\item The identity $\id_{(X,J)}$ is $(\id_X, \pi_J)$. 
\end{itemize} 

The projection functor, called the simple fibration of $\mathbb{C}$, $\simp : \Simp(\mathbb{C}) \to \mathbb{C}$ sending $(X, J)$ to $X$ and $(f,u)$ to $f$ is a split fibration. The cartesian morphism above $f : X \to \simp(Y,K)$ is $(f, \pi_K) : (X,K) \to (Y,K)$. 
\end{defi}

Given a simply typed theory with products, one can interpret it in a cartesian category in the standard way: a term $\Gamma \vdash t : U$ is interpreted as a morphisms $\llbracket t \rrbracket : \llbracket \Gamma \rrbracket \to \llbracket U \rrbracket$ and a product of terms $\Gamma \vdash (u,v) : U \times V$ is interpreted as $\langle \llbracket u \rrbracket, \llbracket v \rrbracket \rangle :  \llbracket \Gamma \rrbracket \to \llbracket U \rrbracket \times  \llbracket V \rrbracket$. Keeping this in mind, an object $(X, J)$ in $\simp(\mathbb{C})$ is a context of shape $X, J$ and a morphism $(f,u) : (X,J) \to (Y,K)$ is a pair of terms $X \vdash f : Y$ and $X,J \vdash u : K$. Together, they form a term of type $X,J \vdash (f,u) : Y \times K$. This last step (crucial for composition) is not possible in LL as it relies on the full power of weakening and contraction. Before explaining our construction for LL, we discuss a bit more the structure of $\simp(\mathbb{C})$. 

\begin{rem}
\label{remark_abstract_non_sens}
There are two other ways to define this category. First, one can see it as the full subcategory of $\mathbb{C}^\rightarrow$ (the category of arrows) containing only the projections as objects. Secondly, let $F : \mathbb{C} \to \mathrm{coMonad}(\mathbb{C})$ be the functor sending an object $C$ of $\mathbb{C}$ to the comonad $C \times - : \mathbb{C} \to \mathbb{C}$ and $K : \mathrm{coMonad}(\mathbb{C})^{\op} \to \mathrm{Cat}$ be the contravarient functor sending a comonad $T$ of $\mathbb{C}$ to its Kleisli category $\mathbb{C}_T$. Then, the composite $K F^\op : \mathbb{C}^\op \to \mathrm{Cat}$ is an indexed category with the corresponding fibration being $\simp$. Moreover, since every Kleisli category $\mathbb{C}_{C \times -} $ can be equipped with a canonical cartesian product, $KF^\op$ is in fact a functor into the category of monoidal categories with strong monoidal functors. Thus, since $\mathbb{C}$ is a cartesian category, $\simp$ is a cartesian monoidal fibration \cite{shulman_framed_2008}.
\end{rem}

\begin{prop}
The fibrewise product of the simple fibration is given by:
\begin{itemize}
\item $(X, J) \times_X (X, K) = (X, J \times K)$.
\item $\pi_{(X,J)} := (\id_X, \pi_{J \times K} ;\pi_J) : (X, J \times K) \to (X, J)$.
\item $I^X := (X, I)$ is the unit.
\end{itemize} 
\end{prop}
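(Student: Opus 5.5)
We must show that in each fibre $\Simp(\mathbb{C})_X$ the stated data form a cartesian product, and that the reindexing functors $f^*$ preserve them. The proposition lists only one projection; the other is its symmetric counterpart $\pi'_{(X,K)} := (\id_X, \pi_{J\times K};\pi_K)$, where $\pi_{J\times K} : X \times (J\times K) \to J\times K$. The plan is to reduce everything to the cartesian structure of $\mathbb{C}$ by describing the fibre explicitly. A vertical morphism $(X,J)\to(X,K)$ has the form $(\id_X,u)$ with $u : X\times J \to K$. Composition of vertical maps is $(\id_X,u);(\id_X,v) = (\id_X, \langle \pi_X, u\rangle ; v)$, using the composition formula with $f=\id_X$. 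So $\Simp(\mathbb{C})_X$ is isomorphic to the Kleisli category of the comonad $X\times -$, in line with Remark \ref{remark_abstract_non_sens}.

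\textbf{Step 1 (universal property in the fibre).} Take vertical maps $(\id_X,u) : (X,L)\to(X,J)$ and $(\id_X,v):(X,L)\to(X,K)$. Their pairing is $(\id_X,\langle u,v\rangle) : (X,L)\to(X,J\times K)$. Composing with $\pi_{(X,J)}$ gives the second component $\langle \pi_X, \langle u,v\rangle\rangle ; \pi_{J\times K};\pi_J = u$, and symmetrically for $K$. For uniqueness, any $(\id_X,w)$ with the two composites equal to $u$ and $v$ satisfies $w;\pi_J = u$ and $w;\pi_K = v$, so $w=\langle u,v\rangle$ by the universal property in $\mathbb{C}$. For the terminal object, a vertical map $(X,J)\to(X,I)$ is a morphism $X\times J\to I$, and there is exactly one. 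Strictness assumptions are only used to identify the objects $J\times K$ up to the chosen products.

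\textbf{Step 2 (preservation by reindexing).} For $f : Y\to X$, the split cleavage gives $f^*(X,J) = (Y,J)$ with $\overline{f} = (f,\pi_J)$. By Proposition \ref{prop:ReindexingFunctors}, $f^*(\id_X,u)$ is the unique vertical $(\id_Y,u')$ with $(\id_Y,u');(f,\pi_K) = (f,\pi_J);(\id_X,u)$. Computing both sides gives $u' = (f\times \id_J);u$. Hence $f^*(X,J\times K) = (Y,J\times K) = f^*(X,J)\times_Y f^*(X,K)$ and $f^*(X,I) = (Y,I)$. Moreover $f^*\pi_{(X,J)} = (\id_Y, (f\times\id);\pi_{J\times K};\pi_J) = (\id_Y, \pi_{J\times K};\pi_J)$, which is exactly the projection in the fibre over $Y$. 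So reindexing preserves the product structure strictly.

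The only step needing care is the composition computations in Steps 1 and 2. There one must correctly unfold $\Delta_X\times\id_J; f\times u; v$ when $f=\id_X$, and check that $\Delta_X\times \id_J;\id_X\times \langle\ldots\rangle$ yields $\langle\pi_X,-\rangle$. This is a routine manipulation of cartesian structure, and I expect no genuine obstacle there.
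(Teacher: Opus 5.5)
Your proof is correct and follows essentially the paper's route. The paper gives no separate argument here: it relies on Jacobs and on Remark \ref{remark_abstract_non_sens}, which describes $\Simp(\mathbb{C})_X$ as the Kleisli category of the comonad $X \times -$ with its canonical products, and your Steps 1 and 2 are the explicit unfolding of that description. Your computation $f^*(\id_X,u) = (\id_Y, (f\times\id_J);u)$ also shows that reindexing preserves the products strictly, which is consistent with $\simp$ being a split fibration.
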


\begin{rem}
$\Simp(\C)$ also has products given by $(X, J) \times (Y, K) = (X \times Y, J \times J)$. This makes $\simp$ into a strict cartesian monoidal functor and a cartesian monoidal fibration (definition \ref{def:MonoidalFibration}).
\end{rem}
 
As illustrated by the composition in $\simp(\mathbb{C})$, the left object in the pair $(X, J)$ must be a comonoid (a duplicable resource). The linear simple category construction is based on this observation, allowing only \enquote{non-linear} objects as the left elements of the pairs. 

\begin{defi}[Linear simple category]
\label{def_lnl_simp_cat}
Given an LNL 
\begin{center}
\begin{tikzcd}[column sep=large]
(\C, \times, I)
\arrow[r, "{\F}"{name=F}, bend left=25] &
(\L, \tensor, 1)
\arrow[l, "{\U}"{name=G}, bend left=25]
\arrow[phantom, from = F, to = G, "\dashv" rotate=-90]
\end{tikzcd}
\end{center}
define the linear simple category $\LSimp(\C)$ as follows:
\begin{itemize}
\item Objects: pairs $(X, A)$ with $X \in \C$ and $A \in \L$.
\item Morphisms:  $(f, u) : (X,A) \to (Y, B)$ are pairs of morphisms $f :  X \to Y$ in $\C$ and $u : \F(X) \tensor A \to B$ in $\L$.
\item Composition is given by: 
\begin{equation*}
(f,u); (g,v) := (f;g \ , \ \cont_X \tensor \id_A; \F(f) \tensor u; v)
\end{equation*}
More visually, the second component of $(f,u);(g,v)$ is
\begin{center}
\begin{tikzcd}[column sep=huge, row sep=large]
\F(X) \tensor A  \arrow[r, " \cont_X \times \id_A"] & \F(X) \tensor \F(X) \tensor A \arrow[r, "\F(f) \tensor u"] & \F(Y) \tensor B \arrow[r, "v"] & C
\end{tikzcd}
\end{center}
\item The unit $\id_{(X,A)}$ is $(\id_X, \weak_X \tensor \id_A)$. 
\end{itemize} 

The projection functor, called the linear simple fibration of the LNL,  $\lsimp : \LSimp(\C) \to \C$ sending $(X, A)$ to $X$ and $(f,u)$ to $f$ is a split fibration. The cartesian morphism above $f : X \to \lsimp(Y,B)$, denoted $\overline{f}_B$, is $(f, \weak_X \tensor \id_B) : (X,B) \to (Y,B)$. Denote by $f^*$ the reindexing functor induced by $f$.
\end{defi}

As per the simple category, an object $(X, A)$ in $\lsimp(\C)$ is a linear context $\F(X), A$. In the case where $X = \U(B)$ for some $B$ then $(\U(B), A)$ is simply the context $\oc B, A$. A morphism $(f, u) : (X,A) \to (Y, B)$ is a pair of terms $\F(X) \vdash f : \F(Y)$ and $\F(X), A \vdash u : B$. Since we can apply the contraction rule on $\F(X)$, we can form a term of type $\F(X), A \vdash t : \F(Y) \tensor B$. Another way to think of the category $\LSimp(\C)$ is as the category of partial linear maps in $\L$. A map in $\LSimp(\C)_X$ is a map $u$ of type $\F(X) \tensor A \to B$. Using the analogy that a non-linear map is a map of shape $\F(Y) \to C$, we see that $u$ is non-linear in $X$ and linear in $A$.

\begin{rem}
This category can be constructed in the same way as the simple category (see remark \ref{remark_abstract_non_sens}). One needs to replace the functor $F : \mathbb{C} \to \mathrm{coMonad}(\mathbb{C})$ by the functor $F' : \C \to \mathrm{coMonad}(\L)$ sending an object $X$ of $\C$ to the comonad $\F(X) \tensor - : \L \to \L$. Similarly, since $\L_{\F(X) \tensor -}$ can be endowed with a canonical monoidal product and since $\C$ is cartesian, $\lsimp$ is a symmetric monoidal fibration. 
\end{rem}

\begin{exa}
Let $\mathrm{Vect}$ be the monoidal category of real vector spaces with morphisms linear maps and monoidal structure given by the tensor product. There is an LNL between $\mathrm{Set}$ and $\mathrm{Vect}$
\begin{center}
\begin{tikzcd}[column sep=large]
(\mathrm{Set}, \times, 1)
\arrow[r, "{\F}"{name=F}, bend left=25] &
(\mathrm{Vect}, \tensor, \mathbb{R})
\arrow[l, "{\U}"{name=G}, bend left=25]
\arrow[phantom, from = F, to = G, "\dashv" rotate=-90]
\end{tikzcd}
\end{center}
with $\U$ the forgetfull functor and $\F$ the free vector space functor. In this case, the linear simple category is described as follows:
\begin{itemize}
\item Objects: pairs $(X,V)$ with $X$ a set and $V$ a vector space.
\item Morphisms: a morphism from $(X,V)$ to $(Y, W)$ is a pair $(f,u)$ with $f : X \to Y$ a function and $u : X \times V \to W$ a function such that for each $x \in X$ the function $u(x, -) : V \to W$ is a linear map.
\end{itemize}
\end{exa}

\begin{prop}
\label{tensor_LS}
The fibrewise symmetric monoidal product of the simple fibration is given by : 
\begin{itemize}
\item $(X, A) \tensor_X (X, B) = (X,  A \tensor B)$
\item $1^X := (X, 1)$ is the unit.
\item The monoidal product of two arrows $(\id_X ,u) : (X,A) \to (X, A')$ and $(\id_X ,v) : (X,B) \to (X, B')$ is given by 
\begin{equation*}
(\id_X, u) \tensor_X (\id_X, v) = (\id_X,  \cont_X \tensor \id_{A \tensor B}; \id_{\F(X)} \tensor \sigma_{\F(X), A} \tensor \id_B; u \tensor v)
\end{equation*}
More visually:
\begin{center}
\begin{tikzcd}[column sep=huge, row sep=large]
\F(X) \tensor A \tensor B  \arrow[r, " \cont_X \tensor \id_{A \tensor B}"] & \F(X) \tensor \F(X) \tensor A \tensor B  \arrow[d, "\id_{\F(X)} \tensor \sigma_{\F(X), A} \tensor \id_B"'] & \\
& \F(X) \tensor A \tensor \F(X) \tensor B \arrow[r, "u \tensor v"]  &  A' \tensor B'
\end{tikzcd}
\end{center}
\end{itemize} 
Moreover, each fiber is a strict symmetric monoidal category, and the reindexing functors are strict symmetric monoidal. Additionally, $\LSimp(\C)$ is a monoidal category with the monoidal product given by $(X,A) \tensor (Y,B) = (X \times Y, A \tensor B)$. This makes $\lsimp$ into a monoidal fibration (definition \ref{def:MonoidalFibration}).
\end{prop}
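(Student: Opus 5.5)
The plan is to read each fibre $\LSimp(\C)_X$ as the co-Kleisli category of the comonad $\F(X)\tensor -$ on $\L$, and to get everything from the fact that $(\F(X),\cont_X,\weak_X)$ is a cocommutative comonoid (definition \ref{con_weak_def}). A vertical morphism $(\id_X,u)$ is exactly a co-Kleisli arrow $\F(X)\tensor A\to B$. The comonad structure is $\weak_X\tensor\id$ as counit and $\cont_X\tensor\id$ as comultiplication. Under this reading, the composition of definition \ref{def_lnl_simp_cat} restricted to $f=g=\id_X$ is exactly co-Kleisli composition, since $\F(\id_X)=\id$.

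For the fibrewise monoidal structure I would first check functoriality of $\tensor_X$. Identities go to identities because $\cont_X;(\weak_X\tensor\weak_X)=\weak_X$ by counitality. The interchange law $(u;u')\tensor_X(v;v')=(u\tensor_X v);(u'\tensor_X v')$ holds after expanding both sides. Each side copies $\F(X)$ four times, and the two copy-patterns agree by coassociativity and cocommutativity of $\cont_X$, together with naturality of $\sigma$.

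Next I define the associator, unitors and symmetry as the lifts $(\id_X,\weak_X\tensor\alpha)$ and so on. Since $\L$ is strict, these are identities of the fibre, which gives strictness. Naturality and the coherence axioms are inherited from $\L$: the lift $J:\L\to\LSimp(\C)_X$, $h\mapsto(\id_X,\weak_X\tensor h)$, is a strict monoidal functor, because $\weak_X\tensor\weak_X$ absorbs the copy $\cont_X$ by counitality. The symmetry $(\id_X,\weak_X\tensor\sigma_{A,B})$ is natural with respect to $\tensor_X$ by cocommutativity of $\cont_X$.

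For reindexing along $f:X\to Y$, in this split fibration $f^*(\id_Y,v)=(\id_X,\F(f)\tensor\id;v)$. I compute this from the cartesian lift $\overline{f}_B=(f,\weak_X\tensor\id_B)$: the equation $f^*(v);\overline{f}_{B'}=\overline{f}_B;v$, after unfolding composition and using counitality, forces exactly this formula. Strict preservation of $\tensor_X$ then reduces to the statement that $\F(f)$ is a comonoid morphism, i.e.\ $\F(f);\cont_Y=\cont_X;\F(f)\tensor\F(f)$ and $\F(f);\weak_Y=\weak_X$. Both follow from naturality of $\Delta$, of the terminal maps, and of $m$, since $\F$ is strong monoidal. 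Preservation of the unit $(X,1)$ and of the structure isomorphisms is immediate, because they are lifts of $\L$-maps.

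For the global monoidal structure I set $(f,u)\tensor(g,v)=(f\times g,\ m^{-1}\tensor\id;\id\tensor\sigma\tensor\id;u\tensor v)$, using $\F(X\times Y)\cong\F(X)\tensor\F(Y)$. The unit is $(I,1)$. Functoriality reduces to the fact that $m$ intertwines $\cont_{X\times Y}$ with $\cont_X\tensor\cont_Y$ up to the middle swap. This holds because $\Delta_{X\times Y}$ is $\Delta_X\times\Delta_Y$ followed by a swap, and $m$ is symmetric monoidal. By construction $\lsimp$ is strict monoidal. Cartesian morphisms are preserved because $\overline{f}_A\tensor\overline{g}_B=(f\times g,\weak_{X\times Y}\tensor\id)=\overline{f\times g}_{A\tensor B}$, using $m^{-1}$ compatibility with $\weak$.

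The main obstacle is the interchange-law computation for $\tensor_X$ and for the global product, where the copies of $\F(X)$ must be rearranged. I would handle it by proving once a lemma that $\F(X)$, with its comonoid structure, makes $\cont^{(n)}$ invariant under permutations. After that, both interchange laws become equalities of string diagrams modulo comonoid axioms.
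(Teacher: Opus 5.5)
Your proposal is correct and takes the route the paper indicates: the paper's only justification, given in the remark just before this proposition, is that each fibre is the co-Kleisli category of the comonad $\F(X)\tensor -$, monoidal because $(\F(X),\cont_X,\weak_X)$ is a cocommutative comonoid, with reindexing induced by $X\mapsto \F(X)\tensor -$; you carry out the verifications the paper leaves implicit. The one difference is that you write the global product on $\LSimp(\C)$ by hand using $m^{-1}$, whereas the paper appeals to Shulman's construction of a monoidal fibration from a monoidal indexed category over a cartesian base; the two agree because $\cont_{X\times Y};\F(\pi_1)\tensor\F(\pi_2)=m^{-1}_{X,Y}$.
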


\begin{rem}
This construction is a strict generalisation of the simple category, as every cartesian category is in a trivial linear-non-linear adjunction with itself. In this case, the linear simple category is exactly the simple category. It can also be seen as a generalisation from Seely categories to LNL of the construction $\mathcal{L}_{\oc}[X]$ found in the literature of Differential Categories \cite{cruttwell_monoidal_2022}.
\end{rem}

\begin{prop}
\label{prop:ProductLS}
The simple linear fibration of an LNL with products $(\L, \with, \top)$ has fiberwise cartesian products:
\begin{itemize}
\item $(X, A) \with_X (X, B) = (X,  A \with B)$
\item $\top^X := (X, \top)$ is the unit.
\item The monoidal product of two arrows $(\id_X ,u) : (X,A) \to (X, A')$ and $(\id_X ,v) : (X,B) \to (X, B')$ is given by 
\begin{equation*}
(\id_X, u) \tensor (\id_X, v) = (\id_X, \mathbf{dist}_{\F(X),A,A'}; u \with v)
\end{equation*}
with $\mathbf{dist}_{A,B,C}$ the canonical morphism from $A \tensor (B \with C)$ to $(A \tensor B) \with (A \tensor C)$. The reindexing functors are strict cartesian functors.
\end{itemize} 
Additionally, $\LSimp(\C)$ is a cartesian category with cartesian structure given by $(X, A) \with (Y, B) = (X \times Y, A \with B)$. Denote by $\pi^X_i$, $i = 1,2$ the $i$-th projection in the fiber above $X$. 
\end{prop}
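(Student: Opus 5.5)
To prove Proposition \ref{prop:ProductLS}, I would verify the universal property fibrewise, then check stability under reindexing, and finally assemble the global cartesian structure on $\LSimp(\C)$. (I read the displayed formula as the pairing $\langle u,v\rangle$ of $(\id_X,u):(X,C)\to(X,A)$ and $(\id_X,v):(X,C)\to(X,B)$, namely $(\id_X,\langle u,v\rangle)$ with $\langle u,v\rangle:\F(X)\tensor C\to A\with B$ the pairing in $\L$. Equivalently, $u\with v$ precomposed with the canonical map $\mathbf{dist}:\F(X)\tensor C\to(\F(X)\tensor C)\with(\F(X)\tensor C)$, which is the diagonal.)

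\textbf{Fibrewise products.} Take projections $\pi^X_1=(\id_X,\weak_X\tensor\id_{A\with B};\pi_A)$ and similarly $\pi^X_2$. Given vertical maps $(\id_X,u)$ and $(\id_X,v)$ out of $(X,C)$, define the pairing as $(\id_X,\langle u,v\rangle)$. Composing with $\pi^X_1$, the composition formula of Definition \ref{def_lnl_simp_cat} gives
\[
\cont_X\tensor\id_C;\F(\id_X)\tensor\langle u,v\rangle;\weak_X\tensor\pi_A .
\]
By counitality of $(\F(X),\cont_X,\weak_X)$ (Definition \ref{con_weak_def}) this reduces to $\langle u,v\rangle;\pi_A=u$.

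For uniqueness, any vertical $(\id_X,w)$ with $(\id_X,w);\pi^X_i$ equal to $u$ and $v$ has, by the same computation, $w;\pi_A=u$ and $w;\pi_B=v$, so $w=\langle u,v\rangle$ by the product property in $\L$. The terminal object $(X,\top)$ is handled the same way: the unique vertical map into it is $(\id_X,!)$.

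\textbf{Stability under reindexing.} For $f:X\to Y$, unfold $f^*(\id_Y,u)$ using the cartesian lifts $\overline f=(f,\weak_X\tensor\id)$. It is $(\id_X,\F(f)\tensor\id;u)$, using that $\cont$ is natural and $\weak$ is counital for morphisms of the form $\F(f)$. Then $f^*(Y,A\with B)=(X,A\with B)$, and $f^*$ sends projections to projections, since $\F(f);\weak_Y=\weak_X$. So the reindexing functors preserve the structure strictly.

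\textbf{Global products.} On $\LSimp(\C)$ take $(X,A)\with(Y,B)=(X\times Y,A\with B)$. The first projection is $(\pi_X,\weak_{X\times Y}\tensor\id;\pi_A)$, and the second is analogous. Given $(f,u):(Z,C)\to(X,A)$ and $(g,v):(Z,C)\to(Y,B)$, pair them as $(\langle f,g\rangle,\langle u,v\rangle)$. The check is the same counit computation, now also using $\F(\langle f,g\rangle);\F(\pi_X)=\F(f)$ and naturality of $\cont$ and $\weak$ with respect to maps of the form $\F(h)$. These hold because $\cont$ and $\weak$ are defined by transporting the diagonal and terminal maps through the strong monoidal $\F$, and $\Delta$ and $!$ are natural in $\C$. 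Uniqueness follows componentwise from uniqueness in $\C$ and in $\L$. The terminal object is $(I,\top)$.

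\textbf{Main obstacle.} The only nonroutine point is the naturality and counit bookkeeping for $\cont_X$ and $\weak_X$ along maps $\F(f)$. I would isolate it once as a lemma, stating that each $\F(f)$ is a comonoid morphism, and derive it from Definition \ref{con_weak_def} together with naturality of $m$.
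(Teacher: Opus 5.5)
The paper gives no proof of this proposition; it is left as a routine check. The only hint is the remark that $\LSimp(\C)$ arises, as in Remark~\ref{remark_abstract_non_sens}, from the co-Kleisli categories of the comonads $\F(X)\tensor-$.

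Your direct verification is correct:
\begin{itemize}
\item The counit law for $(\F(X),\cont_X,\weak_X)$ reduces composites with $\pi^X_i$ to composites in $\L$.
\item The identity $\F(f);\weak_Y=\weak_X$ gives $f^*(\id_Y,u)=(\id_X,\F(f)\tensor\id;u)$, and hence strict preservation of projections.
\item The global pairing $(\langle f,g\rangle,\langle u,v\rangle)$ works for the same reason. The identity actually used there is $\F(\langle f,g\rangle);\weak_{X\times Y}=\weak_Z$; no $\F(\pi_X)$ appears.
\end{itemize}

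The abstract route the paper gestures at would proceed differently. The identity-on-objects functor $\L\to\LSimp(\C)_X$, $u\mapsto\weak_X\tensor u$, is right adjoint to $(X,A)\mapsto\F(X)\tensor A$. It therefore preserves products, which yields your projections with no computation. Reindexing commutes strictly with these functors, again because $\F(f);\weak_Y=\weak_X$. Products in the total category then follow from the standard fact that a fibration with fibred products over a cartesian base has products, namely $(X,A)\with(Y,B)=\pi_X^*(X,A)\with_{X\times Y}\pi_Y^*(Y,B)=(X\times Y,A\with B)$. That route is shorter and shows that any limit of $\L$ lifts in the same way. Yours is self-contained and gives the explicit projection and pairing formulas the paper relies on later, e.g.\ in Proposition~\ref{prop:PhiVertical}.

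One correction: you have misread the third bullet, which is not the pairing. It describes $\with_X$ on arrows $(\id_X,u):(X,A)\to(X,A')$ and $(\id_X,v):(X,B)\to(X,B')$. The map meant is $\mathbf{dist}_{\F(X),A,B}:\F(X)\tensor(A\with B)\to(\F(X)\tensor A)\with(\F(X)\tensor B)$; the subscript $A'$ in the statement is a typo, and $\mathbf{dist}$ is not a diagonal. The bullet follows from your pairing in one line, which you should add instead of reinterpreting the formula. By the counit law, $\pi^X_1;(\id_X,u)=(\id_X,\id_{\F(X)}\tensor\pi_A;u)$, and similarly for $v$, so
\[
\langle\pi^X_1;(\id_X,u),\,\pi^X_2;(\id_X,v)\rangle_X=(\id_X,\langle\id_{\F(X)}\tensor\pi_A;u,\ \id_{\F(X)}\tensor\pi_B;v\rangle)=(\id_X,\mathbf{dist}_{\F(X),A,B};u\with v),
\]
since $\mathbf{dist}_{\F(X),A,B}=\langle\id_{\F(X)}\tensor\pi_A,\id_{\F(X)}\tensor\pi_B\rangle$.
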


We have shown that from any LNL we can construct two fibrations on the cartesian category $\C$. We now have to describe how they interact.

\begin{center}
\begin{tikzcd}[column sep=large, row sep=large]
\Simp(\C) \arrow[dr, "\simp"']  &  & \LSimp(\C) \arrow[dl, "\lsimp"] \\
& \C  & 
\end{tikzcd}
\end{center} 

Linear-non-linear adjunctions describe a dialogue between a Cartesian category and a monoidal one. This lifts to a fibred adjunction between the simple category of $\C$ and the linear simple fibration. To construct this adjunction, we must first show that the functors $\U$ and $\F$ induce fibred functors between $\simp$ and $\lsimp$. The idea is to apply $\U$ and $\F$ \enquote{pointwise}, this also ensures that the lax structures on $\U$ and $\F$ are preserved fibrewise. 

\begin{prop}
\label{U_simp_func}
Define the functor $\U^\Simp : \LSimp(\C) \to \Simp(\C)$ in the following way:
\begin{itemize}
\item On objects: $\U^\Simp(X,A) := (X, \U(A))$.
\item On morphisms: let $(f,u) : (X, A) \to (Y, B)$, $\U^\Simp(f,u) := (f, \eta_X \times \id_{\U(A)}; n_{\F(X),A}; \U(u))$. More visualy, the second component of $\U^\Simp(f,u)$ is
\begin{center}
\begin{tikzcd}[column sep=huge, row sep=large]
X \times \U(A)  \arrow[r, " \eta_X \times \id_{\U(A)}"] & \U(\F(X)) \times \U(A) \arrow[r, "n_{\F(X),A}"] & \U(\F(X) \tensor A) \arrow[r, "\U(u)"] & \U(B)
\end{tikzcd}
\end{center}
\end{itemize} 
$\U^\Simp$ is a split fibered functor from $\lsimp$ to $\simp$. Additionally, it's fibrewise a symmetric lax-monoidal functor with the monoidal structure above $X$ given by:
\begin{itemize}
\item $n_{I^X} := \pi_I; n_I : I^X \to \U^\Simp(1^X)$.
\item  $n_{(X,A),(X,B)} := \pi_{\U(A) \times \U(B)}; n_{A,B} : \U^\Simp((X,A)) \times_X \U^\Simp((X,B)) \to \U^\Simp((X, A) \tensor_X (X,B))$. 
\end{itemize}
\end{prop}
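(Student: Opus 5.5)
We prove the claims in order: $\U^\Simp$ is a functor, it is split fibred, and it is fibrewise symmetric lax monoidal.

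Write $\widehat{u} := \eta_X \times \id_{\U(A)}; n_{\F(X),A}; \U(u)$ for the second component of $\U^\Simp(f,u)$. Two facts about the LNL drive every computation.
\begin{enumerate}
\item $\eta$ is a monoidal natural transformation $\id_\C \to \U\F$. Since $\C$ is cartesian, this gives
\[
\Delta_X; \eta_X \times \eta_X; n_{\F X,\F X} = \eta_X; \U(\cont_X)
\qquad\text{and}\qquad
\eta_X; \U(\weak_X) = I_X; n_I .
\]
Both follow from the definition of $\cont_X, \weak_X$ via $m^{-1}$ (Definition \ref{con_weak_def}) together with Proposition \ref{lax-colax_duality}, which says $m^{-1}$ is the colax structure mate to $n$.
\item $n$ is natural and associative, so $\U$ maps tensors to products coherently.
\end{enumerate}

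\textbf{Functoriality.} For identities, we compute $\widehat{\weak_X \tensor \id_A}$. Naturality of $n$ rewrites it as $\eta_X \times \id; \U(\weak_X) \times \id; n_{1,A}$. The second fact in item 1 turns this into $I_X \times \id; n_I \times \id; n_{1,A}$, and the unit coherence of $n$ reduces it to $\pi_{\U(A)}$, which is the identity of $\Simp(\C)$.

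For composition, we must show
\[
\widehat{\cont_X \tensor \id_A; \F(f) \tensor u; v} \;=\; \Delta_X \times \id; f \times \widehat{u}; \widehat{v}.
\]
Start from the left side. Expand $\U(\cont_X \tensor \id)$ and $\U(\F f \tensor u)$ using naturality and associativity of $n$. This gives $\eta_X \times \id; \U(\cont_X) \times \id; n_{\F X, \F X, A}; \U(\F f) \times \U(u)$, followed by $n_{\F Y,B}; \U(v)$, with $n$ split into two steps. Now apply the first equation in item 1 to replace $\eta_X;\U(\cont_X)$ by $\Delta_X;\eta_X\times\eta_X;n$. Naturality of $\eta$ then replaces $\eta_X;\U\F(f)$ by $f;\eta_Y$. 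Regrouping yields the right side. This bookkeeping is the main obstacle: the step is the LNL analogue of the fact that $\U$ is a map of comonoids, and care is needed with the associativity of $n$. Strictness of the monoidal structures helps here.

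\textbf{Fibredness and splitting.} By definition, $\U^\Simp$ commutes with the projections to $\C$. It sends the chosen cartesian lift $(f, \weak_X \tensor \id_B)$ to $(f, \widehat{\weak_X \tensor \id_B})$. The identity computation above shows $\widehat{\weak_X \tensor \id_B} = \pi_{\U B}$, so this is exactly the chosen lift $(f,\pi_{\U B})$ in $\Simp(\C)$. Hence $\U^\Simp$ preserves the cleavage, and therefore preserves reindexing on the nose.

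\textbf{Fibrewise lax monoidal structure.} Working in the fibre over $X$, the given maps are $(\id_X, \pi; n_{A,B})$ and $(\id_X, \pi; n_I)$. They are constant in $X$, so naturality, associativity, unitality and symmetry reduce to the corresponding axioms of $(\U,n)$. The one extra ingredient is a computation of $\U^\Simp(u \tensor_X v)$: the $\cont_X$ appearing in $u \tensor_X v$ (Proposition \ref{tensor_LS}) turns into $\Delta_X$ by the first equation in item 1. This shows that
\[
\U^\Simp(\id,u) \times_X \U^\Simp(\id,v) ; n
\quad\text{and}\quad
n ; \U^\Simp\bigl((\id,u) \tensor_X (\id,v)\bigr)
\]
agree, using naturality of $n$ and the symmetry coherence to handle $\sigma_{\F X, A}$.

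Finally, reindexing functors are strict in both fibrations, and the structure maps are constant in $X$. So the lax structure is preserved by reindexing, and $\U^\Simp$ is fibred lax monoidal.
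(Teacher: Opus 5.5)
Your proposal is correct and follows the paper's proof essentially step for step. Your item 1 is Lemma~\ref{cont_dagger} (rewritten via naturality of the diagonal) plus the unit computation inside Lemma~\ref{U_weak}, and you handle identities, cartesian lifts and composition with the same naturality and associativity moves; the paper only runs the composition chain in the opposite direction. One slip: your intermediate expression $\U(\cont_X)\times\id; n_{\F X,\F X,A}; \U(\F f)\times\U(u)$ does not type-check, because $\U(\F X\tensor\F X)$ cannot be split. You must apply item 1 \emph{before} using associativity of $n$, which is what the paper's chain does when read backwards. Your sketch of the fibrewise lax monoidal structure, which the paper dismisses as a direct check, is also right: once item 1 turns $\Delta_X;\eta_X\times\eta_X;n_{\F X,\F X}$ into $\eta_X;\U(\cont_X)$, both sides of the naturality square reduce to $(\eta_X\times n_{A,B}); n_{\F X, A\tensor B}; \U\bigl(\cont_X\tensor\id; \id\tensor\sigma_{\F X,A}\tensor\id; u\tensor v\bigr)$.
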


The tedious part is proving that $\U^\Simp$ is indeed a functor. Once this is done, a direct check allows one to see that it is a lax-monoidal functor. The proof is split into a few lemmas to make it more readable.
 
\begin{lem}
\label{cont_dagger}
For all $X \in \C$, $\eta_X;\U(\cont_X) = \eta_X; \Delta_{\U(\F(X))}; n_{\F(X), \F(X)}$. 
\end{lem}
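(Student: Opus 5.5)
The plan is to reduce both sides of Lemma \ref{cont_dagger} to a common form by pushing the diagonal to the front, so that the statement collapses to a compatibility between $n$ and $m^{-1}$ at the components $\F(X),\F(X)$.

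\emph{Left-hand side.} By definition \ref{con_weak_def}, $\cont_X = \F(\Delta_X); m^{-1}_{X,X}$. Naturality of the unit $\eta$ gives $\eta_X;\U\F(\Delta_X) = \Delta_X;\eta_{X\times X}$. Hence
\[
\eta_X;\U(\cont_X) = \Delta_X;\eta_{X\times X};\U(m^{-1}_{X,X}).
\]

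\emph{Right-hand side.} Naturality of the diagonal in the cartesian category $\C$ gives $\eta_X;\Delta_{\U\F(X)} = \Delta_X;(\eta_X\times\eta_X)$. Hence the right-hand side equals $\Delta_X;(\eta_X\times\eta_X);n_{\F(X),\F(X)}$.

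It therefore suffices to prove the more general identity, for all $X,Y\in\C$,
\[
(\eta_X\times\eta_Y);n_{\F(X),\F(Y)} = \eta_{X\times Y};\U(m^{-1}_{X,Y}).
\]
This is the main step. I would take it from Proposition \ref{m_is_colax} and the lax-colax duality of Proposition \ref{lax-colax_duality}. By Proposition \ref{m_is_colax}, the colax structure $p$ on $\F$ obtained from $n$ is exactly $m^{-1}$. Since the two constructions in Proposition \ref{lax-colax_duality} are mutually inverse, $n$ is recovered from $p=m^{-1}$ by
\[
n_{A,B} = \eta_{\U(A)\times\U(B)};\U(p_{\U(A),\U(B)});\U(\dere_A\tensor\dere_B).
\]

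Now instantiate this formula at $A=\F(X)$, $B=\F(Y)$ and precompose with $\eta_X\times\eta_Y$. The manipulation proceeds in three moves:
\begin{enumerate}
\item Naturality of $\eta$ moves $\eta_X\times\eta_Y$ past $\eta$, producing $\eta_{X\times Y};\U\F(\eta_X\times\eta_Y)$.
\item Naturality of $p$ rewrites $\F(\eta_X\times\eta_Y);p_{\U\F X,\U\F Y}$ as $p_{X,Y};(\F(\eta_X)\tensor\F(\eta_Y))$.
\item The triangle identity $\F(\eta_X);\dere_{\F(X)}=\id_{\F(X)}$, applied in each tensor factor, removes the trailing $\U(\dere\tensor\dere)$.
\end{enumerate}
What remains is $\eta_{X\times Y};\U(p_{X,Y}) = \eta_{X\times Y};\U(m^{-1}_{X,Y})$, which is the desired identity.

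The only delicate point I expect is this identification $p=m^{-1}$. It requires $m$ to be exactly the strong structure corresponding to $n$ under the duality, not merely some isomorphism of the same type; this is precisely what Proposition \ref{m_is_colax} asserts. Granting it, everything else is naturality and a triangle identity. Specializing the identity to $Y=X$ and precomposing with $\Delta_X$ proves the lemma.
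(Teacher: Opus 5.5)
Your proof is correct, but it uses the lax--colax correspondence of Proposition~\ref{lax-colax_duality} in the opposite direction from the paper.

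\emph{The paper's route.} It substitutes the defining formula of the colax structure, $m^{-1}_{X,X} = \F(\eta_X \times \eta_X;n_{\F(X),\F(X)});\dere_{\F(X)\tensor\F(X)}$, and merges it with $\F(\Delta_X)$. It then rewrites $\Delta_X;(\eta_X\times\eta_X)$ as $\eta_X;\Delta_{\U(\F(X))}$. Setting $h = \eta_X;\Delta_{\U(\F(X))};n_{\F(X),\F(X)}$, it moves $h$ out through naturality of $\eta$, namely $\eta_X;\U(\F(h)) = h;\eta$. Finally it cancels with the triangle identity $\eta_{\U(B)};\U(\dere_B) = \id$.

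\emph{Your route.} You express $n$ through $p=m^{-1}$ via the inverse formula. You then cancel using naturality of $p$ and the other triangle identity $\F(\eta_X);\dere_{\F(X)} = \id_{\F(X)}$.

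\emph{Comparison.} The paper's argument is shorter and needs only the forward construction $n\mapsto p$. Yours additionally needs naturality of $m^{-1}$ and the fact that the two constructions in Proposition~\ref{lax-colax_duality} are mutually inverse.

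Your intermediate identity $(\eta_X\times\eta_Y);n_{\F(X),\F(Y)} = \eta_{X\times Y};\U(m^{-1}_{X,Y})$ is a clean, reusable form. However, it only says that $m^{-1}_{X,Y}$ is the adjoint transpose of $(\eta_X\times\eta_Y);n_{\F(X),\F(Y)}$, which is exactly the defining formula for $p$. Transposing that formula (naturality of $\eta$ plus $\eta_{\U(B)};\U(\dere_B)=\id$) gives the identity in one line, so the detour through the inverse formula is unnecessary.
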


\begin{proof}
\begin{align*}
\eta_X; \U(\cont_X) &\underset{\ref{con_weak_def}}{=} \eta_X; \U(\F(\Delta_X); m^{-1}_{X,X}) \\
&\underset{\ref{m_is_colax}}{=} \eta_X; \U(\F(\Delta_X; \eta_X \times \eta_X; n_{\F(X),\F(X)}); \dere_{\F(X) \tensor \F(X)}) \\
&\underset{\hspace{1.4em}}{=} \eta_X; \U(\F(\eta_X; \Delta_{\U(\F(X))}; n_{\F(X),\F(X)}); \dere_{\F(X) \tensor \F(X)})  \\
&\underset{\hspace{1.4em}}{=} \eta_X; \Delta_{\U(\F(X))}; n_{\F(X),\F(X)}; \eta_{\U(\F(X) \tensor \F(X))}; \U( \dere_{\F(X) \tensor \F(X)}) \\
&\underset{\hspace{1.4em}}{=} \eta_X; \Delta_{\U(\F(X))}; n_{\F(X),\F(X)} \qedhere
\end{align*}
\end{proof}

\begin{lem}
\label{U_weak}
Let $f : X \to Y$ in $\C$ and $u : A \to B$ in $\L$ then, $\U^\Simp(f, \weak_X \tensor u) = (f, \pi_2; \U(u))$. 
\end{lem}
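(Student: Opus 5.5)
Unfolding the definition of $\U^\Simp$ from Proposition \ref{U_simp_func}, the first component is $f$. The second component is
\begin{equation*}
\eta_X \times \id_{\U(A)}; n_{\F(X),A}; \U(\weak_X \tensor u).
\end{equation*}
The plan is to show that this composite equals $\pi_2; \U(u)$. I will split $\weak_X \tensor u$ as $(\weak_X \tensor \id_A); (\id_1 \tensor u)$ and push the second factor past $\U$ using functoriality. Since the monoidal structure is strict, $\id_1 \tensor u = u$.

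The first step is to apply naturality of the lax structure $n$. Naturality gives $n_{\F(X),A}; \U(\weak_X \tensor \id_A) = \U(\weak_X) \times \id_{\U(A)}; n_{1,A}$. The second component then becomes
\begin{equation*}
(\eta_X; \U(\weak_X)) \times \id_{\U(A)}; n_{1,A}; \U(u).
\end{equation*}

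The key step is to compute $\eta_X; \U(\weak_X)$, by analogy with Lemma \ref{cont_dagger}. Unfolding $\weak_X = \F(I_X); m^{-1}$ and using Proposition \ref{m_is_colax}, the nullary colax component is $m^{-1} = \F(n_I); \dere_1$. Hence
\begin{equation*}
\eta_X;\U(\weak_X) = \eta_X; \U(\F(I_X; n_I); \dere_1).
\end{equation*}
By naturality of $\eta$ this equals $I_X; n_I; \eta_{\U(1)}; \U(\dere_1)$, and the triangle identity reduces it to $I_X; n_I$. This is the same manipulation as in the proof of Lemma \ref{cont_dagger}, with the diagonal replaced by the terminal map.

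Substituting back, the second component becomes
\begin{equation*}
(I_X; n_I) \times \id_{\U(A)}; n_{1,A}; \U(u).
\end{equation*}
Now use the left unit coherence of the lax monoidal functor $\U$. With strict monoidal structures, it says $n_I \times \id_{\U(A)}; n_{1,A} = \id_{\U(A)}$ as maps $I \times \U(A) \to \U(A)$. Here $I \times \U(A)$ is identified with $\U(A)$ via the projection, so in the cartesian setting this is $\pi_2$. The composite is therefore $I_X \times \id_{\U(A)}; \pi_2; \U(u) = \pi_2; \U(u)$, which is the claim.

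I expect the main obstacle to be bookkeeping rather than mathematics. One issue is the direction of $m$ versus $m^{-1}$ in the nullary case. The other is stating the unit coherence correctly under the strictness convention, in particular that $\C$'s unitor is the projection $\pi_2$. If strictness of $\C$ is awkward, I will state the coherence with the explicit unitor $\lambda_{\U(A)} = \pi_2$ and carry it through.
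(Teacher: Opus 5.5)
Your proposal is correct and follows essentially the same route as the paper: naturality of $n$, then reducing $\eta_X;\U(\weak_X)$ to $I_X;n_I$, then the left unit coherence of $(\U,n)$ to obtain $\pi_2$. The only difference is how you justify $\eta_X;\U(\weak_X) = I_X;n_I$: you use the explicit formula $m_I^{-1} = \F(n_I);\dere_1$ together with naturality of $\eta$ and the triangle identity, whereas the paper first moves $\eta$ past $\F(I_X)$ and then invokes the other direction of Proposition~\ref{lax-colax_duality}, namely $\eta_I;\U(m_I^{-1}) = n_I$.
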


\begin{proof}
\begin{align*}
\U^\Simp((f, \weak_X \tensor u)) &\underset{\hspace{1.4em}}{=} (f, \eta_X \times \id_{\U(A)}; n_{\F(X),A}; \U(\weak_X \tensor u ) ) \\
&\underset{\hspace{1.4em}}{=} (f, \eta_X \times \id_{\U(A)}; \U(\weak_X) \times \U(u); n_{1, B}) \\
&\underset{\ref{con_weak_def}}{=} (f, \eta_X \times \id_{\U(A)}; \U(\F(I_X); m^{-1}_I) \times  \U(fu; n_{1, B}) \\
&\underset{\hspace{1.4em}}{=} (f, [\eta_X; \U(\F(I_X))] \times  \id_{\U(A)};  \U(m^{-1}_{I}) \times \U(u); n_{1, B} ) \\
&\underset{\hspace{1.4em}}{=} (f, (I_X; \eta_I) \times  \id_{\U(A)};  \U(m^{-1}_I) \times  \U(u); n_{1, B}) \\
&\underset{\hspace{1.4em}}{=} (f, I_X \times \id_{\U(A)} ; (\eta_I;\U(m^{-1}_I)) \times \U(u); n_{1, B}) \\
&\underset{\ref{lax-colax_duality}}{=} (f, I_X \times \id_{\U(A)} ; n_I \times \U(u); n_{1, B})\\
&\underset{\hspace{1.4em}}{=}(f, I_X \times  \U(u) ; \pi_2)\\
&\underset{\hspace{1.4em}}{=} (f,  \pi_2; \U(u)) \qedhere
\end{align*}
\end{proof}

\begin{proof}[Proof of Prop. \ref{U_simp_func}]
Applying Lemma \ref{U_weak} immediately yields that the identity is sent to the identity. We check that $\U^\Simp$ sends the composition to the composition:  let $(f,u) : (X,A) \to (Y,B)$ and $(g,v) : (Y,B) \to (Z,C)$,
\begin{align*}
&\U^\Simp(f,u);\U^\Simp(g,v) \\
&= (f  ,  \eta_X \times \id_{\U(A)}; n_{\F(X),A}; \U(u)); (g  , \eta_Y \times \id_{\U(B)}; n_{\F(Y),B}; \U(v))\\
&= (f;g , \Delta_X \times \id_A;  f \times  [\eta_X \times \id_{\U(A)}; n_{\F(X),A}; \U(u)]; \eta_Y \times \id_{\U(B)}; n_{\F(Y),B}; \U(v) )
\end{align*}
We continue the proof omitting the left component of the pair for more readability:
\begin{align*}
\Delta_X \times \id_A;  &f \times  [\eta_X \times \id_{\U(A)}; n_{\F(X),A}; \U(u)]; \eta_Y \times \id_{\U(B)}; n_{\F(Y),B}; \U(v) \\
&\underset{\hspace{1.4em}}{=} \Delta_X \times \id_{\U(A)}; (f; \eta_Y) \times [\eta_X \times \id_{\U(A)}; n_{\F(X),A}; \U(u)]; n_{\F(Y),B}; \U(v) \\
&\underset{\hspace{1.4em}}{=} \Delta_X \times \id_{\U(A)}; (\eta_X; \U(\F(f))) \times [\eta_X \times \id_{\U(A)}; n_{\F(X),A}; \U(u)]; n_{\F(Y),B}; \U(v) \\
&\underset{\hspace{1.4em}}{=} \Delta_X \times \id_{\U(A)}; \eta_X \times \eta_X \times \id_{\U(A)}; \U(\F(f)) \times (n_{\F(X),A}; \U(u)); n_{\F(Y),B}; \U(v) \\
&\underset{\hspace{1.4em}}{=} (\eta_X; \Delta_{\U(\F(X))}) \times \id_{\U(A)}; \U(\F(f)) \times (n_{\F(X),A}; \U(u)); n_{\F(Y),B}; \U(v) \\
&\underset{\hspace{1.4em}}{=} (\eta_X; \Delta_{\U(\F(X))}) \times \id_{\U(A)}; \id_{\U(\F(X))} \times n_{\F(X),A};  \U(\F(f)) \times \U(u); n_{\F(Y),B}; \U(v) \\
&\underset{\hspace{1.4em}}{=} (\eta_X; \Delta_{\U(\F(X))}) \times \id_{\U(A)}; \id_{\U(\F(X))} \times n_{\F(X),A}; n_{\F(X), \F(X) \tensor A}; \U(\F(f) \tensor u); \U(v) \\
&\underset{\hspace{1.4em}}{=}  (\eta_X; \Delta_{\U(\F(X))}) \times \id_{\U(A)}; n_{\F(X), \F(X)} \times \id_{\U(A)}; n_{\F(X) \tensor \F(X), A}; \U(\F(f) \tensor u; v) \\
&\underset{\hspace{1.4em}}{=} (\eta_X; \Delta_{\U(\F(X))}; n_{\F(X), \F(X)}) \times \id_{\U(A)}; n_{\F(X) \tensor \F(X), A}; \U(\F(f) \tensor u; v) \\
&\underset{\ref{cont_dagger}}{=} (\eta_X; \U(\cont_X)) \times \id_{\U(A)}; n_{\F(X) \tensor \F(X), A}; \U(\F(f) \tensor u; v)  \\
&\underset{\hspace{1.4em}}{=} \eta_X \times \id_{\U(A)};  \U(\cont_X) \times \id_{\U(A)} ; n_{\F(X) \tensor \F(X), A}; \U(\F(f) \tensor u; v) \\
&\underset{\hspace{1.4em}}{=} \eta_X \times \id_{\U(A)};  n_{\F(X), A}; \U(\cont_X \tensor \id_A);  \U(\F(f) \tensor u; v) \\
&\underset{\hspace{1.4em}}{=} \eta_X \times \id_{\U(A)};  n_{\F(X), A}; \U(\cont_X \tensor \id_A; \F(f) \tensor u; v) \\
&\underset{\hspace{1.4em}}{=} \U^\Simp((f,u);(g,v))
\end{align*}
This shows that $\U^\Simp$ is a functor. As a direct consequence of Lemma \ref{U_weak}, it is a split fibered functor.
\end{proof}

\begin{prop}
\label{F_simp_func}
Define the functor $\F^\Simp : \Simp(\C) \to \LSimp(\C)$ in the following way:
\begin{itemize}
\item On objects: $\F^\Simp(X,J) := (X, \F(J))$.
\item On morphisms: let $(f,u) : (X, J) \to (Y, K)$, $\F^\Simp(f,u) := (f, m_{X,J}; \F(u))$. More visualy, the second component of $\U^\Simp(f,u)$ is
\begin{center}
\begin{tikzcd}[column sep=huge, row sep=large]
\F(X) \tensor \F(J)  \arrow[r, "{m_{X,J}}"] & \F(X \times J) \arrow[r, "\F(u)"] & \F(K)
\end{tikzcd}
\end{center}
\end{itemize} 
$\F^\Simp$ is a split fibered functor from $\simp$ to $\lsimp$. Additionally, it is fibrewise a symmetric lax-monoidal functor with the monoidal structure above $X$ given by : 
\begin{itemize}
\item $m_{1^X} := ; \weak_X \tensor \id_1; m_1: 1^X \to \F^\Simp(I^X)$.
\item  $m_{(X,J),(X,K)} := \weak_X \tensor \id_{\F(J) \tensor \F(K)}; m_{J,K} : \F^\Simp((X,J)) \tensor_X \F^\Simp((X,K)) \to \F^\Simp((X, J) \times_X (X,K))$. 
\end{itemize}
\end{prop}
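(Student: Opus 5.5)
We follow the same pattern as for Proposition \ref{U_simp_func}: show that $\F^\Simp$ is a functor, deduce that it is split fibred from its action on the chosen cartesian morphisms, and then check the fibrewise lax monoidal structure. The tools are that $(\F,m)$ is strong monoidal (Proposition \ref{m_is_colax}), the definition of $\cont_X$ and $\weak_X$ as transports of $\Delta_X$ and $I_X$ along $m^{-1}$ (Definition \ref{con_weak_def}), and naturality of $m$.

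\textbf{Identities and cartesian morphisms.} We first prove the analogue of Lemma \ref{U_weak}: for $f : X \to Y$ and $u : J \to K$,
\[
\F^\Simp(f, \pi_2; u) = (f, \weak_X \tensor \F(u)).
\]
To see this, factor $\pi_2 : X \times J \to J$ as $I_X \times \id_J$ followed by the unit isomorphism, which is strict by our convention. Naturality of $m$ gives $m_{X,J}; \F(I_X \times \id_J) = \F(I_X) \tensor \id_{\F(J)}; m_{I,J}$. Strong monoidality, i.e.\ coherence of $m$ with the unit, identifies $m_{I,J}$ with $m_I^{-1} \tensor \id_{\F(J)}$ up to the unitors. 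Hence
\[
m_{X,J}; \F(\pi_2) = \weak_X \tensor \id_{\F(J)}.
\]
Taking $u = \id$ and $f = \id$ shows that identities are preserved. Taking $u = \id$ shows that the chosen cartesian morphisms $(f,\pi_K)$ of $\simp$ are sent to the chosen cartesian morphisms $(f, \weak_X \tensor \id_{\F(K)})$ of $\lsimp$. So, once functoriality is established, $\F^\Simp$ is split fibred, and it clearly lies over $\C$.

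\textbf{Composition.} This is the main computation and the expected obstacle. Let $(f,u) : (X,J) \to (Y,K)$ and $(g,v) : (Y,K) \to (Z,L)$. Then $\F^\Simp(f,u);\F^\Simp(g,v)$ has second component
\[
\cont_X \tensor \id_{\F(J)};\ \F(f) \tensor (m_{X,J};\F(u));\ m_{Y,K};\ \F(v).
\]
The argument proceeds in three steps.
\begin{enumerate}
\item Use naturality of $m$ to rewrite $\F(f) \tensor \F(u); m_{Y,K}$ as $m_{X, X \times J}; \F(f \times u)$.
\item Use associativity of $m$ to combine $\id \tensor m_{X,J}$ followed by $m_{X, X\times J}$ into $m_{X,X} \tensor \id_{\F(J)}; m_{X \times X, J}$.
\item Unfold $\cont_X = \F(\Delta_X); m^{-1}_{X,X}$, so that $m^{-1}_{X,X}$ cancels against $m_{X,X}$. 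Naturality of $m$ in its first argument then turns $\F(\Delta_X) \tensor \id; m_{X\times X, J}$ into $m_{X,J}; \F(\Delta_X \times \id_J)$.
\end{enumerate}
The result is
\[
m_{X,J};\ \F(\Delta_X \times \id_J;\ f \times u;\ v),
\]
which is exactly the second component of $\F^\Simp((f,u);(g,v))$. The care needed lies in the bookkeeping of associativity, which is trivial under strictness, and in the cancellation in step 3.

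\textbf{Fibrewise lax monoidal structure.} The proposed $m_{(X,J),(X,K)}$ and $m_{1^X}$ are vertical maps whose second components discard $\F(X)$ via $\weak_X$ and then apply $m$. We must check three things:
\begin{itemize}
\item naturality in the fibre, with respect to vertical maps $(\id_X,u)$ and the fibrewise tensor of Proposition \ref{tensor_LS};
\item the associativity, unit and symmetry coherences;
\item preservation by reindexing.
\end{itemize}
The coherences reduce to the corresponding coherences for $m$, since fibrewise composition of such ``constant'' maps just duplicates and then discards $\F(X)$, and $\cont_X;\weak_X \tensor \id = \id$ by the comonoid laws. For naturality, we compute both sides using the composite formula from the previous paragraph. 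The side with $\F^\Simp(u) \tensor_X \F^\Simp(v)$ involves $\cont_X$ and a symmetry. We resolve it by writing $\cont_X$ via $\F(\Delta_X)$ and using that $m$ is symmetric monoidal: $\F(\Delta_X)$ duplicated against $m_{X,J}\tensor m_{X,K}$ becomes $m$ applied to $\Delta_X \times \id$ followed by a shuffle. This matches $\F(\langle u,v\rangle)$ applied after $m_{X, J\times K}$ on the other side. Finally, stability under reindexing follows because $f^*$ only precomposes with $\F(f)$, and $\weak$ is natural.
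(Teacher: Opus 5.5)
Your proposal is correct and follows the paper's proof essentially step for step: your identity $m_{X,J};\F(\pi_2;u) = \weak_X \tensor \F(u)$ is the paper's Lemma~\ref{F_pi}, which gives preservation of identities and the split fibred property. Your composition argument (naturality of $m$, associativity of $m$, cancelling $m^{-1}_{X,X}$ inside $\cont_X$, then naturality in the first argument) is exactly the paper's chain of equalities. Your sketch of the fibrewise lax monoidal structure, which the paper leaves implicit, is also sound: maps of the form $(\id_X, \weak_X \tensor a)$ are closed under fibrewise composition and tensor by the counit law, and the naturality square reduces to the symmetric coherence of the strong monoidal $m$.
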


As for the proof of functoriality of $\U^\Simp$, we first start by stating a lemma.

\begin{lem}
\label{F_pi}
For all $X,J \in \C$ and $f : J \to K$, $m_{X,J}; \F(\pi_J; f) = \weak_X \tensor F(f)$.
\end{lem}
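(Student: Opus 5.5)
We must show $m_{X,J};\F(\pi_J;f) = \weak_X\tensor\F(f)$, where $\pi_J : X\times J\to J$ is the second projection. The plan is to first split off $\F(f)$ by functoriality and naturality of $m$, reducing everything to the identity $m_{X,J};\F(\pi_J) = \weak_X\tensor \id_{\F(J)}$ (modulo the strict unitor $1\tensor\F(J)=\F(J)$).

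For the reduction, write $\F(\pi_J;f) = \F(\pi_J);\F(f)$. It then suffices to prove $m_{X,J};\F(\pi_J) = \weak_X\tensor\id_{\F(J)}$, because postcomposing with $\F(f)$ gives $(\weak_X\tensor\id_{\F(J)});\F(f) = \weak_X\tensor\F(f)$ by bifunctoriality of $\tensor$.

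For the key identity, factor the projection in $\C$ as $\pi_J = I_X\times\id_J ; \lambda$, where $\lambda : I\times J\to J$ is the (strict) unitor. Naturality of $m$ then gives
\begin{equation*}
m_{X,J};\F(I_X\times\id_J) = \F(I_X)\tensor\id_{\F(J)}; m_{I,J}.
\end{equation*}
Next I will use the unit coherence of the strong monoidal functor $(\F,m)$:
\begin{equation*}
m_{I,J};\F(\lambda) = m_1^{-1}\tensor \id_{\F(J)},
\end{equation*}
read with the convention that $m_1 : 1\to\F(I)$. Combining the two displays yields $\F(I_X);m_1^{-1}\tensor\id_{\F(J)}$. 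By Definition \ref{con_weak_def}, $\F(I_X);m^{-1}$ is exactly $\weak_X$, which finishes the argument.

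The main obstacle is bookkeeping rather than substance. Under the strictness convention, the unit coherence must be stated as $m_{I,J};\F(\lambda)=m_1^{-1}\tensor\id$, and one must ensure that the strong structure $m$ used in Definition \ref{con_weak_def} is the same one (inverse of the colax $p$ from Proposition \ref{m_is_colax}) for which this coherence holds. Proposition \ref{m_is_colax} guarantees that $(\F,m)$ is strong monoidal, so the coherence axioms of a monoidal functor are available. If one preferred to work with the explicit formula for $p$ from Proposition \ref{lax-colax_duality}, one would instead unfold $p_{X,J}$ and use $n_I$ together with the naturality of $\dere$, as in Lemma \ref{U_weak}. I expect the abstract coherence route to be shorter.
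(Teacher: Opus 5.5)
Your proof is correct and follows essentially the same route as the paper: factor the projection through $I_X \times \id_J$, use naturality of $m$, apply the unit coherence of the strong monoidal functor $(\F,m)$ (the paper does this by inserting $m_I^{-1};m_I$), and recognise $\F(I_X);m^{-1}$ as $\weak_X$. The only difference is cosmetic: you split off $\F(f)$ first by functoriality, whereas the paper carries it through the whole computation.
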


\begin{proof}
\begin{align*}
m_{X,J}; \F(\pi_J) &= m_{X,J}; \F(I_X \times \id_J); \F(\pi_J; f) \\
&\underset{\hspace{1.4em}}{=}  \F(I_X) \tensor \F(\id_J); m_{I,J}; \F(\pi_J; f)  \\
&\underset{\hspace{1.4em}}{=}\F(I_X) \tensor \id_{\F(J)}; m_I^{-1} \tensor \id_{\F(J)}; m_I \tensor \id_{\F(J)} ;  m_{I,J}; \F(\pi_J; f) \\
&\underset{\hspace{1.4em}}{=} \F(I_X) \tensor \id_{\F(J)}; m_I^{-1} \tensor \F(f) \\
&\underset{\hspace{1.4em}}{=}  \weak_X \tensor \F(f) \qedhere
\end{align*}
\end{proof}

\begin{proof}[Proof of Prop. \ref{F_simp_func}]
Applying Lemma \ref{F_pi} immediately yields that the identity is sent to the identity. We check that $\F^\Simp$ sends the composition to the composition:  let $(f,u) : (X,J) \to (Y,K)$ and $(g,v) : (Y,K) \to (Z,L)$,
\begin{flalign*}
\F^\Simp(f,u);\F^\Simp(g,v)&= (f  ,  m_{X,J};\F(u)); (g  ,  m_{Y,K};\F(v))\\
&= (f;g , \cont_X \tensor \id_{\F(J)}; \F(f) \tensor [m_{X,J};\F(u)] ; m_{Y,K};\F(v))
\end{flalign*}
We continue the proof, omitting the left component of the pair for more readability:
\begin{flalign*}
\cont_X \tensor \id_{\F(J)}; &\F(f) \tensor [m_{X,J};\F(u)] ; m_{Y,K};\F(v) \\
&\underset{\hspace{1.4em}}{=}  \cont_X \tensor m_{X,J}; \F(f) \tensor \F(u);  m_{Y,K};\F(v) \\
&\underset{\hspace{1.4em}}{=}  \cont_X \tensor m_{X,J}; m_{X, X \times J}; \F(f \times u); \F(v)  \\
&\underset{\hspace{1.4em}}{=}  \cont_X \tensor \id_{\F(J)}; m_{X,X} \tensor \id_{\F(J)}; m_{X \times X, J};  \F(f \times u); \F(v)  \\
&\underset{\ref{con_weak_def}}{=}  [\F(\Delta_X); m_{X,X}^{-1}] \tensor \id_{\F(J)}; m_{X,X} \tensor \id_{\F(J)}; m_{X \times X, J};  \F(f \times u); \F(v) \\
&\underset{\hspace{1.4em}}{=} \F(\Delta_X )\tensor \id_{\F(J)}; m_{X \times X, J};  \F(f \times u); \F(v) \\
&\underset{\hspace{1.4em}}{=} m_{X,J}; \F(\Delta_X \times \id_J); \F(f \times u); \F(v)  \\
&\underset{\hspace{1.4em}}{=} m_{X,J}; \F(\Delta_X \times \id_J; f \times u; v) \\
&\underset{\hspace{1.4em}}{=} \F^\Simp((f,u);(g,v))
\end{flalign*}
This shows that $\F^\Simp$ is a functor. As a direct consequence of Lemma \ref{F_pi}, it is a split fibered functor.
\end{proof}

Now that the functors $\U$ and $\F$ have been lifted to the fibrations, we show that these lifts are involved in a fibred linear-non-linear adjunction.

\begin{defi}[Fibred Adjoints \cite{jacobs_categorical_1999}]
Let $p : \mathbb{E} \to \mathbb{B}$, $q : \mathbb{D} \to \mathbb{B}$ be two fibrations and $F : p \to q$, $U : q \to p$ be two fibred functors. We say that $F$ and $U$ are fibred adjoint functors when $\mathbb{E} \to \mathbb{D} : F \dashv U : \mathbb{D} \to \mathbb{E}$ is a 1-categorical adjunction such that the unit and counit are sent by the fibrations to the identity. 
\end{defi}

\begin{prop}
\label{U_F_lnl}
$\U^\Simp$ and $\F^\Simp$ are fibered adjoint functors $ \F^\Simp \dashv \U^\Simp$ and this adjunction is fibrewise a linear-non-linear adjunction. The unit and counit are respectively $\eta_{(X,J)} := (\id_X, \pi_J; \eta_J)$ and $\dere_{(X,A)} := (\id_X, \weak_X \tensor \dere_A)$. 
\end{prop}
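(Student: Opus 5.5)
We verify the adjunction through its unit and counit and the triangle identities, check that the unit and counit are vertical, and then read off the fibrewise statement. The verticality is immediate: both $\eta_{(X,J)} = (\id_X, \pi_J;\eta_J)$ and $\dere_{(X,A)} = (\id_X, \weak_X \tensor \dere_A)$ have first component $\id_X$, so they are sent to identities by $\simp$ and $\lsimp$. Since $\U^\Simp$ and $\F^\Simp$ are already known to be split fibred functors (Propositions \ref{U_simp_func} and \ref{F_simp_func}), what remains is the following.

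\textbf{Naturality.} For $\eta$, take $(f,u) : (X,J) \to (Y,K)$. We compute $(f,u);\eta_{(Y,K)}$ and $\eta_{(X,J)};\U^\Simp(\F^\Simp(f,u))$ in $\Simp(\C)$; both have first component $f$. For the second components we do the following.
\begin{itemize}
\item On the left we get $u;\eta_K$, since composing with a morphism of the shape $(\id,\pi;h)$ discards the duplicated $X$.
\item On the right we unfold the composition to $\Delta_X \times \id_J; f \times (\pi_J;\eta_J); \eta_Y \times \id; n_{\F(Y),\F(K)}; \U(m_{Y,K}); \U\F(u)$.
\item We rewrite $f;\eta_Y$ as $\eta_X;\U\F(f)$ and use the standard fact that $\eta_X \times \eta_J; n_{\F X,\F J}; \U(m_{X,J}) = \eta_{X\times J}$. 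This is the monoidality of the unit of a monoidal adjunction, and it follows from Proposition \ref{lax-colax_duality} together with $m$ being inverse to the colax structure (Proposition \ref{m_is_colax}).
\item Finally, naturality of $\eta$ turns the result into $u;\eta_K$.
\end{itemize}
Naturality of $\dere$ is dual. We use Lemma \ref{F_pi}, the comonoid laws for $\cont_X,\weak_X$, and the analogous fact that $m; \F(\eta \times \id); \F(n); \dere$ equals $\dere \tensor \dere$ up to the $\weak$ factors. This is monoidality of the counit.

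\textbf{Triangle identities.} The first identity is $\F^\Simp(\eta_{(X,J)}); \dere_{\F^\Simp(X,J)} = \id$. Its second component is $\cont_X\tensor\id; \F(\id_X)\tensor [m_{X,J};\F(\pi_J;\eta_J)]; \weak_X \tensor \dere_{\F J}$. By Lemma \ref{F_pi} the bracket is $\weak_X \tensor \F(\eta_J)$. The counit law of the comonoid then collapses the two $\weak$'s and the duplicated $\cont$ to a single $\weak_X$. The ordinary triangle identity $\F(\eta_J);\dere_{\F J} = \id$ then gives $\weak_X \tensor \id_{\F J}$, which is the identity of $(X,\F J)$ by Definition \ref{def_lnl_simp_cat}.

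The second identity is $\eta_{\U^\Simp(X,A)};\U^\Simp(\dere_{(X,A)}) = \id$. By Lemma \ref{U_weak}, $\U^\Simp(\id_X,\weak_X\tensor\dere_A) = (\id_X, \pi_2;\U(\dere_A))$. Composing with $(\id_X,\pi;\eta_{\U A})$ yields $\pi;\eta_{\U A};\U(\dere_A) = \pi$, which is the identity of $(X,\U A)$ in $\Simp(\C)$.

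\textbf{Fibrewise LNL.} Restricted to the fibre over $X$, the adjunction remains an adjunction, because unit and counit are vertical. The fibre $\Simp(\C)_X$ is cartesian and $\LSimp(\C)_X$ is symmetric monoidal (Proposition \ref{tensor_LS}), and $\U^\Simp$ is fibrewise symmetric lax monoidal by Proposition \ref{U_simp_func}. We then only need the adjunction to be monoidal. By the doctrinal adjunction principle behind Proposition \ref{lax-colax_duality}, it suffices to check that the colax structure induced on $\F^\Simp$ from $n$ is inverse to the given $m_{(X,J),(X,K)}$. Those are built from $m_{J,K}$ and $\weak_X$, so the check reduces, after absorbing the $\weak_X$ via Lemma \ref{F_pi}, to the corresponding statement for the original LNL (Proposition \ref{m_is_colax}).

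I expect the main obstacle to be the naturality of $\eta$ and $\dere$. This is where the interaction between $\cont_X$, Lemma \ref{cont_dagger} and the monoidality of the unit and counit must be handled carefully. The triangle identities themselves are short.
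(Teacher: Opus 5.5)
Your proposal is correct. The paper gives no proof of this proposition; it only remarks that the omitted fibrewise verifications amount to checking that everything behaves in each fibre as in the original LNL. Your argument is exactly that check made explicit:
\begin{itemize}
\item the unit and counit are vertical;
\item the triangle identities follow via Lemmas \ref{F_pi} and \ref{U_weak};
\item naturality follows from monoidality of the unit and counit;
\item the fibrewise monoidal structure follows by doctrinal adjunction, where the induced colax structure does come out as $(\id_X, \weak_X \tensor m_{J,K}^{-1})$, inverse to the given one.
\end{itemize}

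Two of your computations need local repair, though neither affects the conclusion.

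\emph{Naturality of $\eta$.} The first component of $\eta_{(X,J)}$ is $\id_X$, and $\U^\Simp$ uses the unit at the \emph{source}. So $\eta_{(X,J)};\U^\Simp\F^\Simp(f,u)$ has second component $\Delta_X\times\id_J;\id_X\times(\pi_J;\eta_J);\eta_X\times\id_{\U\F(J)};n_{\F(X),\F(J)};\U(m_{X,J});\U\F(u)$. This equals $\eta_X\times\eta_J;n_{\F(X),\F(J)};\U(m_{X,J});\U\F(u)$. Your unfolding with $f\times(\pi_J;\eta_J)$, $\eta_Y$, $n_{\F(Y),\F(K)}$ and $m_{Y,K}$ does not typecheck, and no rewriting of $f;\eta_Y$ is needed. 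Your identity $\eta_X\times\eta_J;n_{\F(X),\F(J)};\U(m_{X,J})=\eta_{X\times J}$, together with naturality of $\eta$, then gives $u;\eta_K$ as you claim.

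\emph{Naturality of $\dere$.} After using $\F(f);\weak_Y=\weak_X$ and the counit law for $\cont_X$, you do not need Lemma \ref{F_pi}. What remains is exactly $m_{X,\U(A)};\F(\eta_X\times\id_{\U(A)};n_{\F(X),A});\dere_{\F(X)\tensor A}=\id_{\F(X)}\tensor\dere_A$. This follows from three facts:
\begin{itemize}
\item naturality of $m$;
\item monoidality of the counit, $m_{\U(C),\U(A)};\F(n_{C,A});\dere_{C\tensor A}=\dere_C\tensor\dere_A$ with $C=\F(X)$, which comes from Proposition \ref{lax-colax_duality} and $m=p^{-1}$;
\item the triangle identity $\F(\eta_X);\dere_{\F(X)}=\id$.
\end{itemize}
So the factor that gets absorbed is $\F(\eta_X)$, not a $\weak$ factor as your phrase ``up to the $\weak$ factors'' suggests.
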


\begin{center}
\begin{tikzcd}[column sep=3em, row sep=3em]
\Simp(\C) \arrow[dr, "\simp"'] \arrow[rr, "\F^\Simp"{name=U}, bend left=12, , shift left=0.5]  &  & \LSimp(\C)  \arrow[ll, "\U^\Simp"{name=D}, bend left=12, shift left=0.5] \arrow[phantom, from=U, to=D, "\dashv" rotate=-90]  \arrow[dl, "\lsimp"] \\
& \C & 
\end{tikzcd}
\end{center}

\begin{rem} 
\label{rem:LSIisoL}
Notice that the categories $\Simp(\C)_I$ and $\LSimp(\C)_I$ are respectively isomorphic to $\C$ and $\L$. Moreover,  $\U^\Simp_I$ and $\F^\Simp_I$ are respectively isomorphic to $\U$ and $\F$. This means that we can recover the original data of the LNL from this fibred construction; no information was erased. 
\end{rem}

From a syntactic point of view, it is expected that all constructions already present in the LNL lift to the fibres. If one can construct a term $A \vdash t : B$, then one can always add in the context a non-linear type $\F(X), A \vdash t : B$, lifting $t$ to a morphism above $X$.  As such, the proofs we did not detail are simply a matter of verifying that in each fibre everything behaves as in the LNL. Since this pair of functors is fibrewise an LNL, we can define in each slice a weakening and contraction natural transformation.

\begin{defi}
Define $\oc^\Simp := \F^\Simp \U^\Simp$, it's a fibred comonad on $\lsimp$. It sends objects $(X,A)$ to $(X, \oc A)$ and morphisms $(f, u) : (X,A) \to (Y,B)$ to $(f, \F(\eta_X) \tensor \id_{\oc A}; {m_\tensor}_{\F(X), A}; \oc u)$. The counit $\dere_{(X,A)} : (X, \oc A) \to (X, A)$ is $(\id_X, \weak_X \tensor \dere_A)$ and the comultiplication $\dig_{(X,A)} : (X, \oc  A) \to (X, \oc \oc  A)$ is $(\id_X, \weak_X \tensor \dig_A)$.
\end{defi}

\begin{center}
\begin{tikzcd}[column sep=3em, row sep=3em]
\Simp(\C) \arrow[dr, "\simp"'] \arrow[rr, "\F^\Simp"{name=U}, bend left=12, , shift left=0.5]  &  & \LSimp(\C) \arrow[loop right, "\oc"] \arrow[ll, "\U^\Simp"{name=D}, bend left=12, shift left=0.5] \arrow[phantom, from=U, to=D, "\dashv" rotate=-90]  \arrow[dl, "\lsimp"] \\
& \C & 
\end{tikzcd}
\end{center}

\begin{cor}[Fibred contraction and weakening]
\label{def_f_c_w}
Since $\F^\Simp$ is fibrewise a colax-monoidal functor (corollary \ref{m_is_colax}), it induces on each slice a contraction and weakening natural transformation. Let $X, J \in \C$, 
\begin{equation*}
\weak_{(X,J)} = (\id_X, \weak_X \tensor \weak_J) : \F^\Simp(X,J) \to 1^X
\end{equation*} 
\begin{equation*}
\cont_{(X,J)} = (\id_X, \weak_X \tensor \cont_J) : \F^\Simp(X,J) \to \F^\Simp(X,J) \tensor_X \F^\Simp(X,J)
\end{equation*}
For every $A \in \L$, denote $\weak_{(X,A)}$ for $\weak_{\U^\Simp(X,A)}$ and $\cont_{(X,A)}$ for $\cont_{\U^\Simp(X,A)}$. More explicitely, for each object $(X, \oc A)$ in $\LSimp(\C)$,
\begin{equation*}
\weak_{(X,A)} = (\id_X, \weak_X \tensor \weak_A) : (X, \oc A) \to (X, 1)
\end{equation*} 
\begin{equation*}
\cont_{(X,A)} = (\id_X, \weak_X \tensor \cont_A) : (X, \oc A) \to (X, \oc A \tensor \oc A)
\end{equation*}
\end{cor}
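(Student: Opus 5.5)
The statement to establish is Corollary \ref{def_f_c_w}. The plan is to apply Definition \ref{con_weak_def} verbatim inside the fibre above $X$. Proposition \ref{U_F_lnl} says that $\F^\Simp_X \dashv \U^\Simp_X$ is a linear-non-linear adjunction between $\Simp(\C)_X$ and $\LSimp(\C)_X$. Proposition \ref{m_is_colax} then makes $\F^\Simp_X$ strong monoidal, with structure maps $m_{1^X}$ and $m_{(X,J),(X,K)}$ from Proposition \ref{F_simp_func}. The fibrewise diagonal and terminal maps in $\Simp(\C)_X$ are $\Delta_{(X,J)} = (\id_X, \pi_J; \Delta_J)$ and $I_{(X,J)} = (\id_X, \pi_J; I_J)$. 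By definition, therefore,
\begin{equation*}
\cont_{(X,J)} = \F^\Simp(\Delta_{(X,J)}); m^{-1}_{(X,J),(X,J)}, \qquad \weak_{(X,J)} = \F^\Simp(I_{(X,J)}); m^{-1}_{1^X}.
\end{equation*}
What remains is to compute these composites in $\LSimp(\C)$ and check that they reduce to the announced pairs.

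The first key step uses Lemma \ref{F_pi}: $\F^\Simp(\id_X, \pi_J; \Delta_J) = (\id_X, \weak_X \tensor \F(\Delta_J))$, and likewise $\F^\Simp(\id_X,\pi_J;I_J) = (\id_X, \weak_X \tensor \F(I_J))$. Next I identify the inverses of the fibrewise structure maps. Since $m_{(X,J),(X,K)} = (\id_X, \weak_X \tensor m_{J,K})$, I claim its inverse is $(\id_X, \weak_X \tensor m^{-1}_{J,K})$. To verify this, I compose in $\LSimp(\C)$ using the composition rule $(f,u);(g,v)$. Vertical maps of the form $(\id_X, \weak_X \tensor a)$ compose to $(\id_X, \weak_X \tensor (a;b))$, because of the counit law $\cont_X; (\id \tensor \weak_X) = \id$ of the comonoid $(\F(X),\cont_X,\weak_X)$, together with $\weak_X \tensor \weak_X$ after $\cont_X$ collapsing to $\weak_X$. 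In other words, $A \mapsto A$, $a \mapsto (\id_X, \weak_X \tensor a)$ is a functor $\L \to \LSimp(\C)_X$; this is essentially the reindexing functor $I_X^*$ composed with Remark \ref{rem:LSIisoL}. Under this functor, the composites become $(\id_X, \weak_X \tensor (\F(\Delta_J); m^{-1}_{J,J})) = (\id_X, \weak_X \tensor \cont_J)$, and similarly $(\id_X, \weak_X \tensor \weak_J)$.

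The cleaner route, which I would actually write, is this observation: $\F^\Simp$ is split fibred and strict on the fibre structure, so the fibrewise LNL above $X$ is exactly the image under $I_X^*$ of the LNL above $I$. Reindexing preserves $\tensor_X$ strictly (Proposition \ref{tensor_LS}) and products strictly. Hence contraction and weakening, being defined from the monoidal data, are transported, and $I_X^*(\id_I, \weak_I\tensor a) = (\id_X, \weak_X \tensor a)$. The statement for $(X,A)$ then follows by substituting $J = \U(A)$, since $\F(\U(A)) = \oc A$ and $\cont_A = \cont_{\U(A)}$.

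The main obstacle I expect is the bookkeeping of the composition in $\LSimp(\C)$: showing that $(\id_X, \weak_X\tensor a);(\id_X,\weak_X\tensor b) = (\id_X, \weak_X \tensor (a;b))$ and that the $\weak_X$ factors do not proliferate. This needs the comonoid counit law and strictness of the monoidal structure. Once this functoriality lemma is in place, everything else is a direct transport.
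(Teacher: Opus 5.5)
Your proposal is correct and matches what the paper does: it gives no separate proof, simply instantiating Definition \ref{con_weak_def} in the fibrewise LNL of Proposition \ref{U_F_lnl}, and your explicit computation via Lemma \ref{F_pi} and the functor $a \mapsto (\id_X, \weak_X \tensor a)$ from $\L$ to $\LSimp(\C)_X$ fills in exactly the omitted check. One imprecision in your ``cleaner route'': the fibre above $X$ is not literally the image of $I_X^*$, since it has more morphisms, but every structure map you use (diagonals, terminal maps, $m^X$) is such an image, which is all the argument needs.
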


Now that we have finished with the basic constructions, we relate them to comprehension categories, which are models of DTT. Those constructions come from the fact that the simple linear category is a comprehension category  \cite{jacobs_categorical_1999} with coproducts (dependent sum). Moreover, this fibration is a model of dependent linear logic in the sense of M. Vákár \cite{pitts_categorical_2015} and M. Lundfall \cite{lundfall_models_nodate}. Knowledge of comprehension categories is not required, as we give explicit definitions.

\begin{defi}[Comprehension Functors \cite{jacobs_categorical_1999}]
\label{def:Comprehension}
Denote $\set{-}$ the functor from $\Simp(\C)$ to $\C$ sending $(X,J)$ to $X \times J$ and $(f,u)$ to $\langle \pi_X;f , u \rangle : X \times J \to Y \times K$. Denote by $\banana{-}$ the functor $\U^\Simp; \set{-}$ from $\LSimp(\C)$ to $\C$. 
\end{defi}

From a geometric perspective, $\set{-} := \dom \comp P : \mathbb{E} \to \mathbb{B}$ is the total space functor, sending a bundle above $X$ to its total space. From a logical perspective, consider a set $X$ and a predicate on this set $Q(x)$. $Q$ lives in the fibre above $X$ and the comprehension functor applied to $Q$, $\set{Q}$ is the subset of $X$ satisfying $Q$ (hence the name comprehension). In dependent type theory, $\set{-}$ is the categorical equivalent of the operation allowing us to go from a derivation $x :\gamma \vdash \tau(x) : \Type$ to a dependent context $x : \gamma, y : \tau(x)$. 

\begin{center}
\begin{tikzcd}[column sep=3em, row sep=3em]
\Simp(\C) \arrow[dr, "\set{-}"', shift right=1]  \arrow[dr, "\simp", shift left=1] \arrow[rr, "\F^\Simp"{name=U}, bend left=10, , shift left=0.5]  &  & \LSimp(\C)  \arrow[ll, "\U^\Simp"{name=D}, bend left=10, shift left=0.5] \arrow[phantom, from=U, to=D, "\dashv" rotate=-90]   \arrow[dl, "\banana{-}", shift left=1] \arrow[dl, "\lsimp"', shift right=1] \\
& \C & 
\end{tikzcd}
\end{center}

\begin{defi}
\label{linear_sigma_types}
For every $X,J \in \C$, let $\Sigma_{(X,J)} :  \LSimp(\C)_{X \times J} \to \LSimp(\C)_X$ be the functor given by:
\begin{itemize}
\item On objects: $\Sigma_{(X,J)} (X \times J, A) := (X, \F(J) \tensor A)$.
\item On morphisms: $(\id_{X \times J}, u) : (X \times J, A) \to (X \times J, B)$ to 
\begin{equation*}
\Sigma_{(X,J)}(\id_{X \times J}, u) := (\id_X, \id_{\F(X)} \tensor \cont_J \tensor \id_B; \sigma_{\F(X), \F(J)} \tensor \id_{\F(J) \tensor A}; \id_{\F(J)} \tensor ( m_{X, J} \tensor \id_A;  u) )\\
\end{equation*}
Denote $\Sigma_{(X,A)}$ for $\Sigma_{(X, \U(A))}$. More visualy the second component of $\Sigma_{(X,J)}(\id_{X \times J}, u) $ is 
\begin{center}
\begin{tikzcd}[column sep=5em, row sep=3em]
\F(X) \tensor \F(J) \tensor A \arrow[r, "\id_{\F(X)} \tensor \cont_J \tensor \id_B"] & \F(X) \tensor \F(J) \tensor \F(J) \tensor A \arrow[d, "\sigma_{\F(X), \F(J)} \tensor \id_{\F(J) \tensor A}"'] & \\
& \F(J) \tensor \F(X) \tensor \F(J) \tensor A \arrow[d, "\id_{\F(J)} \tensor m_{X, J} \tensor \id_A"'] & \\
&  \F(J) \tensor \F(X \times J) \tensor A \arrow[r, "\id_{\F(J)} \tensor u"]& \F(J) \tensor B 
\end{tikzcd}
\end{center}
\end{itemize} 
\end{defi}

This functor is the linear equivalent of the usual dependent sigma type. Since we are in a simply typed setting, this functor is simply the non-dependent monoidal product $\F^\Simp(X,J) \tensor_X -$. 

\begin{prop}
For every $X,J \in \C$, $\Sigma_{(X,J)}$ is left adjoint to $\pi_{1}^* : \LSimp(\C)_X \to \LSimp(\C)_{X \times J}$. The unit $\nu^{(X,J)}$ and counit $\mu^{(X,J)}$ of the adjunction $\Sigma_{(X,J)} \dashv \pi_1^*$ are given by:
\begin{itemize}
\item The unit is defined component-wise as: 
\begin{equation*}
\nu^{(X,J)}_{(X \times J, A)} := (\id_{X \times J}, m^{-1}_{X, J} \tensor \id_A; \weak_X \tensor \id_{\F(J) \tensor A}) : (X \times J, A) \to (X \times J, \F(J) \tensor A)
\end{equation*}
\item The counit is defined component-wise as:
\begin{equation*}
\mu^{(X,J)}_{(X, A)} := (\id_X, \weak_X \tensor \weak_J \tensor \id_A) : (X, \F(J) \tensor A) \to (X, A)
\end{equation*}
\end{itemize} 
Denote by $\nu^{(X,A)}$ and $\mu^{(X,A)}$ for $\nu^{\U^\Simp(X,A)}$ and $\mu^{\U^\Simp(X,A)}$. 
\end{prop}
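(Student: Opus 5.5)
The plan is to avoid checking the triangle identities by brute force. Instead I will exhibit a natural bijection
\[
\LSimp(\C)_X\big(\Sigma_{(X,J)}(X\times J,A),(X,B)\big)\;\cong\;\LSimp(\C)_{X\times J}\big((X\times J,A),\pi_1^*(X,B)\big),
\]
and then check that it is the bijection induced by the proposed $\nu^{(X,J)}$ and $\mu^{(X,J)}$.

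First I make $\pi_1^*$ explicit. The fibration $\lsimp$ is split with cartesian maps $\overline{f}_B=(f,\weak\tensor\id_B)$, so $\pi_1^*(X,B)=(X\times J,B)$. By Proposition~\ref{prop:ReindexingFunctors}, a vertical map $(\id_X,u)$ is sent to $(\id_{X\times J},\F(\pi_1)\tensor\id_A;u)$. To see this, compose with the cartesian map and use the counit law $\cont;(\id\tensor\weak)=\id$ of the comonoid $\F(X\times J)$.

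Unfolding the two sides, a left-hand map is a morphism $g:\F(X)\tensor\F(J)\tensor A\to B$, and a right-hand map is a morphism $h:\F(X\times J)\tensor A\to B$. Since $(\F,m)$ is strong monoidal (Proposition~\ref{m_is_colax}), the bijection is $g\mapsto m^{-1}_{X,J}\tensor\id_A;g$, with inverse $h\mapsto m_{X,J}\tensor\id_A;h$.

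Next I check that this bijection is obtained from the proposed unit and counit. Taking $g=\weak_X\tensor\id_{\F(J)}\tensor\id_A$, i.e.\ the identity of $(X,\F(J)\tensor A)$, the transport gives $m^{-1}_{X,J}\tensor\id_A;\weak_X\tensor\id$, which is exactly $\nu^{(X,J)}$. Conversely, $\mu^{(X,J)}_{(X,A)}$ is the image of the identity $(\id_{X\times J},\weak_{X\times J}\tensor\id_A)$. Indeed, $m_{X,J};\weak_{X\times J}=\weak_X\tensor\weak_J$, because $\weak$ is built from the terminal map and $m$ (Definition~\ref{con_weak_def}), and $I_{X\times J}=I_X\times I_J$.

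It then remains to verify that for every $g$, $\nu^{(X,J)};\pi_1^*(\id_X,g)$ equals the transported map, and dually that $\Sigma_{(X,J)}(\id,h);\mu^{(X,J)}$ equals the transport of $h$. These two identities give the universal properties and hence the adjunction, with naturality following from the form of the bijection.

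The first identity is a direct computation. The cleavage contributes $\F(\pi_1)$, which together with $m^{-1}$ and Lemma~\ref{F_pi} (applied symmetrically) reduces the composite to $m^{-1}\tensor\id;g$. This uses the composition formula of Definition~\ref{def_lnl_simp_cat} and the comonoid counit laws to cancel the contraction introduced by composition.

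The second identity is where I expect the main obstacle. The formula for $\Sigma_{(X,J)}(\id,h)$ duplicates $\F(J)$ with $\cont_J$, swaps with $\sigma$, and applies $m_{X,J}$ before $h$, keeping one copy of $\F(J)$ on the output. Composing with $\mu^{(X,J)}$ in the fibre adds a further contraction $\cont_X$ (from composition), which is then weakened. It also weakens the retained copy of $\F(J)$ through $\weak_X\tensor\weak_J\tensor\id$. I plan to handle this by naturality of $\sigma$ and the counit laws $\cont_J;(\weak_J\tensor\id)=\id$ and $\cont_X;(\weak_X\tensor\id)=\id$. Together with the strictness convention, this collapses the composite to $m_{X,J}\tensor\id_A;h$, as required.

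The bookkeeping of the symmetry in this last computation is the only delicate point. I would draw the composite as a string diagram to make the cancellations visible.
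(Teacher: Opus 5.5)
Your proposal is correct. The paper states this proposition without any proof; it is one of the fibrewise checks the paper leaves implicit, with a pointer to Vákár. So your argument supplies a proof rather than paralleling one.

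The computations work as you describe:
\begin{itemize}
\item $\pi_1^*(\id_X,u)=(\id_{X\times J},\F(\pi_1)\tensor\id;u)$.
\item The transports of the two identities are exactly $\nu^{(X,J)}$ and $\mu^{(X,J)}$. For $\mu$ you use $m_{X,J};\weak_{X\times J}=\weak_X\tensor\weak_J$, which also needs the unit coherence of $m$.
\item Both remaining identities hold.
\end{itemize}
The first identity is quicker than you suggest. Lemma~\ref{F_pi} gives $m^{-1}_{X,J};\weak_X\tensor\id_{\F(J)}=\F(\pi_2)$. Naturality of $m$ then gives $\cont_{X\times J};\F(\pi_1)\tensor\F(\pi_2)=\F(\Delta_{X\times J};\pi_1\times\pi_2);m^{-1}_{X,J}=m^{-1}_{X,J}$. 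So you never need to split $\cont_{X\times J}$ into $\cont_X$ and $\cont_J$.

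Two points should be stated explicitly. Write $\phi(g)=\nu;\pi_1^*g$ and $\psi(h)=\Sigma_{(X,J)}h;\mu$.
\begin{itemize}
\item \emph{Naturality.} $\phi$ is natural in the codomain by functoriality of $\pi_1^*$. $\psi$ is natural in the domain by functoriality of $\Sigma_{(X,J)}$. They are mutually inverse because they are $m^{-1}_{X,J}\tensor\id;-$ and $m_{X,J}\tensor\id;-$. The triangle identities are then $\psi(\nu)=\psi\phi(\id)=\id$ and $\phi(\mu)=\phi\psi(\id)=\id$.
\item \emph{Functoriality of $\Sigma_{(X,J)}$.} The argument above relies on it. Definition~\ref{linear_sigma_types} asserts it, but the paper never checks it.
\end{itemize}

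A more conceptual route, in the spirit of Remark~\ref{rem:SigmaContWeak}, settles the functoriality question too. Transport along $m_{X,J}$ is an isomorphism between $\LSimp(\C)_{X\times J}$ and the coKleisli category of the comonad $T=\F^\Simp(X,J)\tensor_X -$ on $\LSimp(\C)_X$. Its comultiplication is induced by $\cont_J$ and its counit by $\weak_J$. Under this isomorphism, $\pi_1^*$ becomes the identity-on-objects functor into the coKleisli category, and $\Sigma_{(X,J)}$ becomes the functor sending $A$ to $TA$. The proposition is then the standard coKleisli adjunction, with functoriality of $\Sigma_{(X,J)}$ included for free. Your hom-set computation does the same work by hand, and has the advantage of producing the explicit formulas for $\nu$ and $\mu$ directly.
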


\begin{rem}
\label{rem:SigmaContWeak}
As explained by M. Vákár \cite{pitts_categorical_2015}, the existence of linear sigma types (fibred coproducts) determines the comonad $\oc$ completely. For every $A \in \L$ we can define $\oc^\Simp (X,A) = (X, \oc A)$ as $\Sigma_{(X,A)} \pi_1^*(1^X) = (X, \oc A \tensor 1)$. Moreover, we can recover the contraction and the weakening respectively as the comultiplaction and counit of the comonad $\Sigma \pi_1^*$.  Concretely, in the fiber above $X \in \C$ the comultiplication 
\begin{equation*}
\delta^{(X,J)}_{(X,A)} : (X, \F(J) \tensor A) \to (X, \F(J) \tensor \F(J) \tensor A)
\end{equation*} 
is equal to $\cont^X_J \tensor_X \id_{(X,A)}$ and the counit 
\begin{equation*}
\mu^{(X,J)}_{(X, A)} : (X, \F(J) \tensor A) \to (X, A)
\end{equation*}
is equal to $\weak^X_J \tensor_X \id_{(X,A)}$. 
\end{rem}

$\Sigma$ allows us to internalise in a fiber the composition in $\LSimp(\C)$. To avoid boring the reader with unnecessary development, we give one last identity relating $\F$ and $\Sigma$ that will prove useful later. 

\begin{lem}
\label{lem:SigmaF}
For every $u : (X,J) \to (X, K)$ in $\Simp(\C)_X$ and every $v : (X \times K, A) \to (X \times K ,B)$ in $\LSimp(\C)_{X \times K}$, 
\begin{equation*}
\Sigma_{(X,J)}( \set{u}^*v); \mu^{(X,J)}_{(X, A)}  = \F^\Simp(u) \tensor_X \id_A; \Sigma_{(X, K)} v; \mu^{(X,K)}_{(X, A)} : (X, \F(J) \tensor A) \to (X, B)
\end{equation*}
\end{lem}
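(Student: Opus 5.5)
The plan is to unfold both sides of the equation in Lemma \ref{lem:SigmaF} into their second components. Both are morphisms $\F(X) \tensor \F(J) \tensor A \to B$ in $\L$ lying over $\id_X$, so it suffices to show these two $\L$-morphisms agree. Write $u = (\id_X, u')$ with $u' : X \times J \to K$, so that $\set{u} = \langle \pi_X, u' \rangle : X \times J \to X \times K$. Write $v = (\id_{X \times K}, v')$ with $v' : \F(X \times K) \tensor A \to B$.

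First I would record two routine computations. The first is a formula for reindexing in $\lsimp$. From the chosen cartesian maps $\overline{f}_B = (f, \weak \tensor \id_B)$ and the composition law of $\LSimp(\C)$, for $f : Z \to W$ and $(\id_W, w) : (W,A) \to (W,B)$ one gets
\[ f^*(\id_W, w) = (\id_Z, \F(f) \tensor \id_A; w). \]
The second is a simplification of vertical composites. If $(\id_X, s)$ is vertical and $t$ is of the form $\weak_X \tensor t'$, then the composite $(\id_X,s);(\id_X,t)$ has second component $s; t'$. This follows from the counit law $\cont_X; \weak_X \tensor \id = \id$. The same counit law, applied to $\cont_J$ followed by the weakening $\weak_J$ contained in $\mu^{(X,J)}$, shows that $\Sigma_{(X,J)}(\id, w); \mu^{(X,J)}_{(X,B)}$ has second component $m_{X,J} \tensor \id_A; w$. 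Here the symmetry $\sigma$ disappears once one copy of $\F(J)$ is discarded.

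With these, the left-hand side reduces to
\[ m_{X,J} \tensor \id_A; \F(\langle \pi_X, u' \rangle) \tensor \id_A; v'. \]
For the right-hand side, I would use the definition of $\tensor_X$ from Prop. \ref{tensor_LS}: $\F^\Simp(u) \tensor_X \id_{(X,A)}$ contracts $\F(X)$, and the counit law on the copy sent to $\weak_X$ leaves second component $m_{X,J}; \F(u') \tensor \id_A$. Composing in $\LSimp(\C)$ with $\Sigma_{(X,K)} v; \mu^{(X,K)}_{(X,B)}$, whose second component is $m_{X,K} \tensor \id; v'$ by the previous paragraph, yields
\[ \cont_X \tensor \id; \id_{\F(X)} \tensor (m_{X,J}; \F(u')) \tensor \id_A; m_{X,K} \tensor \id_A; v'. \]
The lemma therefore reduces to the identity
\[ \cont_X \tensor \id_{\F(J)}; \id \tensor (m_{X,J};\F(u')); m_{X,K} = m_{X,J}; \F(\langle \pi_X, u'\rangle). \]
To prove it, I would substitute $\cont_X = \F(\Delta_X); m^{-1}_{X,X}$ (Definition \ref{con_weak_def}) and move $\F(u')$ past $m$ by naturality. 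Associativity coherence of the strong monoidal $m$ then cancels $m^{-1}_{X,X}$ against $m_{X,X \times J}$, exactly as in the proof of Prop. \ref{F_simp_func}. This gives $m_{X,J}; \F(\Delta_X \times \id_J; \id_X \times u')$, and $\Delta_X \times \id_J; \id_X \times u' = \langle \pi_X, u' \rangle$.

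The main obstacle is bookkeeping rather than any single idea. The hard part is tracking which copies of $\F(X)$ and $\F(J)$ are contracted, swapped and weakened in the composite $\Sigma; \mu$ and in $\tensor_X$. I expect the cancellation of the symmetry $\sigma_{\F(X),\F(J)}$ and of the spurious contractions to need the comonoid counit laws and cocommutativity. I would handle them first, before touching the $m$-coherences.
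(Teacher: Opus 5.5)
Your proof is correct. The paper gives no proof of this lemma: it is stated ``to avoid boring the reader with unnecessary development''. Your argument is the same kind of direct second-component computation the paper uses for Prop.~\ref{F_simp_func}, and each of your reductions checks out:
\begin{itemize}
\item the reindexing formula $f^*(\id_W,w) = (\id_Z, \F(f)\tensor\id_A; w)$;
\item the simplification of $\Sigma_{(X,J)}(\id,w);\mu^{(X,J)}$ to $m_{X,J}\tensor\id_A;w$;
\item the simplification of $\F^\Simp(u)\tensor_X\id_{(X,A)}$ to $(m_{X,J};\F(u'))\tensor\id_A$;
\item the final identity, obtained from $\cont_X = \F(\Delta_X);m^{-1}_{X,X}$, naturality of $m$ and associativity coherence.
\end{itemize}

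Two small remarks. First, your cancellations only use the two counit laws of $(\F(X),\cont_X,\weak_X)$ and $(\F(J),\cont_J,\weak_J)$ together with naturality of $\sigma$, so cocommutativity is never needed. Second, you have implicitly and correctly repaired typos in the paper. In the statement, the counits should read $\mu^{(X,J)}_{(X,B)}$ and $\mu^{(X,K)}_{(X,B)}$. In Definition~\ref{linear_sigma_types}, $\id_B$ should be $\id_A$.
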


This comprehension structure allows us to push to the following intuition: $\LSimp(\C)$ is a structure on $\C$ describing a notion of partial linear map. We say that a map $X \times \U(A) \to \U(B)$ is linear in its second argument if it is of shape $\banana{u}; \pi_2$ for some $u \in \LSimp(\C)_X$. 

\begin{defi}[Linear Map]
\label{def:LinearMap}
A morphism $f : \U(A) \to \U(B)$ is linear if it is equal to $\U(u) : \U(A) \to \U(B)$ for some $l : A \to B$ in $\L$.
\end{defi}

\begin{defi}[Partial Linear Map]
\label{def:PartialLinear}
A morphism $f : X \times \U(A) \to \U(B)$ is linear in its second component if there exists a morphism $u: (X,A) \to (X,B)$ in $\LSimp(\C)_X$ such that $f = \banana{u}; \pi_2$. A morphism $f : \U(A) \times X \to \U(B)$ is linear in its first component if $\sigma_{X, \U(A)};f : X \times \U(A) \to \U(B)$ is linear in its second component. A morphism $f : \U(A) \times \U(B) \to \U(C)$ is bilinear if it is linear in its first and second components.
\end{defi}

\section{Generalised Differential Seely Category}
\label{sec:GDSC}

Looking at the categorical semantics of type theories \cite{jacobs_categorical_1999}, to go from a simply typed theory to a dependently typed theory, one needs to ask for structures in the fibers instead of in the base category. Now that we have seen that simply typed models of LL can be expressed as some kind of fibrations, we are going to show that requiring the existence of a particular functor will make each fiber into a model of DiLL. To do so, we take inspiration from differential geometry. \\

First, there is the intuition that LL should be a \enquote{logic of vector bundle} \cite{mellies_functorial_2022}. In our setting, the linear simple category is very similar in spirit to a category of (trivial) vector bundles. In differential geometry, differentiation corresponds to a functor (tangent functor) from the category of manifolds (non-linear types) to the category of vector bundles (dependent linear types). We follow this intuition and define a functor, called a linear tangent functor, from $\C$ to $\LSimp(\C)$. This construction is not completely similar to the tangent functor of differential geometry or tangent categories \cite{cockett_differential_2014} as we already have a notion of linearity on $\C$ given by the LNL (see def. \ref{def:PartialLinear}). Hence, the differentiation operation that will be defined needs to preserve this notion of linearity. This observation is what led us to the development of this section. \\

We call this new structure a Generalised Differential Seely Category (GDSC) as it is a generalization of DSC (see def. \ref{def:DSC}) is way that is similar to how Generalised Cartesian Differential Categories (GCDC) generalizes Cartesian Differential Categories (CDC) \cite{blute_cartesian_2009} \cite{cruttwell_cartesian_2017}. We expand more on this remark in the next section. \\

In the following, we assume that $\L$ is an additive category with product, which is necessary to have a DSC (def. \ref{def:DSC}). This makes the product into a biproduct denoted $\oplus$. In the same way as developed in the last section, this structure lifts to each fiber $\LSimp(\C)$.

\begin{prop}
\label{prop:LS_add}
If $\L$ is an additive monoidal category with biproduct, then each fiber of $\LSimp(\C)$ is an additive monoidal category with biproduct. The total category $\LSimp(\C)$ has products (as described in \ref{prop:ProductLS}) but is not additive and does not have biproducts. The reindexing functors, as well as $\Sigma$, preserve the additive structure. In each fiber, denote by $p_i$ the projections in $\LSimp(\C)$ and $p^X_i$ the projection in the fiber above $X$. Denote by $\iota^X_i$ the injection of the biproduct in each fiber.
\end{prop}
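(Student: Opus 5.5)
The plan is to work fibre by fibre, using the fact that a morphism of $\LSimp(\C)_X$ from $(X,A)$ to $(X,B)$ is the same thing as a morphism $u : \F(X) \tensor A \to B$ in $\L$. Composition in the fibre is $u ; v := \cont_X \tensor \id_A; \id_{\F(X)} \tensor u; v$, and identities are $\weak_X \tensor \id_A$.

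\emph{Step 1: enrichment.} I would define the enrichment over $\mathrm{CMon}$ directly on second components: $(\id_X,u) + (\id_X,u') := (\id_X, u+u')$, with zero morphism $(\id_X, 0)$. Each hom-set of the fibre is then a commutative monoid, since it is literally a hom-set $\L(\F(X)\tensor A, B)$. Bilinearity of fibre composition follows in two halves. Postcomposition by $v$ is additive because composition in $\L$ is. Precomposition is additive by the compatibility of $+$ and $0$ with $\tensor$ in Definition \ref{add_mon}: the term $\id_{\F(X)} \tensor u$ is additive in $u$. Compatibility with the fibrewise tensor $\tensor_X$ of Proposition \ref{tensor_LS} uses the same rule, since $(\id_X,u)\tensor_X(\id_X,v)$ is a fixed map followed by $u \tensor v$. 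Hence each fibre is an additive monoidal category.

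\emph{Step 2: biproducts.} Proposition \ref{prop:ProductLS} already gives fibrewise products $(X, A \oplus B)$, with projections $p^X_i = (\id_X, \weak_X \tensor p_i)$. I would take the injections to be $\iota^X_i = (\id_X, \weak_X \tensor \iota_i)$. The biproduct equations $\iota^X_i ; p^X_j = \delta_{ij}$ and $p^X_1;\iota^X_1 + p^X_2;\iota^X_2 = \id$ then reduce, via the counit law $\cont_X;(\weak_X\tensor\id)=\id$, to the corresponding equations in $\L$. This is because composing two maps of the form $\weak_X \tensor g$ in the fibre yields $\weak_X \tensor (g;g')$.

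\emph{Step 3: preservation by reindexing and $\Sigma$.} For $f : Y \to X$, reindexing acts by $f^*(\id_X,u) = (\id_Y, \F(f)\tensor \id_A; u)$. I would check this formula by computing $\overline f_A;(\id_X,u)$ and factoring it through $\overline f_B$. The formula is visibly additive and preserves $0$. The same argument applies to $\Sigma_{(X,J)}$ in Definition \ref{linear_sigma_types}, which is a fixed map followed by $\id_{\F(J)} \tensor (\ldots ; u)$; it is additive by the tensor compatibility rule. Since additive functors between additive categories preserve finite biproducts, both preserve the biproduct structure.

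\emph{Step 4: the negative claims.} For the total category, the issue is the zero object. A zero object would have to be $(I,0)$, but morphisms $(X,A)\to(I,0)$ and $(I,0)\to(X,A)$ include a component $I \to X$ in $\C$. In general no such map exists (for instance $X$ empty in $\mathrm{Set}$), and when one exists it need not be unique. The hom-sets of $\LSimp(\C)$ also cannot be commutative monoids compatibly, because their first components are bare $\C$-morphisms carrying no additive structure. The $\mathrm{Set}/\mathrm{Vect}$ example gives a concrete counterexample: the coproduct in $\Simp$-style pairs would require a coproduct in the first component that coincides with $\times$. I expect this step, making ``not additive'' precise enough to refute, to be the main subtlety rather than any computation. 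I would settle it by exhibiting this example.
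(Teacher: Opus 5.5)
Your proposal is correct. The paper states this proposition without proof. It relies on the remark after Remark~\ref{rem:LSIisoL} that the omitted proofs are ``a matter of verifying that in each fibre everything behaves as in the LNL'', and your Steps 1--3 are exactly that routine verification. The key computations are right:
\begin{itemize}
\item your reindexing formula $f^*(\id_X,u)=(\id_Y,\F(f)\tensor\id_A;u)$ is correct;
\item the assignment $g\mapsto(\id_X,\weak_X\tensor g)$ is an additive functor $\L\to\LSimp(\C)_X$. It is $(I_X)^*$ composed with $\L\cong\LSimp(\C)_I$, and it transports the biproduct equations.
\end{itemize}
For the nullary case, note that in a $\mathrm{CMon}$-enriched category a terminal object such as $(X,\top)$ is automatically a zero object.

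Step 4 should be tightened in two places. First, saying the hom-sets cannot be commutative monoids ``because their first components carry no additive structure'' does not prove anything by itself. Second, a morphism $(X,A)\to(I,0)$ has first component $X\to I$, which always exists and is unique; only morphisms $(I,0)\to(X,A)$ need a map $I\to X$.

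Your $\mathrm{Set}/\mathrm{Vect}$ example already gives a clean argument. The hom-set $\LSimp(\mathrm{Set})((1,V),(\emptyset,W))$ is empty, whereas every hom-set of a $\mathrm{CMon}$-enriched category contains $0$, so the total category is not additive. Its terminal object $(1,0)$ is not initial, so there is no zero object and hence no finite biproducts. Binary biproducts fail too, since an injection $(1,V)\to(1\times\emptyset,V\oplus W)$ would need a function $1\to\emptyset$. This also makes precise that the negative claim means ``not in general'', which is exactly what a counterexample establishes.
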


In the previous section, we pushed the idea that the fibration $\lsimp$ is a primitive notion of a trivial vector bundle fibration on $\C$. Taking this idea to heart, requiring the existence of a functor from $\C$ to $\LSimp(\C)$ should allow us to express differentiation. Let's call this functor a linear tangent functor and denote it $\Tan$.

\begin{center}
\begin{tikzcd}[column sep=4em, row sep=4em]
\Simp(\C) \arrow[dr, "\simp"'] \arrow[rr, "\F^\Simp"{name=U}, bend left=8, , shift left=0.5]  &  & \LSimp(\C)  \arrow[ll, "\U^\Simp"{name=D}, bend left=8, shift left=0.5]  \arrow[dl, "\lsimp", shift left= 1] \arrow[phantom, from=U, to=D, "\dashv" rotate=-90]  \\
& \C  \arrow[ur, "\Tan",shift left=1] & 
\end{tikzcd}
\end{center}

Asking for its existence is not enough; we require some additional axioms. Taking inspiration from the work of G. Cruttwell, J. Gallagher, JS. P. Lemay and D. Pronk \cite{cruttwell_monoidal_2022} and from geometric intuitions, it is not hard to deduce the first two axioms needed:

\begin{enumerate}[{[t.1]}]
\item $\Tan$ is a product perserving section of $\lsimp$. 
\item $\Tan(\U(A)) = (\U(A), A)$ 
\end{enumerate}

Asking that $\Tan$ is a section of $\lsimp$ forces $\Tan(X)$ to be of shape $(X, \lambda(X))$ for some $\lambda(X)$ in $\L$. This corresponds to the trivial vector bundle intuition, $(X, \lambda(X))$ is a trivial vector bundle above $X$. Since $\lsimp$ is a cartesian monoidal category (\ref{def:MonoidalFibration}, \ref{prop:ProductLS}), asking for compatibility between the product in $\C$ and $\LSimp(\C)$ seems natural as well. This means that the canonical morphisms $\varphi_{X,Y} : \Tan(X \times Y) \to \Tan(X) \with \Tan(Y)$ is an isomorphism; moreover, it is vertical. 

\begin{prop}
\label{prop:PhiVertical}
As a consequence of $\Tan$ being a section of $\lsimp$, $\varphi_{X,Y}$ is a vertical morphism above $X \times Y$.
\end{prop}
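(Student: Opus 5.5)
We must show that $\lsimp(\varphi_{X,Y}) = \id_{X \times Y}$. The plan is to compute the images under $\lsimp$ of the source and target of $\varphi_{X,Y}$, and then use that $\varphi_{X,Y}$ is the mediating morphism of a product in $\LSimp(\C)$.

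First, by [t.1], $\Tan$ is a section of $\lsimp$, so $\lsimp(\Tan(X \times Y)) = X \times Y$. Moreover, by Proposition \ref{prop:ProductLS} the cartesian structure of $\LSimp(\C)$ is $(X,A) \with (Y,B) = (X \times Y, A \with B)$. Hence $\lsimp(\Tan(X) \with \Tan(Y)) = X \times Y$, and the projections satisfy $\lsimp(p_i) = \pi_i$. The same proposition shows that $\lsimp$ strictly preserves products.

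Next, recall that $\varphi_{X,Y}$ is by definition the canonical morphism $\langle \Tan(\pi_1), \Tan(\pi_2) \rangle$ into the product in $\LSimp(\C)$. It is characterised by $\varphi_{X,Y}; p_i = \Tan(\pi_i)$.

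Apply $\lsimp$ to this equation. Functoriality gives $\lsimp(\varphi_{X,Y}); \lsimp(p_i) = \lsimp(\Tan(\pi_i))$. The left side is $\lsimp(\varphi_{X,Y}); \pi_i$, and since $\Tan$ is a section, the right side is $\pi_i$. The universal property of the product $X \times Y$ in $\C$ then forces $\lsimp(\varphi_{X,Y}) = \langle \pi_1, \pi_2 \rangle = \id_{X \times Y}$.

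One can also see this concretely: a morphism in $\LSimp(\C)$ into $(X \times Y, A \with B)$ is a pair whose first component is obtained by pairing first components. Here those first components are $\pi_1$ and $\pi_2$.

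The only step needing care is identifying the projections of the product in the total category $\LSimp(\C)$ and checking that they lie over the projections $\pi_i$ in $\C$. This follows directly from the explicit description in Proposition \ref{prop:ProductLS}, where the first component of each projection is $\pi_i$.
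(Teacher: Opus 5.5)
Your proof is correct and takes essentially the same route as the paper: both combine the section property ($\Tan(\pi_i)$ lies over $\pi_i$) with the explicit product structure of $\LSimp(\C)$ from Proposition \ref{prop:ProductLS}. The paper reads off the first component $\langle \pi_1, \pi_2 \rangle = \id_{X \times Y}$ directly, exactly as in your concrete remark, and your version via functoriality of $\lsimp$ and the universal property of $X \times Y$ in $\C$ is a repackaging of that same computation.
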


\begin{proof}
Explicitely, $\varphi_{X,Y} = \langle \Tan(\pi_1), \Tan(\pi_2) \rangle$. Since $\Tan$ is a section of $\lsimp$, for $i = 1,2$, $\Tan(\pi_i) = (\pi_i, \lambda(\pi_i))$. By definition of the product in $\LSimp(\C)$ (\ref{prop:ProductLS}) the first component of $\varphi_{X,Y}$ is $\id_{X \times Y}$ hence, $\varphi$ is vertical. 
\end{proof}

The second axiom requires that the tangent vector bundle of $\U(A)$ for $A \in \L$ is $(\U(A), A)$. This is the LL equivalent of the tangent vector bundle of a vector space $V$ being $V \times V$. In our case, the distinction $\U(A)$ and $A$ indicates that the left component is used as a point (manifold) and the right component as a vector. This is also the same as asking $\U(A)$ to be a differential object \cite{cockett_differential_2014}.\\
These two axioms are still not enough to express differentiation as in DiLL. We need a third axiom that allows us to compute the derivative of partial linear maps. In differential geometry, let $M$ be a manifold, $V, W$ two vector spaces, and $f : M \times V \to W$ a smooth map linear in its second argument (ie $\forall x \in M, \ f(x, -) :  V \to W$ is a linear map), then,

\begin{equation*}
D_{(x,a)} f(0, v) = f(x,v)
\end{equation*}
We need to express this equality in our setting. To do so, we need two other definitions:

\begin{defi}[Weakening Functor \cite{jacobs_categorical_1999}]
\label{def:WeakFunctor}
Let $W : \LSimp(\C) \to \LSimp(\C)$ be the functor defined as follows:
\begin{itemize} 
\item On objects : $W(X, A) :=  (X \times \U(A), A) =(\banana{X,A}, A)$.
\item On morphisms : let $(f,u) : (X,A) \to (Y,B)$, $W(f,u) := (\banana{f,u}, \F(\pi_1) \tensor \id_A; u)$.
\end{itemize}  
$W$ is called the weakening functor of $\LSimp(\C)$. 
\end{defi} 

This construction exists in any comprehension category \ref{def:Comprehension} and expresses the logical weakening of assumptions. Looking at its definition, we see that $\U(A)$ is added on the right-hand side and then discarded when computing $W(f,u)$. 

\begin{defi}
\label{def:I}
Define the familly of morphisms $i^{X,Y}_2 : (X \times Y, \lambda(Y)) \to (X \times Y, \lambda(X \times Y))$ as
\begin{center}
\begin{tikzcd}[column sep=4em, row sep=4em]
i^{X,Y}_2 := (X \times Y, \lambda(Y))  \arrow[r, "\iota^{X \times Y}_2"] & (X \times Y, \lambda(X) \oplus \lambda(Y)) = \Tan(X) \with \Tan(Y) \arrow[d, "\varphi^{-1}"']  \\
& (X \times Y, \lambda(X \times Y)) = \Tan(X \times Y)
\end{tikzcd}
\end{center}
Similarly, define the familly $i^{X,Y}_1 : (X \times Y, \lambda(X)) \to (X \times Y, \lambda(X \times Y))$ as $i^{X,Y}_1 := \iota^{X \times Y}_1; \varphi^{-1}$. Denote $i^{X,A}$ for $i^{X, \U(A)}$. Notice that since $\varphi$ is vectical (\ref{prop:PhiVertical}), $i^{X,Y}$ is vectical above $X \times Y$. 
\end{defi}

Notice that using axiom $t.2$ we get that $i^{X,A}_2$ is of type $(X \times \U(A), A) = W(X,A) \to \Tan \banana{X,A}$ since $t.2$ forces $\lambda(\U(A))$ to be $A$. These two familly of morphisms will allow us to express partial differential in $\LSimp(\C)$, this will lead to the third and final axiom needed. 

\begin{defi}[Partial Tangent]
\label{def:PartialTangent}
Let $f : X \times Y \to Z$ in $\C$, denote $\Tan_2(f) := i^{X, Y}_2; \Tan(f) : (X \times Y, \lambda(Y)) \to (Z, \lambda(Z)) = \Tan(Z)$ and $\Tan_1(f) := i^{X, Y}_1; \Tan(f): (X \times Y, \lambda(X)) \to (Z, \lambda(Z)) = \Tan(Z)$ the partial tangent of $f$ respectively in its second and first component. 
\end{defi}

\begin{enumerate}[{[t.3]}]
\item Partial Linearity Rule: the familly $i^{X,A}_2$ for all $X \in \C$ and $A \in \L$ defines a natural transformation from the weakening functor $W$ to $\Tan \banana{-}$. For all $(f,u) : (X,A) \to (Y,B)$ in $\LSimp(\C)$
\begin{center}
\begin{tikzcd}[column sep=4em, row sep=4em]
W(X,A) \arrow[r, "{i^{X,A}_2}"] \arrow[d, "{W(f,u)}"']  & \Tan \banana{X,A} \arrow[d, "{\Tan \banana{f,u}}"] \\
W(Y,B) \arrow[r, "{i^{Y,B}_2}"] & \Tan \banana{Y,B}
\end{tikzcd}
\end{center}
\end{enumerate}

In other words, this axiom states that for all $(f,u) : (X,A) \to (Y,B)$ in $\LSimp(\C)$, $\Tan_2 \banana{f,u} = W(f,u); i^{Y,B}_2$. Later, we show how to explicitly calculate the partial differential of a partial linear map (\ref{def:PartialLinear}). We now give the main definition of this paper.

\begin{defi}[Generalised Differential Seely Category]
\label{def:GDSC}
A Generalised Differential Seely Category (GDSC) is an additive linear-non-linear adjunction with the data of a functor $\Tan : \C \to \LSimp(\C)$, called a linear tangent functor, that satisfies the axioms:
\begin{enumerate}[{[t.1]}]
\item $\Tan$ is a product perserving section of $\lsimp$. 
\item $\Tan(\U(A)) = (\U(A), A)$ 
\item Partial Linearity Rule: the familly $i^{X,A}_2$ for all $X \in \C$ and $A \in \L$ defines a natural transformation from the weakening functor $W$ to $\Tan \banana{-}$.
\end{enumerate}
\begin{center}
\begin{tikzcd}[column sep=4em, row sep=4em]
\Simp(\C) \arrow[dr, "\simp"'] \arrow[rr, "\F^\Simp"{name=U}, bend left=8, , shift left=0.5]  &  & \LSimp(\C)  \arrow[ll, "\U^\Simp"{name=D}, bend left=8, shift left=0.5]  \arrow[dl, "\lsimp", shift left= 1] \arrow[phantom, from=U, to=D, "\dashv" rotate=-90]  \\
& \C  \arrow[ur, "\Tan",shift left=1] & 
\end{tikzcd}
\end{center}
\end{defi}

The goal of the rest of this section is to show that in any GDSC, each fiber $\LSimp(\C)_X$ is a DSC (def. \ref{def:DSC}) and that any DSC defines a GDSC. This first result implies that $\L$ is a DSC, using the isomorphism between $\LSimp(\C)_I$ and $\L$ (see remark \ref{rem:LSIisoL}). We start by showing that any DSC defines a GDSC as it is a straightforward proof and helps fix intuition. 

\begin{thm}
\label{theo_2}
Let $\L$ be a differential Seely category, then the linear-non-linear adjunction between $\L$ and $\L_\oc$, its Kleisli category, is a generalised differential Seely category.
\end{thm}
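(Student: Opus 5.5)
We set up the LNL concretely: $\C = \L_\oc$, $\F$ is the free functor sending $A$ to $\oc A$ and a Kleisli map $f : \oc A \to B$ to $\dig_A; \oc f$, and $\U$ is the forgetful functor sending $A$ to $A$ and $u : A \to B$ to $\dere_A; u$. Since $\L$ is a Seely category, $\L_\oc$ is cartesian with product $\with$, and the adjunction is symmetric monoidal via the Seely isomorphisms $\oc(A \with B) \simeq \oc A \tensor \oc B$. So $m$ is the Seely isomorphism, and additivity of $\L$ is given. A morphism $(f,u) : (X,A) \to (Y,B)$ of $\LSimp(\L_\oc)$ is then a pair $f : \oc X \to Y$ and $u : \oc X \tensor A \to B$. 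This is essentially the construction $\L_\oc[X]$ of \cite{cruttwell_monoidal_2022}.

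We define the linear tangent functor by $\Tan(X) := (X, X)$ and, for $f : \oc X \to Y$, $\Tan(f) := (f, \partial_X; f) : (X,X) \to (Y,Y)$. The second component has the required type $\oc X \tensor X \to Y$. Axiom [t.2] holds by construction, since $\U(A) = A$ and $\Tan(A) = (A,A)$, and $\Tan$ is visibly a section of $\lsimp$.

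Next we check functoriality. For identities, the identity of $X$ in $\L_\oc$ is $\dere_X$, and the linear rule [d.3] gives $\partial_X;\dere_X = \weak_X \tensor \id_X$, which is exactly $\id_{(X,X)}$ in $\LSimp$. For composition, the Kleisli composite of $f$ and $g$ is $\dig_X;\oc f; g$. Using the chain rule [d.4] and naturality of $\partial$ and $\cont$, we get $\partial_X;\dig_X;\oc f;g = \cont_X\tensor \id_X; (\dig_X;\oc f)\tensor(\partial_X;f);\partial_Y; g$. This is precisely the second component of $\Tan(f);\Tan(g)$, namely $\cont_X\tensor\id;\F(f)\tensor(\partial_X;f);\partial_Y;g$ with $\F(f)=\dig_X;\oc f$.

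For product preservation in [t.1], note that $\varphi_{X,Y} = \langle \Tan(\pi_1),\Tan(\pi_2)\rangle$ has second component $\langle \partial;\pi_1, \partial;\pi_2\rangle$, where $\pi_i = \dere;\pi_i$ in $\L_\oc$. By naturality of $\partial$ together with [d.3], this reduces to $\weak\tensor\id_{X\with Y}$ followed by the canonical identification $X \with Y = X \oplus Y$, i.e. an identity, so $\varphi$ is an isomorphism. Preservation of the terminal object is similar, using [d.1].

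The main obstacle is [t.3], and we expect it to need the Leibniz rule [d.2] in its $\tensor$ form together with the Seely isomorphism. The map $i^{X,A}_2$ has second component $\weak_{X\with A}\tensor \iota_2 : \oc(X\with A)\tensor A \to X\with A$. For $(f,u)$, the map $\banana{f,u}$ is the Kleisli map $\oc(X\with A)\to Y\with B$ equal to $\langle \oc\pi_1;f,\ m^{-1};(\oc\pi_1\tensor \dere\pi_2);u\rangle$. We must show $\iota_2;\partial_{X\with A};\banana{f,u}$ agrees with $\F(\pi_1)\tensor\id_A;u;\iota_2$ after restricting to the input $\oc(X\with A)\tensor A$. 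The strategy has three steps:
\begin{enumerate}
\item Split $\partial$ across the biproduct. Naturality of $\partial$ plus the Seely isomorphism should give the monoidal rule $\partial_{X\with A}$ restricted along $\iota_2$ equals $m^{-1}$ followed by $\oc X\tensor \partial_A$ up to symmetry.
\item Eliminate the first component. For $\oc\pi_1;f$, the contribution vanishes because $\oc\pi_1$ kills the $A$-direction; this follows from the constant rule [d.1] after separating the variables.
\item Compute the second component. Here the linear rule [d.3] gives $\partial_A;\dere_A = \weak_A\tensor\id_A$, leaving exactly $\oc\pi_1\tensor\id_A;u$.
\end{enumerate}
The delicate point is step 1, deriving the "derivative of a product of variables" from [d.2] and the Seely iso. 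If it proves cumbersome, we would instead invoke the known equivalence between DSCs and monoidal differential categories with the monoidal rule \cite{blute_differential_2020}.
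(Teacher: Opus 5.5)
Your proposal is correct and is essentially the paper's proof. It uses the same $\Tan(f) = (f, \partial_X; f)$, the linear rule for identities and the chain rule for composition, and for [t.3] the same three ingredients: the Seely-isomorphism description of the second component of $\banana{f,u}$ (Lemma~\ref{lem:a1}), the Leibniz-rule computation of $\partial_{X \oplus A}; s_{X,A}$ restricted along $\iota_2$ (Lemma~\ref{lem:difSeely}), and a final use of [d.3]. Your step~2 also makes explicit the vanishing $Y$-component, which the paper leaves implicit.

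One correction: your formula $m^{-1};(\oc\pi_1 \tensor \dere\pi_2)$ does not typecheck. It should read $m^{-1};(\id_{\oc X} \tensor \dere_A)$, equivalently $\cont_{X \with A};(\oc\pi_1 \tensor (\dere_{X \with A};\pi_2))$. This identity is not immediate; it needs the short computation of Lemma~\ref{lem:a1}.
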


Before giving the full proof, we explain the idea behind it by describing the construction of the linear tangent functor. The linear simple category of such an LNL is described as follows:
\begin{itemize}
\item Objects: pairs $(A,B)$ with $A,B \in \L$.
\item Morphisms: a morphism from $(A,B)$ to $(A',B')$ is a pair $(f,u)$ with $f: \oc A \to A'$ in $\L$ and $u : \oc A \tensor B \to B'$ in $\L$.
\end{itemize}
To define the linear tangent functor, it suffices to define for each $f : \oc A \to B$ a morphism $u : \oc A \tensor A \to B$. Indeed, axiom t.2 forces $\Tan(A)$ to be of shape $(A,A)$ and axiom t.1 forces $\Tan(f)$ to be of shape $(f,u) : (A,A) \to (B,B)$ for some $u$. The differential of a morphism of type $\oc A \to B$ is given by postcomposition with the deriving transform $\partial_A : \oc A \tensor A \to \oc A$. The natural choice is to define $\Tan(f)$ as $(f, \partial_A; f)$.
\begin{center}
\begin{tikzcd}[column sep=4em, row sep=4em]
\Simp(\L_\oc) \arrow[dr, "\simp"'] \arrow[rr, "\F^\Simp"{name=U}, bend left=8, , shift left=0.5]  &  & \LSimp(\L_\oc)  \arrow[ll, "\U^\Simp"{name=D}, bend left=8, shift left=0.5]  \arrow[dl, "\lsimp", shift left= 1] \arrow[phantom, from=U, to=D, "\dashv" rotate=-90]  & (f, \partial_A; f) : (A,A) \to (B,B) \\
& \L_\oc \arrow[ur, "\Tan",shift left=1] & & f : \oc A \to B \arrow[u, "\Tan"]
\end{tikzcd}
\end{center}
The linear rule ensures that $\Tan$ sends the identity to the identity, and the chain rule ensures that it sends composition to composition; it is indeed a functor. By construction, it is a section of $\lsimp$, and it is easy to see that it preserves products. It is straightforward, looking at its construction, to see $\Tan$ that satisfies axioms t.1 and t.2. Showing that it satisfies t.3 is more work-intensive, as it requires a lengthy computation. One must show that the following equation holds for t.3 to be valid. 
\begin{equation*}
id_{\oc (A \oplus B)} \tensor \iota_2; \partial_{A \oplus B}; \dig_{A \oplus B}; \oc (\langle \oc \pi_1, \oc \pi_2 \rangle; \id_{\oc A} \oplus \dere_B); s_{\oc A,B}; \dere_{\oc A} \tensor \dere_B = \oc \pi_1 \tensor \id_B
\end{equation*}
where $\dig_{A \oplus B}; \oc (\langle \oc \pi_1, \oc \pi_2 \rangle; \id_{\oc A} \oplus \dere_B); s_{\oc A,B}; \dere_{\oc A} \tensor \dere_B$ is the explicit description of the morphism $\eta_{A} \times \id_B; n_{\oc A, B}$ in the Kleisli category. This is not hard and only requires brutal computation. We split the computation into multiple lemmas.

\begin{lem}
\label{lem:a1}
Let $(f,u) : (A,B) \to (A',B')$ be a morphism in $\LSimp(\L_\oc)$, then 
\begin{equation*}
\banana{f,u}; \pi_2 = s_{A,B}; \id_{\oc A} \tensor \dere_B; u
\end{equation*}
\end{lem}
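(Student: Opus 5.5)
We unfold the definitions in the Kleisli LNL and compute directly. In this adjunction the cartesian category is $\L_\oc$ with product $\with$ (written $\times$ on the $\C$ side). $\U$ is the identity on objects and sends a linear map $u : B \to B'$ to $\dere_B; u$. $\F$ is the identity on objects and sends a Kleisli map $g : \oc A \to A'$ to $\dig_A; \oc g$. The unit $\eta_A : A \to \U\F A$ is the Kleisli identity, i.e. $\dere_A : \oc A \to A$. The monoidal structure $m$ on $\F$ is the Seely isomorphism, and $n_{A,B} : \oc(A \with B) \to A \tensor B$ is $s_{A,B}; \dere_A \tensor \dere_B$, as in Proposition \ref{lax-colax_duality}. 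With these identifications, the comprehension $\banana{f,u} = \set{\U^\Simp(f,u)}$ is the Kleisli pairing $\langle \pi_1; f, v\rangle$, where $v$ is the second component of $\U^\Simp(f,u)$ from Proposition \ref{U_simp_func}. Since Kleisli composition with the projection $\pi_2$ (which is $\dere_{A \with B}; \pi_2$) just extracts the second component of a pairing, the first step is to show
\begin{equation*}
\banana{f,u};\pi_2 = v = \eta_A \times \id_{\U(B)}; n_{\oc A, B}; \U(u),
\end{equation*}
with every composition read in $\L_\oc$.

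The main step is to compute this Kleisli composite as a morphism $\oc(A \with B) \to B'$ in $\L$. I will translate each piece in turn.
\begin{itemize}
\item $\eta_A \times \id_B$ in $\L_\oc$ is the pairing $\langle \oc\pi_1, \dere_{A\with B};\pi_2\rangle : \oc(A\with B) \to \oc A \with B$.
\item Kleisli-composing it with $n_{\oc A,B} = s_{\oc A,B}; \dere_{\oc A}\tensor\dere_B$ gives $\dig_{A\with B}; \oc\langle \oc\pi_1, \dere;\pi_2\rangle; s_{\oc A,B}; \dere_{\oc A}\tensor \dere_B$. This is exactly the explicit description quoted before the lemma.
\item Finally, $\U(u) = \dere; u$ composes in Kleisli style, and its $\dig$ and $\dere$ cancel by the comonad law, leaving a plain postcomposition with $u$.
\end{itemize}

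To reach the target $s_{A,B}; \id_{\oc A} \tensor \dere_B; u$, I will use naturality of the Seely isomorphism $s$ to pull $\oc\langle \oc\pi_1,\dere;\pi_2\rangle$ through it. This yields $\oc\oc\pi_1 \tensor \oc(\dere;\pi_2)$ after $s_{\oc(A\with B),\oc(A \with B)}$-type rewriting, or equivalently one can write the pairing as $\Delta; (\oc\pi_1 \with (\dere;\pi_2))$. I then use the compatibility of $s$ with $\dig$: the Seely isomorphism is a comonoidal-comonad morphism, so $\dig; \oc(\text{pair}); s; \dere\tensor\dere$ simplifies. Concretely:
\begin{itemize}
\item on the left factor, $\dig; \oc\oc\pi_1; \dere_{\oc A} = \dig;\dere;\oc\pi_1 = \oc\pi_1$;
\item on the right factor, $\dig;\oc(\dere;\pi_2);\dere_B = \dere;\pi_2$.
\end{itemize}
This gives $s_{A,B}$ followed by $\id_{\oc A}\tensor(\dere_B)$, modulo the identification $\oc(A\with B) \cong \oc A\tensor\oc B$. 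Composing with $u$ then gives the claim.

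The hardest part is the middle step. It requires the precise coherence between the Seely isomorphism, contraction, and $\dig$, namely that $m$ is the Seely map and the way $\dig$ interacts with $s$ and the diagonal. I would first prove a small auxiliary identity,
\begin{equation*}
\dig_{A\with B}; \oc\langle \oc\pi_1, \dere;\pi_2\rangle; s_{\oc A,B}; \dere_{\oc A}\tensor\dere_B = s_{A,B}; \id_{\oc A}\tensor \dere_B,
\end{equation*}
by precomposing with $s^{-1}$ and checking it on each tensor factor using weakening and counit laws. The lemma then follows by a one-line substitution.
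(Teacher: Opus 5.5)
Your proposal is correct and is essentially the paper's proof. Both unfold $\eta_A \times \id_B; n_{\oc A, B}$ in $\L_\oc$ and turn the Seely map into a contraction: you via naturality of $s$ and $\oc\Delta; s = \cont$, the paper via $s = \cont; \oc\pi_1 \tensor \oc\pi_2$ and naturality of $\cont$. Both then push $\dig_{A \with B}$ through the contraction as $\cont; \dig \tensor \dig$, which is the precise form of the ``compatibility of $s$ with $\dig$'' that your factorwise step relies on, and finish with the counit laws.

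Two minor slips. First, $\eta_A$ is $\id_{\oc A}$ viewed as a Kleisli map $A \to \oc A$, not $\dere_A$; your computation already uses the correct one. Second, ``checking on each tensor factor'' after precomposing with $s^{-1}$ is not a valid method on its own. It is not needed, though, since your preceding computation already proves the auxiliary identity.
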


\begin{proof}
Let $(f ,u) : (A ,B) \to (A' ,B') \in \LSimp(\L_\oc)$. We know that $\banana{f,u}; \pi_2$ is the morphism $\eta_{A} \times \id_{B_\oc}; n_{\oc A, B}; \U(u)$ in the Kleilsi. The unit $\eta_{A}$ is simply the identity $\id_{\oc A} : \oc A \to \oc A$ and the monoidal structure of $\U$ is given by the Seely isomorphisms $s_{A,B} : \oc (A \oplus B) \to \oc A \tensor \oc B$ composed with $\dere_A \tensor \dere_B$. Let's compute $\eta_{A} \times \id_B; n_{\oc A, B}$.
\begin{align*}
\eta_{A} \times \id_B; n_{\oc A, B} &=  \dig_{A \oplus B}; \oc (\langle \oc \pi_1, \oc \pi_2 \rangle; \id_{\oc A} \oplus \dere_B); s_{\oc A,B}; \dere_{\oc A} \tensor \dere_B \\
&= \dig_{A \oplus B}; \oc \langle \oc \pi_1, \oc \pi_2 \rangle; \oc (\id_{\oc A} \oplus \dere_B); s_{\oc A,B}; \dere_{\oc A} \tensor \dere_B \\
&= \dig_{A \oplus B}; \oc \langle \oc \pi_1, \oc \pi_2 \rangle; s_{\oc A, \oc B}; \oc \id_{\oc A} \tensor \oc \dere_B; \dere_{\oc A} \tensor \dere_B \\
&= \dig_{A \oplus B}; \oc \langle \oc \pi_1, \oc \pi_2 \rangle; s_{\oc A, \oc B}; \dere_{\oc A}  \tensor \dere_{\oc B}; \id_{\oc A} \tensor \dere_B \\
&= \dig_{A \oplus B}; \oc \langle \oc \pi_1, \oc \pi_2 \rangle; \cont_{\oc A \oplus \oc B}; \oc \pi_1 \tensor \oc \pi_2; \dere_{\oc A}  \tensor \dere_{\oc B}; \id_{\oc A} \tensor \dere_B \\
&= \dig_{A \oplus B}; \cont_{\oc (A \oplus B)}; \oc \langle \oc \pi_1, \oc \pi_2 \rangle \tensor \oc \langle \oc \pi_1, \oc \pi_2 \rangle; \oc \pi_1 \tensor \oc \pi_2; \dere_{\oc A}  \tensor \dere_{\oc B}; \id_{\oc A} \tensor \dere_B \\
&= \dig_{A \oplus B}; \cont_{\oc (A \oplus B)}; \oc \oc  \pi_1 \tensor \oc \oc \pi_2; \dere_{\oc A}  \tensor \dere_{\oc B}; \id_{\oc A} \tensor \dere_B \\
&= \dig_{A \oplus B}; \cont_{\oc (A \oplus B)}; \dere_{\oc (A \oplus B)}  \tensor  \dere_{\oc (A \oplus B)}; \oc  \pi_1 \tensor \oc \pi_2; \id_{\oc A} \tensor \dere_B \\
&= \cont_{A \oplus B}; \dig_{A \oplus B} \tensor  \dig_{A \oplus B}; \dere_{\oc (A \oplus B)}  \tensor  \dere_{\oc (A \oplus B)}; \oc  \pi_1 \tensor \oc \pi_2; \id_{\oc A} \tensor \dere_B \\
&= \cont_{A \oplus B}; \oc  \pi_1 \tensor \oc \pi_2; \id_{\oc A} \tensor \dere_B \\
&= s_{A,B}; \id_{\oc A} \tensor \dere_B
\end{align*}
Hence, we obtain the desired result.
\end{proof}

To prove t.3 we need to show that the partial differential of $\eta_{A_\oc} \times \id_{B_\oc}; n_{\oc A, B}; \U(u)$ is $\pi_1^*(\id_A; u)$. Concretely, we need to show that the following holds
\begin{equation*}
\id_{\oc (A \oplus B)} \tensor \iota_2; \partial_{A \oplus B}; s_{A,B}; \id_{\oc A} \tensor \dere_B; u = \oc \pi_1 \tensor \id_B; u
\end{equation*}
We will show that the following equation holds, which implies that the required one holds. 
\begin{equation*}
\id_{\oc (A \oplus B)} \tensor \iota_2; \partial_{A \oplus B}; \id_{\oc (A \oplus B)} \tensor \iota_2; \partial_{A \oplus B}; s_{A,B}; \id_{\oc A} \tensor \dere_B = \oc \pi_1 \tensor \id_B
\end{equation*}

\begin{lem}
\label{lem:difSeely}
Let $A,B \in \L$, then the differential of the Seely isomorphism is 
\begin{equation*}
\partial_{A \oplus B}; s_{A,B} = [s_{A,B} \tensor \pi_2; \id_{\oc A} \tensor \partial_{B}] + [s_{A,B} \tensor  \pi_1; \id_{\oc A} \tensor \sigma_{\oc B, A}; \partial_{A} \tensor \id_{\oc B}]
\end{equation*}
\end{lem}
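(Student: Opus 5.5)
The plan is to use the explicit form of the Seely isomorphism in a Seely category, $s_{A,B} = \cont_{A \oplus B}; \oc \pi_1 \tensor \oc \pi_2$. This form was already used in the last line of the proof of Lemma \ref{lem:a1}. With it, the statement reduces to the Leibniz rule [d.2] for $\partial_{A\oplus B}$ together with naturality of the deriving transform $\partial$. No chain rule or linear rule should be needed.

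First, apply [d.2] at the object $A \oplus B$. This rewrites
\begin{equation*}
\partial_{A \oplus B}; \cont_{A \oplus B} = \cont_{A\oplus B} \tensor \id_{A \oplus B}; \big[(\id \tensor \partial_{A \oplus B}) + (\id \tensor \sigma; \partial_{A\oplus B} \tensor \id)\big].
\end{equation*}
Postcomposing with $\oc \pi_1 \tensor \oc \pi_2$ and using the additive monoidal structure (Definition \ref{add_mon}), the composite distributes over the sum. This gives two summands: in the first, $\partial$ acts on the right copy of $\oc(A\oplus B)$ and is then followed by $\oc\pi_2$; in the second, $\partial$ acts on the left copy and is then followed by $\oc \pi_1$.

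Second, use naturality of $\partial$: for any linear $g : C \to D$ we have $\partial_C; \oc g = \oc g \tensor g; \partial_D$. Applied to $g = \pi_2$, this turns $\partial_{A\oplus B}; \oc \pi_2$ into $\oc \pi_2 \tensor \pi_2; \partial_B$. Applied to $g = \pi_1$, it turns $\partial_{A\oplus B};\oc\pi_1$ into $\oc\pi_1 \tensor \pi_1; \partial_A$.

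Third, regroup. In the first summand we now have $\cont_{A\oplus B}\tensor \id; \oc\pi_1 \tensor \oc \pi_2 \tensor \pi_2; \id_{\oc A} \tensor \partial_B$. The first part is exactly $s_{A,B}\tensor \pi_2$, which yields the first bracket. In the second summand, the symmetry $\sigma$ must be commuted past $\oc\pi_1 \tensor \oc\pi_2 \tensor \pi_1$ by naturality of $\sigma$. This reorganises the term into $s_{A,B} \tensor \pi_1; \id_{\oc A}\tensor \sigma_{\oc B, A}; \partial_A \tensor \id_{\oc B}$, which is the second bracket. Strictness of the monoidal structure lets us ignore associators.

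The main obstacle is the bookkeeping of symmetries in the second summand. One has to check that the $\sigma$ from the Leibniz rule, moved through the projections, becomes exactly $\id_{\oc A} \tensor \sigma_{\oc B, A}$ placed before $\partial_A \tensor \id_{\oc B}$. I would handle this with a string-diagram-style check of where each wire goes, relying only on naturality and the coherence of $\sigma$. I also need naturality of $\partial$ with respect to linear maps. If it is not taken as part of ``natural transformation'' in Definition \ref{def:DSC}, I would note that it is standard for differential categories \cite{blute_differential_2006}.
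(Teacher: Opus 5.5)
Your proposal is correct and follows the paper's proof essentially step for step: write $s_{A,B}$ as $\cont_{A\oplus B}; \oc\pi_1\tensor\oc\pi_2$, apply the Leibniz rule at $A\oplus B$, distribute over the sum, use naturality of $\partial$ along $\pi_1$ and $\pi_2$, and regroup the second summand by naturality of $\sigma$. Your symmetry bookkeeping is the one the paper performs silently in its last rewriting step, namely $\id\tensor\sigma_{\oc(A\oplus B),A\oplus B}; \oc\pi_1\tensor\pi_1\tensor\oc\pi_2 = \oc\pi_1\tensor\oc\pi_2\tensor\pi_1; \id_{\oc A}\tensor\sigma_{\oc B,A}$.
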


\begin{proof}
\begin{align*}
\partial_{A \oplus B}; s_{A,B} &= \partial_{A \oplus B}; \cont_{A \oplus B}; \oc \pi_1 \tensor \oc \pi_2 \\
&= \cont_{A \oplus B} \tensor \id_{A \oplus B}; \\ 
&[\id_{\oc (A \oplus B)} \tensor \partial_{A \oplus B}] + [\id_{\oc (A \oplus B)} \tensor \sigma_{\oc (A \oplus B), A \oplus B}; \partial_{A \oplus B} \tensor \id_{\oc (A \oplus B)}]; \oc \pi_1 \tensor \oc \pi_2 \\
&= \cont_{A \oplus B} \tensor \id_{A \oplus B}; [\oc \pi_1 \tensor (\partial_{A \oplus B}; \oc \pi_2)] + [\id_{\oc (A \oplus B)} \tensor \sigma_{\oc (A \oplus B), A \oplus B}; (\partial_{A \oplus B}; \oc \pi_1) \tensor \oc \pi_2] \\
&= \cont_{A \oplus B} \tensor \id_{A \oplus B}; [\oc \pi_1 \tensor (\oc \pi_2 \tensor \pi_2; \partial_{B})] + [\id_{\oc (A \oplus B)} \tensor \sigma_{\oc (A \oplus B), A \oplus B}; (\oc \pi_1 \tensor \pi_1; \partial_{A})) \tensor \oc \pi_2] \\
&= [(\cont_{A \oplus B}; \oc \pi_1 \tensor \oc \pi_2) \tensor \pi_2; \id_{\oc A} \tensor \partial_{B}] + \\
& [\cont_{A \oplus B} \tensor \id_{A \oplus B}; \id_{\oc (A \oplus B)} \tensor \sigma_{\oc (A \oplus B), A \oplus B}; (\oc \pi_1 \tensor \pi_1; \partial_{A})) \tensor \oc \pi_2] \\
&= [s_{A,B} \tensor \pi_2; \id_{\oc A} \tensor \partial_{B}] + \\
& [\cont_{A \oplus B} \tensor \id_{A \oplus B}; \id_{\oc (A \oplus B)} \tensor \sigma_{\oc (A \oplus B), A \oplus B}; (\oc \pi_1 \tensor \pi_1; \partial_{A})) \tensor \oc \pi_2] \\
&= [s_{A,B} \tensor \pi_2; \id_{\oc A} \tensor \partial_{B}] + [\cont_{A \oplus B} \tensor \id_{A \oplus B}; \oc \pi_1 \tensor \oc \pi_2  \tensor \pi_1; \id_{\oc A} \tensor \sigma_{\oc B, A}; \partial_{A} \tensor \id_{\oc B}] \\
&= [s_{A,B} \tensor \pi_2; \id_{\oc A} \tensor \partial_{B}] + [s_{A,B} \tensor  \pi_1; \id_{\oc A} \tensor \sigma_{\oc B, A}; \partial_{A} \tensor \id_{\oc B}] \qedhere
\end{align*}
\end{proof}

\begin{prop}
Let $A,B \in \L$ the the following equation is valid
\begin{equation*}
\id_{\oc (A \oplus B)} \tensor \iota_2; \partial_{A \oplus B}; s_{A,B}; \id_{\oc A} \tensor \dere_B = \oc \pi_1 \tensor \id_B
\end{equation*}
This directly implies Theorem \ref{theo_2}.
\end{prop}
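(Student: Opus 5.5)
We will compute the left-hand side by first applying Lemma \ref{lem:difSeely} to rewrite $\partial_{A \oplus B}; s_{A,B}$ as a sum of two terms, then precomposing with $\id_{\oc(A\oplus B)} \tensor \iota_2$. Precomposition distributes over the sum because $\L$ is additive monoidal (Definition \ref{add_mon}). Postcomposing with $\id_{\oc A} \tensor \dere_B$ also distributes.

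In the first summand the tangent slot is hit by $\iota_2; \pi_2 = \id_B$, and in the second by $\iota_2; \pi_1 = 0$. Since the enrichment is compatible with $\tensor$, the second summand vanishes ($k \tensor 0 \tensor h = 0$). The left-hand side therefore reduces to
\begin{equation*}
s_{A,B} \tensor \id_B; \id_{\oc A} \tensor \partial_B; \id_{\oc A} \tensor \dere_B .
\end{equation*}

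Next we apply the linear rule [d.3], $\partial_B; \dere_B = \weak_B \tensor \id_B$, which gives $s_{A,B} \tensor \id_B; \id_{\oc A} \tensor \weak_B \tensor \id_B$. It remains to identify $s_{A,B}; \id_{\oc A} \tensor \weak_B$ with $\oc \pi_1$. We will unfold $s_{A,B} = \cont_{A\oplus B}; \oc\pi_1 \tensor \oc\pi_2$ (the form used in Lemmas \ref{lem:a1} and \ref{lem:difSeely}). Naturality of weakening gives $\oc\pi_2; \weak_B = \weak_{A \oplus B}$, and the counit law of the comonoid $(\oc(A\oplus B), \cont, \weak)$ gives $\cont_{A\oplus B}; \id \tensor \weak_{A\oplus B} = \id$. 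Together these yield $\oc\pi_1$, so the left-hand side equals $\oc\pi_1 \tensor \id_B$, as claimed.

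The step needing most care is the bookkeeping of types and of the symmetry in the second summand of Lemma \ref{lem:difSeely}. We must check that the factor killed is genuinely $\iota_2;\pi_1 = 0$ sitting tensorially in one slot, so that the compatibility axiom of Definition \ref{add_mon} applies after moving the symmetry past it by naturality. We must also confirm that strictness of the monoidal structure lets the factors be regrouped freely.

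Finally, for the ``directly implies'' part, the argument is as follows. Postcomposing the established equation with $u$ and combining it with Lemma \ref{lem:a1} gives $\Tan_2\banana{f,u} = W(f,u); i_2$ in the Kleisli description. This is axiom t.3. Axioms t.1 and t.2 were already discussed before Lemma \ref{lem:a1}.
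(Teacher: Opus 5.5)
Your proposal is correct and follows the paper's proof step for step. You apply Lemma \ref{lem:difSeely}, distribute $\id_{\oc(A\oplus B)}\tensor\iota_2$ over the sum, kill the second summand via $\iota_2;\pi_1 = 0$, apply the linear rule [d.3], and conclude with $s_{A,B};\id_{\oc A}\tensor\weak_B = \oc\pi_1$; the paper simply asserts that last identity, whereas you justify it via $s_{A,B} = \cont_{A\oplus B};\oc\pi_1\tensor\oc\pi_2$, naturality of $\weak$ and the counit law (your concern about the symmetry is unnecessary, since $s_{A,B}\tensor 0 = 0$ already holds before the symmetry and $0$ then absorbs the rest of the composite).
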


\begin{proof}
\begin{align*}
\id_{\oc (A \oplus B)} &\tensor \iota_2; \partial_{A \oplus B}; s_{A,B}; \id_{\oc A} \tensor \dere_B \\
&\underset{\ref{lem:difSeely}}{=} \id_{\oc (A \oplus B)} \tensor \iota_2; [s_{A,B} \tensor \pi_2; \id_{\oc A} \tensor \partial_{B}] + [s_{A,B} \tensor  \pi_1; \id_{\oc A} \tensor \sigma_{\oc B, A}; \partial_{A} \tensor \id_{\oc B}]; \id_{\oc A} \tensor \dere_B \\
&= [\id_{\oc (A \oplus B)} \tensor \iota_2; s_{A,B} \tensor \pi_2; \id_{\oc A} \tensor \partial_{B}] + \\
& [\id_{\oc (A \oplus B)} \tensor \iota_2; s_{A,B} \tensor  \pi_1; \id_{\oc A} \tensor \sigma_{\oc B, A}; \partial_{A} \tensor \id_{\oc B}]; \id_{\oc A} \tensor \dere_B \\
&= [s_{A,B} \tensor (\iota_2;\pi_2); \id_{\oc A} \tensor \partial_{B}] + [s_{A,B} \tensor  (\iota_2;\pi_1); \id_{\oc A} \tensor \sigma_{\oc B, A}; \partial_{A} \tensor \id_{\oc B}]; \id_{\oc A} \tensor \dere_B \\
&= [s_{A,B} \tensor \id_B; \id_{\oc A} \tensor \partial_{B}] + [s_{A,B} \tensor  0; \id_{\oc A} \tensor \sigma_{\oc B, A}; \partial_{A} \tensor \id_{\oc B}]; \id_{\oc A} \tensor \dere_B \\
&= [s_{A,B} \tensor \id_B; \id_{\oc A} \tensor \partial_{B}] + 0; \id_{\oc A} \tensor \dere_B \\
&= s_{A,B} \tensor \id_B; \id_{\oc A} \tensor \partial_{B}; \id_{\oc A} \tensor \dere_B \\
&= s_{A,B} \tensor \id_B; \id_{\oc A} \tensor \weak_B \tensor \id_B \\
&= (s_{A,B}; \id_{\oc A} \tensor \weak_B) \tensor \id_B \\
&= \oc \pi_1 \tensor \id_B \qedhere
\end{align*}
\end{proof}

We now focus on the main theorem: each slice of the linear simple fibration of a GDSC is a DSC. Notice that in each fiber of $\LSimp(\L_\oc)$ we can define a deriving transform $\partial^A_B : (A, \oc B \tensor B) \to (A, \oc B)$. Applying it to a morphism $u : (A, \oc B) \to (A, B)$ yields the partial differential of $u$ in $B$. Concretely the second component of $\partial^A_B; u$ is 
 \begin{center}
\begin{tikzcd}[column sep=4em, row sep=4em]
\oc A \tensor \oc B \tensor B \arrow[r, "{\id_A \tensor \partial_B}"] & \oc A \tensor \oc B \arrow[r, "u"] & B
\end{tikzcd}
\end{center}

Our goal is to build a deriving transform in each fiber of the linear simple category of a GDSC. To achieve this, one needs to work fiberwise; hence, we define the differential of a morphism $f$, denoted $\dif(f)$, as the vertical component of $\Tan(f)$. 

\begin{defi}[Differential]
\label{def:Differential}
Let $f : X \to Y$ in $\C$, define $\dif(f)$ as the unique vertical morphism such that $\Tan(f) = \dif(f); \overline{f}_{\Tan(Y)}$ (see remark \ref{rem:FibreFactor}).  
\end{defi}

\begin{defi}[Partial Differential]
\label{def:PartialDifferential}
Let $f : X \times Y \to Z$ in $\C$, denote $\dif_2(f) := i^{X, Y}_2; \dif(f)$ and $\dif_1(f) :=  i^{X, Y}_1; \dif(f)$ the partial differentials of $f$. Since $i^{X, Y}$ is vertical (\ref{def:I}), $\dif_2(f)$ and $\dif_1(f)$ are respectively the unique vertical morphism factoring $\Tan_2(f)$ and $\Tan_1(f)$ (\ref{def:PartialTangent}). 
\end{defi}

The next few lemmas and propositions detail the behavior of the differential. We start by explaining how the differential behaves with composition and products.

\begin{prop}
\label{prop:DifferentialComp}
Let $f : X \to Y$ and $g : Y \to Z$ in $\C$ then, $\dif(f;g) = \dif(f); f^*\dif(g)$.
\end{prop}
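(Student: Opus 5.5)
The plan is to use functoriality of $\Tan$ together with the cartesian factorisation of Remark \ref{rem:FibreFactor} and the uniqueness clause in the definition of cartesian morphisms. By Definition \ref{def:Differential} we have $\Tan(f) = \dif(f);\overline{f}_{\Tan(Y)}$ and $\Tan(g) = \dif(g);\overline{g}_{\Tan(Z)}$, with $\dif(f)$ vertical above $X$ and $\dif(g)$ vertical above $Y$. Since $\Tan$ is a functor (axiom t.1), $\Tan(f;g) = \Tan(f);\Tan(g) = \dif(f);\overline{f}_{\Tan(Y)};\dif(g);\overline{g}_{\Tan(Z)}$. We will move $\dif(g)$ past the cartesian arrow $\overline{f}_{\Tan(Y)}$.

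The key step is the reindexing square of Proposition \ref{prop:ReindexingFunctors}, applied to the vertical morphism $\dif(g) : \Tan(Y) \to g^*\Tan(Z)$ in $\LSimp(\C)_Y$. It gives
\begin{equation*}
\overline{f}_{\Tan(Y)};\dif(g) = f^*\dif(g);\overline{f}_{g^*\Tan(Z)}.
\end{equation*}
Substituting, we get $\Tan(f;g) = \dif(f);f^*\dif(g);\overline{f}_{g^*\Tan(Z)};\overline{g}_{\Tan(Z)}$. Since $\lsimp$ is split (Definition \ref{def_lnl_simp_cat}), $\overline{f}_{g^*\Tan(Z)};\overline{g}_{\Tan(Z)} = \overline{(f;g)}_{\Tan(Z)}$. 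This can also be checked directly: the chosen lifts are $(f,\weak_X\tensor\id)$ and $(g,\weak_Y\tensor\id)$, and composing them in $\LSimp(\C)$ gives $(f;g, \cont_X\tensor\id;\F(f)\tensor(\weak_X\tensor\id);\weak_Y\tensor\id)$. By naturality of $\weak$ and the counit law of the comonoid $\F(X)$, this equals $(f;g,\weak_X\tensor\id)$. Note also that $(f;g)^*\Tan(Z) = f^*g^*\Tan(Z)$ by splitness, so the types agree.

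Finally, $\dif(f);f^*\dif(g)$ is a composite of vertical morphisms above $X$, hence vertical, and composing it with $\overline{(f;g)}_{\Tan(Z)}$ gives $\Tan(f;g)$. By Definition \ref{def:Differential}, $\dif(f;g)$ is the unique vertical morphism with this property; uniqueness follows from the universal property of the cartesian arrow above $f;g$. Hence $\dif(f;g) = \dif(f);f^*\dif(g)$.

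The only delicate point is the splitness identity for the chosen cartesian lifts. If one prefers to avoid it, one can instead use the explicit formula $f^*(\id_Y,v) = (\id_X, \F(f)\tensor\id;v)$ and compute second components directly. The main obstacle is then the contraction/weakening bookkeeping, which reduces to the comonoid counit law and naturality of $\cont$ and $\weak$.
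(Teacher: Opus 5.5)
Your proposal is correct and follows the paper's proof essentially step for step: you factor $\Tan(f)$ and $\Tan(g)$ through the chosen cartesian lifts, move $\dif(g)$ past $\overline{f}_{\Tan(Y)}$ using the reindexing square, merge the two cartesian arrows into $\overline{f;g}_{\Tan(Z)}$ by splitness, and conclude by uniqueness of the vertical factorisation. Your direct check of the splitness identity, via naturality of $\weak$ and the comonoid counit law, adds detail that the paper simply takes from Definition \ref{def:SplitFib}.
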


\begin{proof}
\begin{align*}
\Tan(f);\Tan(g) &\underset{\ref{def:Differential}}{=} \dif(f); \overline{f}_{\Tan(Y)}; \dif(g); \overline{g}_{\Tan(Z)} \\
&\underset{\ref{prop:ReindexingFunctors}}{=}  \dif(f); f^*\dif(g);  \overline{f}_{\codom(\dif(g))}; \overline{g}_{\Tan(Z)}  \\
&\underset{\ref{def:SplitFib}}{=} \dif(f); f^*\dif(g);  \overline{f;g}_{\Tan(Z)} 
\end{align*}
By definition of the differential (\ref{def:Differential}), $\dif(f;g)$ is the unique vertical morphism such that $\Tan(f;g) = \dif(f;g); \overline{f;g}_{\Tan(Z)}$ hence, $\dif(f); f^*\dif(g) = \dif(f;g)$. 
\end{proof}

\begin{prop}
\label{prop:DifferentialProduct}
Let $f : X \to Y$ and $g : X \to Z$ then, $\dif( \langle f, g \rangle) =  \langle \dif(f), \dif(g) \rangle_{X};  \langle f, g \rangle^* \varphi^{-1}_{Y,Z}$.
\end{prop}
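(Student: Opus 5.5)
The plan mirrors the proof of Proposition \ref{prop:DifferentialComp}: compute $\Tan(\langle f, g\rangle)$ in two ways and conclude with the uniqueness of the vertical factorisation in Definition \ref{def:Differential}.

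First, because $\Tan$ preserves products (axiom t.1), it sends the pairing to the pairing up to $\varphi$:
\begin{equation*}
\Tan(\langle f, g \rangle); \varphi_{Y,Z} = \langle \Tan(f), \Tan(g) \rangle.
\end{equation*}
Indeed, $\varphi_{Y,Z} = \langle \Tan(\pi_1), \Tan(\pi_2)\rangle$ and $\Tan$ is a functor, so $\langle f,g\rangle;\pi_i$ is sent to the correct components. Hence
\begin{equation*}
\Tan(\langle f, g \rangle) = \langle \Tan(f), \Tan(g) \rangle; \varphi^{-1}_{Y,Z}.
\end{equation*}
Next I rewrite $\langle \Tan(f), \Tan(g)\rangle$ using $\Tan(f) = \dif(f); \overline{f}_{\Tan(Y)}$ and $\Tan(g) = \dif(g); \overline{g}_{\Tan(Z)}$.

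The key step is a small lemma about the product of Proposition \ref{prop:ProductLS} in $\LSimp(\C)$. For vertical maps $a : (X,C) \to (X,A)$ and $b : (X,C) \to (X,B)$, and $f : X \to Y$, $g : X \to Z$, it claims
\begin{equation*}
\langle a;\overline{f}_A,\ b;\overline{g}_B\rangle = \langle a, b\rangle_X; \overline{\langle f,g\rangle}_{A\with B},
\end{equation*}
where $\langle a, b\rangle_X$ is the fibrewise pairing. I verify this directly on second components. The left side is $(\langle f,g\rangle, w)$ with $w$ given by $\langle\ ,\ \rangle$ on the $\L$-components after weakening. The cartesian lifts are $(f, \weak_X\tensor\id)$, and composition contracts $\F(X)$ and then weakens one copy. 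By the comonoid counit law, each component reduces to the second component of $a$, resp.\ $b$, and the pairing in $\L$ assembles them. This is the step I expect to need the most care: one must track how the pairing in the total category is defined, namely via $\mathbf{dist}$ and the projections $p_i$, versus the fibrewise one. It should reduce to naturality of $\mathbf{dist}$ and the identity $\cont_X;\weak_X\tensor\id = \id$.

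Then I push $\varphi^{-1}_{Y,Z}$, which is vertical by Proposition \ref{prop:PhiVertical}, past the cartesian lift using Proposition \ref{prop:ReindexingFunctors}:
\begin{equation*}
\overline{\langle f,g\rangle}_{\Tan(Y)\with\Tan(Z)}; \varphi^{-1}_{Y,Z} = \langle f,g\rangle^*\varphi^{-1}_{Y,Z}; \overline{\langle f,g\rangle}_{\Tan(Y\times Z)}.
\end{equation*}
Altogether this gives
\begin{equation*}
\Tan(\langle f,g\rangle) = \langle \dif(f),\dif(g)\rangle_X; \langle f,g\rangle^*\varphi^{-1}_{Y,Z}; \overline{\langle f,g\rangle}_{\Tan(Y\times Z)}.
\end{equation*}
The prefix $\langle \dif(f),\dif(g)\rangle_X; \langle f,g\rangle^*\varphi^{-1}_{Y,Z}$ is vertical, so uniqueness in Definition \ref{def:Differential} yields the claimed equality.
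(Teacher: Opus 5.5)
Your proposal is correct and follows the paper's proof step for step. Both arguments use t.1 to write $\Tan(\langle f,g\rangle)=\langle\Tan(f),\Tan(g)\rangle;\varphi^{-1}_{Y,Z}$, factor each component as $\dif(-);\overline{(-)}$, regroup as $\langle\dif(f),\dif(g)\rangle_X;\overline{\langle f,g\rangle}$, move $\varphi^{-1}_{Y,Z}$ past the cartesian lift via Proposition~\ref{prop:ReindexingFunctors}, and conclude by uniqueness of the vertical factorisation. Your extra lemma justifies the regrouping step that the paper leaves implicit, and it is simpler than you expect: writing $a=(\id_X,a')$ and $b=(\id_X,b')$, both sides equal $(\langle f,g\rangle,\langle a',b'\rangle)$ by the counit law $\cont_X\tensor\id;\weak_X\tensor a'=a'$ alone, with no $\mathbf{dist}$ involved.
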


\begin{proof}
\begin{align*}
 \Tan(\langle f, g\rangle) &\underset{t.1}{=}  \langle \Tan(f), \Tan(g) \rangle; \varphi^{-1}_{Y,Z} \\
 &\underset{\ref{def:Differential}}{=}  \langle \dif(f); \overline{f}, \dif(g); \overline{g} \rangle; \varphi^{-1}_{Y,Z} \\
 &\underset{}{=} \langle \dif(f), \dif(g) \rangle_{X} ; \overline{\langle f, g \rangle};  \varphi^{-1}_{Y,Z} \\
 &\underset{\ref{prop:ReindexingFunctors}}{=} \langle \dif(f), \dif(g) \rangle_{X};  \langle f, g \rangle^* \varphi^{-1}_{Y,Z}; \overline{\langle f, g \rangle}
\end{align*}
By definition of the differential (\ref{def:Differential}), $\dif(\langle f, g\rangle)$ is the unique vertical morphism such that $\Tan(\langle f, g\rangle) = \dif(\langle f, g\rangle); \overline{\langle f, g\rangle}_{\Tan(Y \times Z)}$ hence, $\dif( \langle f, g \rangle) =  \langle \dif(f), \dif(g) \rangle_{X};  \langle f, g \rangle^* \varphi^{-1}_{Y,Z}$. 
\end{proof}

Now we turn our attention to partial differentials. To move forward, we need a few results, in particular a reformulation of axiom t.3 in terms of differential as well as a description of the differential of bilinear maps (def. \ref{def:PartialLinear}).\\
To do so, we need a lemma that expresses the partial differential of morphisms of the shape $\set{f,u} : X \times Y \to X' \times Y'$ with $(f,u) : (X, Y) \to (X',Y')$ in $\Simp(\C)$. This result will prove useful when dealing with the partial differential of a composition of morphisms.

\begin{lem}
\label{lem:PartialZero}
Let $(f,u) : (X, Y) \to (X', Y') \in \Simp(\C)$, then 
\begin{equation*}
\dif_2(\set{(f,u)}) = \\ \dif_2(\set{(f,u)}, \pi_2); \set{(f,u)}^*i^{(X', Y')}_2
\end{equation*} 
Since $u = \set{(f,u)}, \pi_2$ we can also write $\dif_2(\set{(f,u)}) = \dif_2(u); \set{(f,u)}^*i^{(X', Y')}_2$. 
\end{lem}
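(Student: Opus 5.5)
The plan is to decompose $\set{(f,u)} = \langle \pi_1; f, u\rangle : X \times Y \to X' \times Y'$ as a pairing and apply Proposition \ref{prop:DifferentialProduct}, then precompose with $i^{X,Y}_2$. Proposition \ref{prop:DifferentialProduct} gives
\begin{equation*}
\dif(\set{(f,u)}) = \langle \dif(\pi_1;f), \dif(u) \rangle_{X\times Y}; \set{(f,u)}^*\varphi^{-1}_{X',Y'},
\end{equation*}
so that $\dif_2(\set{(f,u)}) = \langle i^{X,Y}_2;\dif(\pi_1;f),\ \dif_2(u)\rangle_{X\times Y}; \set{(f,u)}^*\varphi^{-1}_{X',Y'}$, using that precomposition distributes over fibrewise pairing and that $i_2$ is vertical (Definition \ref{def:I}).

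The key step is to show that the first component vanishes, i.e.\ $i^{X,Y}_2;\dif(\pi_1;f) = 0$. By Proposition \ref{prop:DifferentialComp}, $\dif(\pi_1;f) = \dif(\pi_1);\pi_1^*\dif(f)$, so it suffices to prove $i^{X,Y}_2;\dif(\pi_1) = 0$. For this, note that $\varphi_{X,Y} = \langle \Tan(\pi_1),\Tan(\pi_2)\rangle$, so its first component, after factoring through the cartesian lift, is exactly $\dif(\pi_1)$. In other words, $\dif(\pi_1) = \varphi_{X,Y}; p^{X\times Y}_1$ as vertical maps; checking this requires unfolding the product in $\LSimp(\C)$ from Proposition \ref{prop:ProductLS} together with the cartesian morphism $\overline{\pi_1}$. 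Hence
\begin{equation*}
i_2;\dif(\pi_1) = \iota_2;\varphi^{-1};\varphi;p_1 = \iota_2;p_1 = 0
\end{equation*}
by the biproduct structure of the fibre (Proposition \ref{prop:LS_add}). Additivity of the fibre, together with the fact that reindexing preserves zero maps, then kills the composite with $\pi_1^*\dif(f)$.

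Once the first component is zero, we use biproduct algebra in the fibre: $\langle 0, h\rangle = h;\iota_2$. Since reindexing functors are strict and preserve the additive structure and the injections, we get $(\set{(f,u)}^*\iota_2);(\set{(f,u)}^*\varphi^{-1}) = \set{(f,u)}^*(\iota_2;\varphi^{-1}) = \set{(f,u)}^* i^{X',Y'}_2$. This yields $\dif_2(\set{(f,u)}) = \dif_2(u);\set{(f,u)}^*i_2^{(X',Y')}$, and the second formulation follows because $u = \set{(f,u)};\pi_2$.

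I expect the main obstacle to be the identification $\dif(\pi_1) = \varphi; p_1$. This step requires carefully checking that fibrewise projections, the total-category projections $p_i$, and the cartesian lifts interact as claimed in the split fibration $\lsimp$, i.e.\ that $\Tan(\pi_1) = \varphi; p_1$ with $p_1$ factoring as the fibrewise projection followed by $\overline{\pi_1}$.
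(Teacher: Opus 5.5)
Your proposal is correct and follows the paper's proof essentially step for step. You write $\set{(f,u)}$ as $\langle \pi_1;f, u\rangle$, apply Proposition~\ref{prop:DifferentialProduct}, kill the first component via $\dif(\pi_1;f) = \dif(\pi_1);\pi_1^*\dif(f)$ and $\varphi^{-1}_{X,Y};\dif(\pi_1) = p^{X\times Y}_1$, and finish with $\langle 0,h\rangle = h;\iota_2$ and strictness of reindexing. The paper justifies $\varphi^{-1}_{X,Y};\dif(\pi_1) = p^{X\times Y}_1$ only by citing t.1, and your unique-vertical-factorisation argument is the correct unpacking of that step.
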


\begin{proof}
\begin{align*}
\dif_2(\set{(f,u)}) &\underset{\ref{def:Comprehension}}{=}  \dif_2(\langle \pi_1f, u \rangle) \\
&\underset{\ref{def:PartialDifferential}}{=} i^{X,Y}_2; \dif_2(\langle \pi_1f, u \rangle) \\
&\underset{\ref{prop:DifferentialProduct}}{=} i^{X,Y}_2;\langle \dif(\pi_1f) , \dif(u) \rangle_{X \times Y}; \set{(f,u)}^*\varphi^{-1}_{X', Y'} \\
&\underset{}{=} \langle i^{X,A}_2; \dif(\pi_1f) , \dif_2(u) \rangle_{X \times Y}; \set{(f,u)}^*\varphi^{-1}_{X', Y'} \\
&\underset{\ref{def:I}}{=} \langle \iota^{X \times Y}_2; \varphi^{-1};\dif(\pi_1); \pi_1^*\dif(f) , \dif_2(u) \rangle_{X \times Y}; \set{(f,u)}^*\varphi^{-1}_{X', Y'} \\
&\underset{t.1}{=} \langle \iota^{X \times Y}_2; \pi^{X \times Y}_1; \pi_1^*\dif(f) , \dif_2(u) \rangle_{X \times Y}; \set{(f,u)}^*\varphi^{-1}_{X', Y'} \\
&\underset{}{=} \langle 0; \pi_1^*\dif(f) , \dif_2(u) \rangle_{X \times Y}; \set{(f,u)}^*\varphi^{-1}_{X', Y'} \\
&\underset{}{=} \langle 0, \dif_2(u) \rangle_{X \times Y}; \set{(f,u)}^*\varphi^{-1}_{X', Y'} \\
&\underset{}{=}  \dif_2(u); \iota^{X \times Y}_2; \set{(f,u)}^*\varphi^{-1}_{X', Y'} \\
&\underset{}{=}  \dif_2(u); \set{(f,u)}^*\iota^{X' \times Y'}_2; \set{(f,u)}^*\varphi^{-1}_{X', Y'} \\
&\underset{\ref{def:SplitFib}}{=}  \dif_2(u); \set{(f,u)}^*(\iota^{X' \times Y'}_2; \varphi^{-1}_{X', Y'}) \\
&\underset{\ref{def:I}}{=} \dif_2(u); \set{(f,u)}^*i^{X',Y'}_2 \\
&\underset{\ref{def:Comprehension}}{=} \dif_2(\set{(f,u)}, \pi_2); \set{(f,u)}^*i^{X',Y'}_2  \qedhere
\end{align*} 
\end{proof}

We now restate the axiom t.3 in terms of the differential instead of the tangent functor.

\begin{prop}[Differential of a Partial Linear Map]
\label{prop:t3Differential}
Let $(f,u) : (X, A) \to (Y,B)$,  $\dif_2(\banana{f,u}) = \pi_1^*(\id_X, u); \banana{f,u}^*i^{Y,A}_2$
\end{prop}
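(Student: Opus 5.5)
The plan is to take axiom t.3 in its tangent form, $\Tan_2\banana{f,u} = W(f,u); i^{Y,B}_2$, and factor both sides as a vertical morphism followed by the chosen cartesian morphism over $\banana{f,u}$, as in Remark~\ref{rem:FibreFactor}. Then I compare the vertical parts using the uniqueness of such factorisations, exactly as in the proofs of Propositions~\ref{prop:DifferentialComp} and \ref{prop:DifferentialProduct}. (The index on $i_2$ in the statement should be $i^{Y,B}_2$, since its domain is $W(Y,B)$.)

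\emph{Left-hand side.} By Definitions~\ref{def:PartialTangent}, \ref{def:Differential} and \ref{def:PartialDifferential}, and since $i^{X,A}_2$ is vertical (Definition~\ref{def:I}), we have
\begin{equation*}
\Tan_2\banana{f,u} = i^{X,A}_2; \Tan\banana{f,u} = i^{X,A}_2; \dif(\banana{f,u}); \overline{\banana{f,u}}_{\Tan\banana{Y,B}} = \dif_2(\banana{f,u}); \overline{\banana{f,u}}_{\Tan\banana{Y,B}}.
\end{equation*}
Its vertical part is therefore $\dif_2(\banana{f,u})$.

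\emph{Right-hand side.} By Definition~\ref{def:WeakFunctor}, $W(f,u) = (\banana{f,u}, \F(\pi_1)\tensor \id_A; u)$. I first show that this splits as
\begin{equation*}
W(f,u) = (\id_{X\times\U(A)}, \F(\pi_1)\tensor\id_A; u); \overline{\banana{f,u}}_{B}.
\end{equation*}
This is a direct computation with the composition of Definition~\ref{def_lnl_simp_cat}: the cartesian morphism is $(\banana{f,u}, \weak\tensor\id_B)$, so composing gives $\cont\tensor\id_A; \F(\id)\tensor(\F(\pi_1)\tensor\id_A;u); \weak\tensor\id_B$, and the counit law $\cont;(\id\tensor\weak)=\id$ collapses it.

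Next I identify the vertical factor with $\pi_1^*(\id_X,u)$. In this split fibration the reindexing of a vertical map is $g^*(\id,v) = (\id, \F(g)\tensor\id; v)$. I get this by checking that $(\id,\F(g)\tensor\id;v)$ satisfies the defining square of Proposition~\ref{prop:ReindexingFunctors}, and then invoking uniqueness. The check again uses only $\cont$/$\weak$ counitality and naturality of $\weak$.

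Then, by Proposition~\ref{prop:ReindexingFunctors}, $\overline{\banana{f,u}}_B; i^{Y,B}_2 = \banana{f,u}^* i^{Y,B}_2; \overline{\banana{f,u}}_{\Tan\banana{Y,B}}$. Hence
\begin{equation*}
W(f,u); i^{Y,B}_2 = \pi_1^*(\id_X,u); \banana{f,u}^* i^{Y,B}_2; \overline{\banana{f,u}}_{\Tan\banana{Y,B}},
\end{equation*}
whose vertical part is $\pi_1^*(\id_X,u); \banana{f,u}^* i^{Y,B}_2$.

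\emph{Conclusion.} By t.3 the two sides agree. Since cartesian morphisms give unique vertical factorisations, the vertical parts coincide, which is the claim. The only step requiring real care is the explicit description of the reindexing functor $\pi_1^*$ on vertical maps and matching it with the second component of $W(f,u)$. The rest is formal manipulation of cartesian liftings, identical in shape to Proposition~\ref{prop:DifferentialComp}.
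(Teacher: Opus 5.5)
Your proposal is correct and follows the paper's proof: you factor both sides of t.3 as a vertical map followed by $\overline{\banana{f,u}}$ and compare the vertical parts by uniqueness. The only difference is that you explicitly verify the factorisation $W(f,u) = \pi_1^*(\id_X,u);\overline{\banana{f,u}}_B$, which the paper reads off directly from Definition~\ref{def:WeakFunctor}, and your correction of $i^{Y,A}_2$ to $i^{Y,B}_2$ is right, matching the diagram in axiom t.3.
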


\begin{proof}
\begin{align*}
\Tan_2(\banana{f,u}) &\underset{t.3}{=} W(f,u); i^{Y,A}_2\\
&\underset{\ref{def:WeakFunctor}}{=} \pi_1^*(\id_X, u);  \overline{\banana{f,u}}; i^{Y,A}_2  \\
&\underset{\ref{prop:ReindexingFunctors}}{=} \pi_1^*(\id_X, u);  \banana{f,u}^*i^{Y,A}_2;  \overline{\banana{f,u}}
\end{align*}
By unicity in the definition of the differential (\ref{def:Differential}), and by definition of the partial differential (\ref{def:PartialDifferential}), $\dif_2(\banana{f,u}) = \pi_1^*(\id_X, u); \banana{f,u}^*i^{Y,A}_2$.
 \end{proof}
 
A few more lemmas are needed to calculate the differential of bilinear maps (see definition \ref{def:LinearMap}). They describe the expected behavior of partial differentials in our fibred setting.

\begin{lem}
\label{lem:D2PartialLinear}
Let $g : X \times \U(A) \to \U(B)$ in $\C$ linear in its second argument with $g = \banana{f,u}; \pi_2$ then, $\dif_2(g) = \pi_1^*(\id_X, u)$.
\end{lem}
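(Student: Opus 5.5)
The plan is to write $g$ as the composite $\banana{f,u};\pi_2$ and then combine three earlier results: the chain rule for $\dif$ (Proposition \ref{prop:DifferentialComp}), the differential form of axiom t.3 (Proposition \ref{prop:t3Differential}), and a direct computation of $\dif(\pi_2)$ from product preservation (axiom t.1). Since $g$ is linear in its second argument, Definition \ref{def:PartialLinear} gives that $(f,u)$ is vertical above $X$, so $f=\id_X$ and $\banana{f,u}=\langle \pi_1, g\rangle : X\times\U(A)\to X\times\U(B)$.

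First, by Proposition \ref{prop:DifferentialComp},
\[
\dif(g)=\dif(\banana{f,u});\banana{f,u}^*\dif(\pi_2).
\]
Precomposing with the vertical map $i^{X,A}_2$ (Definition \ref{def:PartialDifferential}) gives
\[
\dif_2(g)=\dif_2(\banana{f,u});\banana{f,u}^*\dif(\pi_2).
\]
Proposition \ref{prop:t3Differential} then rewrites $\dif_2(\banana{f,u})$ as $\pi_1^*(\id_X,u);\banana{f,u}^*i^{X,B}_2$. Because the fibration is split, reindexing is functorial on composites (Definition \ref{def:SplitFib}), so
\[
\dif_2(g)=\pi_1^*(\id_X,u);\banana{f,u}^*\bigl(i^{X,B}_2;\dif(\pi_2)\bigr).
\]
It therefore suffices to show that $i^{X,B}_2;\dif(\pi_2)$ is the identity on $(X\times\U(B),B)$; this also fixes its typing, since by t.2 we have $\pi_2^*\Tan(\U(B))=(X\times\U(B),B)$.

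The key step, and the one I expect to be the main obstacle, is to compute $\dif(\pi_2)$ explicitly, where $\pi_2 : X\times\U(B)\to\U(B)$. I would proceed in two moves.
\begin{enumerate}
\item By t.1, $\varphi=\langle\Tan(\pi_1),\Tan(\pi_2)\rangle$ is an isomorphism, hence $\Tan(\pi_2)=\varphi;p_2$, where $p_2:\Tan(X)\with\Tan(\U(B))\to\Tan(\U(B))$ is the projection of the product of $\LSimp(\C)$ (Proposition \ref{prop:ProductLS}).
\item From the explicit description of that product, $p_2$ factors as the fibre projection $p^{X\times\U(B)}_2$ (vertical) followed by the cartesian lift $\overline{\pi_2}$. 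Concretely, its second component is $\weak\tensor\id$ composed with the projection of $\lambda(X)\with B$, which is exactly the composite in Definition \ref{def_lnl_simp_cat}.
\end{enumerate}
Since $\varphi$ is vertical (Proposition \ref{prop:PhiVertical}), uniqueness of the vertical factor in Definition \ref{def:Differential} yields $\dif(\pi_2)=\varphi;p^{X\times\U(B)}_2$.

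Finally, unfolding Definition \ref{def:I},
\[
i^{X,B}_2;\dif(\pi_2)=\iota^{X\times\U(B)}_2;\varphi^{-1};\varphi;p^{X\times\U(B)}_2=\iota_2;p_2=\id,
\]
using the biproduct structure in the fibre (Proposition \ref{prop:LS_add}). Since reindexing preserves identities, this gives $\dif_2(g)=\pi_1^*(\id_X,u)$. The only care needed beyond the factorisation of $p_2$ is to check that the cartesian lifts match up, so that the vertical factor is identified correctly; this is exactly what splitness guarantees.
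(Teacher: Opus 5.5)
Your proposal is correct and follows the same route as the paper: the chain rule (Proposition \ref{prop:DifferentialComp}), the differential form of t.3 (Proposition \ref{prop:t3Differential}), pulling $\banana{f,u}^*$ out of the composite, and then $i^{X,B}_2;\dif(\pi_2)=\id$. The paper asserts that last identity without comment. Your computation $\dif(\pi_2)=\varphi;p^{X\times\U(B)}_2$, obtained from the factorisation of the product projection as $p^{X\times\U(B)}_2;\overline{\pi_2}$, supplies the justification the paper leaves implicit.
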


\begin{proof}
\begin{align*}
\dif_2(g) &\underset{\ref{def:PartialDifferential}}{=} i^{X,A}_2; \dif(g) \\
&\underset{\ref{def:LinearMap} }{=} i^{X,A}_2; \dif(\banana{f,u}; \pi_2) \\
&\underset{\ref{prop:DifferentialComp}}{=}  i^{X,A}_2; \dif(\banana{f,u}); \banana{f,u}^*\dif(\pi_2) \\
&\underset{\ref{def:PartialDifferential}}{=}  \dif_2(\banana{f,u}); \banana{f,u}^*\dif(\pi_2) \\
&\underset{\ref{prop:t3Differential}}{=}   \pi_1^*(\id_X, u); \banana{f,u}^*i^{Y,B}_2 \banana{f,u}^*\dif(\pi_2) \\
&\underset{}{=}   \pi_1^*(\id_X, u); \banana{f,u}^*(i^{Y,B}_2; \dif(\pi_2)) \\
&\underset{}{=} \pi_1^*(\id_X, u) \qedhere
\end{align*}
\end{proof}

\begin{lem}
\label{lem:D1D2}
Let $f: X \times Y \to Z$, then $\dif_1(f) = \sigma_{X,Y}^*(\dif_2(\sigma_{Y,X};f))$.
\end{lem}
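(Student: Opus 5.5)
We unfold both sides using the composition rule of Proposition \ref{prop:DifferentialComp} and then compare the two injections $i_1$ and $i_2$ through the symmetry.

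First, by Definition \ref{def:PartialDifferential}, the left-hand side is $\dif_1(f) = i^{X,Y}_1; \dif(f)$. On the right, Proposition \ref{prop:DifferentialComp} gives
\begin{equation*}
\dif(\sigma_{Y,X};f) = \dif(\sigma_{Y,X}); \sigma_{Y,X}^*\dif(f).
\end{equation*}
Hence $\dif_2(\sigma_{Y,X};f) = i^{Y,X}_2; \dif(\sigma_{Y,X}); \sigma_{Y,X}^*\dif(f)$. Reindexing along $\sigma_{X,Y}$ preserves composition, and the fibration is split (Definition \ref{def:SplitFib}). Using $\sigma_{X,Y};\sigma_{Y,X} = \id$, we get $\sigma_{X,Y}^*\sigma_{Y,X}^* = \id^*$, which is the identity. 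So the right-hand side becomes
\begin{equation*}
\sigma_{X,Y}^*\bigl(i^{Y,X}_2; \dif(\sigma_{Y,X})\bigr); \dif(f).
\end{equation*}
It therefore suffices to prove the vertical identity
\begin{equation*}
\sigma_{X,Y}^*\bigl(i^{Y,X}_2; \dif(\sigma_{Y,X})\bigr) = i^{X,Y}_1 .
\end{equation*}

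To establish this identity, compute $\dif(\sigma_{Y,X})$ with Proposition \ref{prop:DifferentialProduct}, since $\sigma_{Y,X} = \langle \pi_2, \pi_1 \rangle$:
\begin{equation*}
\dif(\sigma_{Y,X}) = \langle \dif(\pi_2), \dif(\pi_1)\rangle_{Y\times X}; \sigma_{Y,X}^*\varphi^{-1}_{X,Y}.
\end{equation*}
By t.1 and the definition of $\varphi$ (Proposition \ref{prop:PhiVertical}), $\dif(\pi_j)$ is the vertical part of $\Tan(\pi_j)$, that is $\varphi_{Y,X}; p^{Y\times X}_j$. Precomposing with $i^{Y,X}_2 = \iota_2; \varphi^{-1}_{Y,X}$ cancels $\varphi$. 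We are left with $\langle \iota_2;p_2, \iota_2;p_1\rangle = \langle \id, 0\rangle = \iota_1$ in the fibre over $Y\times X$, using the biproduct structure (Proposition \ref{prop:LS_add}), exactly as in the proof of Lemma \ref{lem:PartialZero}. Thus
\begin{equation*}
i^{Y,X}_2; \dif(\sigma_{Y,X}) = \iota^{Y\times X}_1; \sigma_{Y,X}^*\varphi^{-1}_{X,Y}.
\end{equation*}
Applying $\sigma_{X,Y}^*$, the reindexing functors are strict and preserve the additive structure (Propositions \ref{tensor_LS}, \ref{prop:ProductLS}, \ref{prop:LS_add}). So $\sigma_{X,Y}^*\iota^{Y\times X}_1 = \iota^{X\times Y}_1$, and the $\varphi^{-1}$ term becomes $\varphi^{-1}_{X,Y}$ by splitness. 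The result is $\iota^{X\times Y}_1;\varphi^{-1}_{X,Y} = i^{X,Y}_1$, as required.

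The main obstacle is the bookkeeping in the middle step. The domains must match: $\lambda(X)$ sits as the second summand in the fibre over $Y\times X$, and as the first summand after reindexing along $\sigma_{X,Y}$. We also need the cartesian lifts to compose correctly, so that $\overline{\sigma_{X,Y}};\overline{\sigma_{Y,X}}$ is an identity. I would check carefully that $\langle \dif(\pi_2),\dif(\pi_1)\rangle$ is computed in the fibre $\LSimp(\C)_{Y\times X}$ with the projections reindexed appropriately. I would also confirm that $\sigma^*$ commutes with the fibred product pairing, which follows from reindexing being a strict cartesian functor.
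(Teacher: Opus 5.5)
Your proof is correct and is essentially the paper's argument run from the other end. The paper starts from $\dif_1(f)$, rewrites $f$ as $\sigma_{X,Y};\sigma_{Y,X};f$ and shows $i^{X,Y}_1;\dif(\sigma_{X,Y}) = \sigma_{X,Y}^*(i^{Y,X}_2)$, whereas you expand the right-hand side and prove the mirror identity $\sigma_{X,Y}^*\bigl(i^{Y,X}_2;\dif(\sigma_{Y,X})\bigr) = i^{X,Y}_1$, using the same ingredients: Propositions \ref{prop:DifferentialComp} and \ref{prop:DifferentialProduct} applied to the symmetry, the evaluation of $i_2;\dif(\pi_j)$ as $\id$ or $0$ via the fibrewise biproduct, and splitness of reindexing.
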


\begin{proof}
\begin{align*}
\dif_1(f) &\underset{\ref{def:PartialDifferential} }{=} i^{X,Y}_1; \dif(f) \\
&\underset{}{=} i^{X,Y}_1; \dif(\sigma_{X,Y}; \sigma_{Y,X};f) \\
&\underset{\ref{prop:DifferentialComp} }{=} i^{X,Y}_1;\dif(\sigma_{X,Y}); \sigma_{X,Y}^*\dif(\sigma_{Y,X};f) \\
&\underset{ }{=} i^{X,Y}_1; \dif(\langle \pi_2, \pi_1 \rangle); \sigma_{X,Y}^*\dif(\sigma_{Y,X};f) \\
&\underset{\ref{prop:DifferentialProduct} }{=}  \langle \dif(\pi_2), \dif(\pi_1) \rangle_{X \times Y};\sigma_{X,Y}^*(\varphi^{-1}_{Y,X}); \sigma_{X,Y}^*\dif(\sigma_{Y,X};f) \\
&\underset{}{=}  \langle i^{X,Y}_1; \dif(\pi_2),i^{X,Y}_1; \dif(\pi_1) \rangle_{X \times Y};\sigma_{X,Y}^*(\varphi^{-1}_{Y,X}); \sigma_{X,Y}^*\dif(\sigma_{Y,X};f) \\
&\underset{}{=}  \langle 0 , \id_{\lambda(X)} \rangle_{X \times Y};\sigma_{X,Y}^*(\varphi^{-1}_{Y,X}); \sigma_{X,Y}^*\dif(\sigma_{Y,X};f) \\
&\underset{}{=} \sigma_{X,Y}^*(\iota^{Y,X}_2);\sigma_{X,Y}^*(\varphi^{-1}_{Y,X}); \sigma_{X,Y}^*\dif(\sigma_{Y,X};f) \\
&\underset{}{=} \sigma_{X,Y}^*(\iota^{Y,X}_2; \varphi^{-1}_{Y,X}; \dif(\sigma_{Y,X};f)) \\
&\underset{\ref{def:I}}{=} \sigma_{X,Y}^*(i^{Y,X}_2; \dif(\sigma_{Y,X};f)) \\
&\underset{\ref{def:PartialDifferential}}{=} \sigma_{X,Y}^*(\dif_2(\sigma_{Y,X};f)) \qedhere
\end{align*}
\end{proof}

\begin{cor}
\label{cor:D1PartialLinear}
Let $g : \U(A) \times X \to \U(B)$ in $\C$ linear in its first argument with $\sigma_{X,\U(A)}; g = \banana{f,u}; \pi_2$ then, $\dif_1(g) = \pi_2^*(\id_X, u)$.
\end{cor}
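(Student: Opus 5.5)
The plan is to combine Lemma \ref{lem:D1D2} with Lemma \ref{lem:D2PartialLinear}, in that order. Let $g : \U(A) \times X \to \U(B)$ with $\sigma_{X,\U(A)}; g = \banana{f,u}; \pi_2$, where $(f,u) : (X,A) \to (Y,B)$ is a morphism in $\LSimp(\C)_X$.

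First I apply Lemma \ref{lem:D1D2} to $g$, taking the first factor to be $\U(A)$ and the second to be $X$. This gives
\begin{equation*}
\dif_1(g) = \sigma_{\U(A),X}^*\bigl(\dif_2(\sigma_{X,\U(A)}; g)\bigr).
\end{equation*}
The morphism $\sigma_{X,\U(A)}; g : X \times \U(A) \to \U(B)$ is by hypothesis linear in its second argument, being $\banana{f,u};\pi_2$. So Lemma \ref{lem:D2PartialLinear} applies and yields $\dif_2(\sigma_{X,\U(A)}; g) = \pi_1^*(\id_X, u)$. Here $\pi_1 : X \times \U(A) \to X$ is the first projection.

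Substituting, I obtain $\dif_1(g) = \sigma_{\U(A),X}^*\pi_1^*(\id_X,u)$. Since $\lsimp$ is a split fibration (Definition \ref{def:SplitFib}), reindexing is strictly functorial, so this equals $(\sigma_{\U(A),X};\pi_1)^*(\id_X,u)$. Finally, $\sigma_{\U(A),X};\pi_1 = \pi_2 : \U(A)\times X \to X$, which gives $\dif_1(g) = \pi_2^*(\id_X,u)$ as claimed.

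The only step requiring care is the bookkeeping of the symmetry indices in Lemma \ref{lem:D1D2}. In particular, I need $\sigma_{\U(A),X}$ and $\sigma_{X,\U(A)}$ to be mutually inverse, and the composite $\sigma;\pi_1$ to be identified with the correct projection. I also need to confirm that the domains match, i.e. that $\lambda(\U(A)) = A$ via t.2, so the reindexed morphism lives over $\U(A)\times X$ with the right type. Otherwise the argument is a direct two-line composition of earlier results.
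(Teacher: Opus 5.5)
Your proposal is correct and matches the paper's proof step for step. The paper applies Lemma \ref{lem:D1D2}, then Lemma \ref{lem:D2PartialLinear}, then the split-fibration identity $\sigma_{\U(A),X}^*\pi_1^* = (\sigma_{\U(A),X};\pi_1)^*$, and finally $\sigma_{\U(A),X};\pi_1 = \pi_2$, exactly as you do.
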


\begin{proof}
\begin{align*}
\dif_1(g) &\underset{\ref{lem:D1D2}}{=} \sigma_{\U(A),X}^*(\dif_2(\sigma_{X,\U(A)};g)) \\
&\underset{\ref{lem:D2PartialLinear}}{=} \sigma_{\U(A),X}^*\pi_1^*(\id_X, u) \\
&\underset{\ref{def:SplitFib}}{=}  (\sigma_{\U(A),X}; \pi_1)^*(\id_X, u) \\
&\underset{}{=} \pi_2^*(\id_X, u)  \qedhere
\end{align*}
\end{proof}

\begin{lem}
\label{lem:D=D12}
Let $f : X \times Y \to Z$ in $\C$ then $\dif(f) =\varphi_{X,Y};  (p^{X \times Y}_1; \dif_1(f)) + (p^{X \times Y}_2; \dif_2(f))$. 
\end{lem}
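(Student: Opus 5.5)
The idea is to use the biproduct structure in the fibre $\LSimp(\C)_{X \times Y}$. There the identity on $\lambda(X) \oplus \lambda(Y)$ decomposes as $p^{X\times Y}_1;\iota^{X\times Y}_1 + p^{X\times Y}_2;\iota^{X\times Y}_2$. Since $\varphi_{X,Y}$ is a vertical isomorphism (Proposition \ref{prop:PhiVertical}), we can write
\begin{equation*}
\id_{\Tan(X\times Y)} = \varphi_{X,Y};\varphi^{-1}_{X,Y} = \varphi_{X,Y};\bigl(p^{X\times Y}_1;\iota^{X\times Y}_1 + p^{X\times Y}_2;\iota^{X\times Y}_2\bigr);\varphi^{-1}_{X,Y}.
\end{equation*}
This is an equation between vertical endomorphisms of $(X\times Y,\lambda(X\times Y))$ in the fibre above $X\times Y$.

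Next, precompose $\dif(f)$ with this identity and distribute composition over the sum. Each fibre is additive (Proposition \ref{prop:LS_add}), and in particular composition is bilinear, so the right-hand side becomes
\begin{equation*}
\varphi_{X,Y};p^{X\times Y}_1;\iota^{X\times Y}_1;\varphi^{-1}_{X,Y};\dif(f) \;+\; \varphi_{X,Y};p^{X\times Y}_2;\iota^{X\times Y}_2;\varphi^{-1}_{X,Y};\dif(f).
\end{equation*}
By Definition \ref{def:I}, $\iota^{X\times Y}_i;\varphi^{-1}_{X,Y} = i^{X,Y}_i$. By Definition \ref{def:PartialDifferential}, $i^{X,Y}_i;\dif(f) = \dif_i(f)$. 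Substituting these gives $\varphi_{X,Y};(p^{X\times Y}_1;\dif_1(f)) + \varphi_{X,Y};(p^{X\times Y}_2;\dif_2(f))$. Left distributivity then factors out $\varphi_{X,Y}$, which yields exactly the claimed formula (reading the statement as $\varphi$ composed with the sum).

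The main point to check is that all these compositions really take place inside a single fibre, so that the CMon-enrichment of that fibre applies. Here $\dif(f)$ is vertical above $X\times Y$ by Definition \ref{def:Differential}, and $\varphi$, $\iota$ and $p$ are vertical there as well. So everything lives in $\LSimp(\C)_{X\times Y}$, where the biproduct identity $\sum_i p_i;\iota_i = \id$ holds by Proposition \ref{prop:LS_add}. I also need that the projections $p^{X\times Y}_i$ of the fibre biproduct coincide with the fibrewise product projections used when forming $\varphi = \langle\Tan(\pi_1),\Tan(\pi_2)\rangle$. This follows because in an additive category the product and the biproduct share their projections. No further obstacle is expected.
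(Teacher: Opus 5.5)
Your proposal is correct and follows the paper's proof essentially verbatim: you insert $\varphi_{X,Y};(p^{X\times Y}_1;\iota^{X\times Y}_1 + p^{X\times Y}_2;\iota^{X\times Y}_2);\varphi^{-1}_{X,Y} = \id$ in front of $\dif(f)$, rewrite $\iota^{X\times Y}_i;\varphi^{-1}_{X,Y}$ as $i^{X,Y}_i$ and then $i^{X,Y}_i;\dif(f)$ as $\dif_i(f)$, using bilinearity of composition in the additive fibre. Your final check that the fibre projections agree with those used to form $\varphi$ is harmless but not needed. The argument only uses $\varphi;\varphi^{-1}=\id$ and the biproduct identity on $(X\times Y,\lambda(X)\oplus\lambda(Y))$.
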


\begin{proof}
\begin{align*}
\dif(f) &\underset{}{=} \varphi_{X,Y}; (p^{X \times Y}_1; \iota^{X \times Y}_1) + (p^{X \times Y}_2; \iota^{X \times Y}_2); \varphi_{X,Y}^{-1}; \dif(f) \\
&\underset{\ref{def:I}}{=} \varphi_{X,Y}; (p^{X \times Y}_1; i^{X,Y}_1) + (p^{X \times Y}_2; i^{X,Y}_2); \dif(f) \\
&\underset{\ref{def:PartialDifferential}}{=}  \varphi_{X,Y}; (p^{X \times Y}_1; \dif_1(f)) + (p^{X \times Y}_2; \dif_2(f)) \qedhere
\end{align*}
\end{proof}

\begin{cor}
\label{cor:DifBi}
Let $g : \U(A) \times \U(B) \to \U(C)$ be a bilinear morphism, with $g = \banana{f,u}; \pi_2$ and $\sigma_{\U(B), \U(A)}; g = \banana{f', u'}; \pi_2$ then, 
\begin{equation*}
\dif(g) = \varphi_{\U(A),\U(B)};  (p^{\U(A) \times \U(B)}_1; \pi_2^*(\id_{\U(B)}, u')) + (p^{\U(A) \times \U(B)}_2; \pi_1^*(\id_{\U(A)}, u)).
\end{equation*}
\end{cor}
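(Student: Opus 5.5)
This corollary follows by combining Lemma \ref{lem:D=D12} with the two partial-linearity results, Lemma \ref{lem:D2PartialLinear} and Corollary \ref{cor:D1PartialLinear}.

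First, instantiate Lemma \ref{lem:D=D12} with $X := \U(A)$, $Y := \U(B)$ and $f := g$. This gives
\begin{equation*}
\dif(g) = \varphi_{\U(A),\U(B)}; (p^{\U(A) \times \U(B)}_1; \dif_1(g)) + (p^{\U(A) \times \U(B)}_2; \dif_2(g)).
\end{equation*}
It then suffices to compute the two partial differentials separately and substitute them into this expression.

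For the second component, we use the hypothesis $g = \banana{f,u}; \pi_2$ with $(f,u)$ in $\LSimp(\C)_{\U(A)}$. By Definition \ref{def:PartialLinear}, $g$ is linear in its second argument, and Lemma \ref{lem:D2PartialLinear} (with $X := \U(A)$) gives $\dif_2(g) = \pi_1^*(\id_{\U(A)}, u)$ directly.

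For the first component, we use $\sigma_{\U(B),\U(A)}; g = \banana{f',u'}; \pi_2$ with $(f',u')$ in $\LSimp(\C)_{\U(B)}$. This says exactly that $g$ is linear in its first argument, in the sense of Corollary \ref{cor:D1PartialLinear} with $X := \U(B)$. That corollary yields $\dif_1(g) = \pi_2^*(\id_{\U(B)}, u')$.

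Substituting both formulas into the decomposition above gives the claimed expression. The only point needing care is bookkeeping: one must check that the roles of $X$ and $\U(A)$ in Corollary \ref{cor:D1PartialLinear} match the ordering of the bilinearity hypothesis. Here that holds, since the hypothesis is stated with $\sigma_{\U(B),\U(A)}$, exactly the $\sigma_{X,\U(A)}$ that the corollary requires.

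One must also check that the reindexing $\pi_2^*$ lands in the fibre above $\U(A) \times \U(B)$. It does, because $\pi_2 : \U(A) \times \U(B) \to \U(B)$. I expect no genuine obstacle beyond this matching of indices.
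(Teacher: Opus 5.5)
Your proposal is correct and matches the paper's proof exactly: decompose $\dif(g)$ via Lemma \ref{lem:D=D12}, then substitute $\dif_2(g) = \pi_1^*(\id_{\U(A)}, u)$ from Lemma \ref{lem:D2PartialLinear} and $\dif_1(g) = \pi_2^*(\id_{\U(B)}, u')$ from Corollary \ref{cor:D1PartialLinear}. Your index matching, with $X := \U(B)$ so that the corollary's $\sigma_{X,\U(A)}$ is $\sigma_{\U(B),\U(A)}$, is exactly what that step requires.
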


\begin{proof}
\begin{align*}
\dif(g) &\underset{\ref{lem:D=D12}}{=} \varphi_{\U(A),\U(B)};  (p^{\U(A) \times \U(B)}_1; \dif_1(g)) + (p^{\U(A) \times \U(B)}_2; \dif_2(g)) \\
&\underset{\ref{lem:D2PartialLinear}}{=} \varphi_{\U(A),\U(B)};  (p^{\U(A) \times \U(B)}_1; \dif_1(g)) + (p^{\U(A) \times \U(B)}_2; \pi_1^*(\id_{\U(A)}, u)) \\
&\underset{\ref{cor:D1PartialLinear}}{=} \varphi_{\U(A),\U(B)};  (p^{\U(A) \times \U(B)}_1; \pi_2^*(\id_{\U(B)}, u')) + (p^{\U(A) \times \U(B)}_2; \pi_1^*(\id_{\U(A)}, u)) \qedhere
\end{align*}
\end{proof}

Most of the preliminary results are done, we now turn our efforts towards constructing a deriving transform (see def. \ref{def:DSC}) in each slice of the linear simple fibration. Each fiber already has the structure of an additive Seely category; what is left is to define a deriving transform. To define it, we start by defining a family of morphisms $\D^X_Y$ that will be the base of this construction.

\begin{defi}
\label{def:D}
For every $(X, Y) \in \Simp(\C)$, define the familly of morphisms
\begin{equation*}
\D^X_Y := \dif_2(\set{\eta_{(X,Y)}}; \pi_2) : (X \times Y, \lambda(Y)) \to ( X \times Y, \F(Y))
\end{equation*}
with $\eta$ the unit of the adjunction $\F^\Simp \dashv \U^\Simp$. Denote $\D^X_A$ for $\D^X_{\U(A)}$.
\end{defi}

This family of morphisms is very close to what should be a deriving transform. In the case were $Y = \U(A)$, $\D^X_A$ is of type $(X \times \U(A), A) \to ( X \times \U(A), \oc A)$. Before building from it, the deriving transform, we expand a bit more on this family. First, exactly like with a deriving transform, composing it with a morphism of shape $\F(X) \to A$ allows us to compute the differential.

\begin{prop}
\label{prop:Dprecomp}
Let $f : (X, \F(Y)) \to (X, A)$ in $\LSimp(\C)_X$ then $\dif_2(\set{f^\dagger}; \pi_2) = \D^X_Y; \pi_1^*f$. 
\end{prop}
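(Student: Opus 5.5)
The plan is to factor $\set{f^\dagger};\pi_2$ through $\set{\eta_{(X,Y)}};\pi_2$ followed by a map that is linear in its second argument. Then the chain rule (Prop.~\ref{prop:DifferentialComp}) and Lemma~\ref{lem:D2PartialLinear} will identify the second factor's differential with $\pi_1^*f$.

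\textbf{Step 1: the factorisation.} Write $f=(\id_X,u)$ with $u:\F(X)\tensor\F(Y)\to A$. Here $f^\dagger$ is the transpose of $f$ through the fibred adjunction $\F^\Simp\dashv\U^\Simp$ of Prop.~\ref{U_F_lnl}. Explicitly, $f^\dagger=\eta_{(X,Y)};\U^\Simp(f):(X,Y)\to(X,\U(A))$ in $\Simp(\C)_X$. Since $\set{-}$ is a functor (Def.~\ref{def:Comprehension}),
\begin{equation*}
\set{f^\dagger};\pi_2=\set{\eta_{(X,Y)}};\set{\U^\Simp(f)};\pi_2=\set{\eta_{(X,Y)}};\banana{f};\pi_2 .
\end{equation*}
The composite $\set{\eta_{(X,Y)}}$ equals $\langle\pi_1,\set{\eta_{(X,Y)}};\pi_2\rangle:X\times Y\to X\times\U\F(Y)$. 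The map $g:=\banana{f};\pi_2:X\times\U\F(Y)\to\U(A)$ is linear in its second component in the sense of Def.~\ref{def:PartialLinear}.

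\textbf{Step 2: differentiate.} I apply $i^{X,Y}_2$ to $\dif(\set{\eta_{(X,Y)}};g)$ and use Prop.~\ref{prop:DifferentialComp}:
\begin{equation*}
\dif_2(\set{f^\dagger};\pi_2)=\dif_2(\set{\eta_{(X,Y)}});\set{\eta_{(X,Y)}}^*\dif(g).
\end{equation*}
Lemma~\ref{lem:PartialZero} applies to $\set{\eta_{(X,Y)}}$ with $(f,u)=\eta_{(X,Y)}$. It gives
\begin{equation*}
\dif_2(\set{\eta_{(X,Y)}})=\D^X_Y;\set{\eta_{(X,Y)}}^*i^{X,\F(Y)}_2,
\end{equation*}
where the first factor is $\D^X_Y$ by Def.~\ref{def:D}. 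Substituting and using splitness (Def.~\ref{def:SplitFib}), the right-hand side becomes $\D^X_Y;\set{\eta_{(X,Y)}}^*(\dif_2(g))$. Lemma~\ref{lem:D2PartialLinear} gives $\dif_2(g)=\pi_1^*(\id_X,u)=\pi_1^*f$. Hence the whole expression equals $\D^X_Y;\set{\eta_{(X,Y)}}^*\pi_1^*f=\D^X_Y;(\set{\eta_{(X,Y)}};\pi_1)^*f$. Since $\set{\eta_{(X,Y)}};\pi_1=\pi_1:X\times Y\to X$, this is $\D^X_Y;\pi_1^*f$, as claimed.

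\textbf{Expected obstacle.} The delicate point is bookkeeping of types and identifications rather than any new idea. I must check that $\set{\U^\Simp(f)}=\banana{f}$, which holds by definition of $\banana{-}=\U^\Simp;\set{-}$. I must also check that the codomain objects match, since $\lambda(\U\F(Y))=\F(Y)$ by t.2, so $i^{X,\F(Y)}_2$ is $i^{X,\U\F(Y)}_2$. Finally, the reindexings must compose strictly, which the split fibration guarantees. If the adjunction transpose in the paper's convention instead puts $\U^\Simp(f)$ differently, Step 1 only changes by the naturality of $\eta$.
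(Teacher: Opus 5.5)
Your proposal is correct and matches the paper's proof step for step. You factor $\set{f^\dagger};\pi_2$ as $\set{\eta_{(X,Y)}};\banana{f};\pi_2$, apply the chain rule (Prop.~\ref{prop:DifferentialComp}), use Lemma~\ref{lem:PartialZero} to extract $\D^X_Y$, apply Lemma~\ref{lem:D2PartialLinear} to the partially linear factor, and finish by splitness using $\set{\eta_{(X,Y)}};\pi_1=\pi_1$; your explicit handling of $\set{\eta_{(X,Y)}}^*i^{X,\F(Y)}_2$ just spells out a step the paper leaves implicit.
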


\begin{proof}
\begin{align*}
\dif_2(\set{f^\dagger}; \pi_2) &\underset{\ref{def:dagger}}{=} \dif_2(\set{\eta_{(X,Y)}}; \banana{f}; \pi_2) \\
&\underset{\ref{prop:DifferentialComp}}{=}  \dif_2(\set{\eta_{(X,Y)}}); \set{\eta_{(X,Y)}}^*\dif(\banana{f}; \pi_2) \\
&\underset{\ref{lem:PartialZero}}{=} \D^X_Y; \set{\eta_{(X,Y)}}^*\dif_2(\banana{f}; \pi_2) \\
&\underset{\ref{lem:D2PartialLinear}}{=}  \D^X_Y; \set{\eta_{(X,Y)}}^*\pi_1^*(f) \\
&\underset{\ref{def:SplitFib}}{=} \D^X_Y; \pi_1^*(f) \qedhere
\end{align*}
\end{proof}

This construction is a bit more general than the deriving transform of a DSC, as $\D$ allows us to express internally (without talking about the functor $\Tan$ or the category $\C$) the differential of maps of shape $\F(X) \to A$ instead of just maps of shape $\oc A \to B$. This is the result of considering any LNL instead of the Kleisli category. \\

The family $\D$ defines a natural transformation. To prove it, one needs this intermediate result that will prove quite useful later on.  

\begin{prop}
\label{prop:DDifferential}
For every $(f,u) : (X, Y) \to (X', Y')$ in $\Simp(\C)$ the following equality holds
\begin{equation*}
\dif_2(\set{(f,u)}; \pi_2); \set{(f,u)}^*\D^{X'}_{Y'} = \D^X_Y; \pi_1^*\F^\Simp (\id_X ,u)
\end{equation*}
\end{prop}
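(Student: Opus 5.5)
The plan is to compute the partial differential of the single composite
\begin{equation*}
g := \set{(f,u)};\set{\eta_{(X',Y')}};\pi_2 : X \times Y \to \U(\F(Y'))
\end{equation*}
in two different ways and then compare. The first way expands it along the chain rule and yields the left-hand side. The second recognises $g$ as a map of the shape handled by Proposition \ref{prop:Dprecomp} and yields the right-hand side.

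\emph{Left-hand side.} First I unfold $\set{\eta_{(X',Y')}} = \langle \pi_1, \pi_2;\eta_{Y'}\rangle$. Next I apply Proposition \ref{prop:DifferentialComp} to the composite $\set{(f,u)} ; (\set{\eta_{(X',Y')}};\pi_2)$ and precompose with $i^{X,Y}_2$, which gives
\begin{equation*}
\dif_2(g) = \dif_2(\set{(f,u)}); \set{(f,u)}^*\dif(\set{\eta_{(X',Y')}};\pi_2).
\end{equation*}
Lemma \ref{lem:PartialZero} rewrites the first factor as $\dif_2(u); \set{(f,u)}^* i^{X',Y'}_2$. Since the fibration is split and $\set{(f,u)}^*$ is a functor, the two reindexed factors merge into
\begin{equation*}
\set{(f,u)}^*\bigl(i^{X',Y'}_2;\dif(\set{\eta_{(X',Y')}};\pi_2)\bigr) = \set{(f,u)}^*\D^{X'}_{Y'},
\end{equation*}
by Definitions \ref{def:PartialDifferential} and \ref{def:D}. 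Finally, $u = \set{(f,u)};\pi_2$, so this expression is exactly the left-hand side of the statement.

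\emph{Right-hand side.} I observe that $g = u;\eta_{Y'}$. I then claim that this equals $\set{(\F^\Simp(\id_X,u))^\dagger};\pi_2$, where $\F^\Simp(\id_X,u) : (X,\F(Y)) \to (X,\F(Y'))$ is a vertical map in $\LSimp(\C)_X$ and the dagger is taken with respect to the fibred adjunction of Proposition \ref{U_F_lnl}. The dagger is $\eta_{(X,Y)};\U^\Simp(\F^\Simp(\id_X,u))$, and its second component is the image of $(\id_X,u)$ under the composite of the unit with $\U^\Simp\F^\Simp$. Naturality of the fibred unit $\eta$ of Proposition \ref{U_F_lnl} therefore reduces it to $(\id_X,u);\eta_{(X,Y')}$, whose comprehension followed by $\pi_2$ is $u;\eta_{Y'}$. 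With this identification, Proposition \ref{prop:Dprecomp} applied to $\F^\Simp(\id_X,u)$ gives $\dif_2(g) = \D^X_Y;\pi_1^*\F^\Simp(\id_X,u)$, which is the right-hand side.

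Equating the two computations of $\dif_2(g)$ proves the proposition. I expect the only delicate step to be the identification $g = \set{(\F^\Simp(\id_X,u))^\dagger};\pi_2$. It requires carefully unfolding how the dagger, the comprehension $\set{-}$ and the unit $\eta_{(X,Y)} = (\id_X,\pi_Y;\eta_Y)$ interact in $\Simp(\C)$, essentially a direct check of naturality of the unit. Everything else is bookkeeping with split reindexing.
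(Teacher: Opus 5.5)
Your proof is correct. It rests on the same idea as the paper's: the naturality of the unit gives $\set{(f,u)};\set{\eta_{(X',Y')}} = \set{\eta_{(X,Y)}};\banana{\F^\Simp(f,u)}$, and you compute its partial differential once along each side of that square. The decomposition is different, though.

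The paper keeps the full map into $X' \times \U(\F(Y'))$. It applies $\Tan$ to the whole naturality square of $\set{\eta}$ and precomposes with $i^{X,Y}_2$. It treats the $\eta_{(X,Y)}$-side directly with Lemma \ref{lem:PartialZero} and Proposition \ref{prop:t3Differential}, and uses Lemma \ref{lem:PartialZero} twice on the other side. Finally it cancels the common trailing factor $(\set{(f,u)};\set{\eta_{(X',Y')}})^*i^{X',\F(Y')}_2$, using that $i_2$ is a section.

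You post-compose with $\pi_2$ at the outset instead. Since $i^{X',\F(Y')}_2;\dif(\pi_2) = \id$ (the identity used in Lemma \ref{lem:D2PartialLinear}), this amounts to the paper's cancellation done up front. It gives you three simplifications:
\begin{itemize}
\item $\D^{X'}_{Y'}$ appears immediately, so Lemma \ref{lem:PartialZero} is needed only once on that side.
\item The projected composite $u;\eta_{Y'}$ depends only on the vertical map $(\id_X,u)$, so you need naturality of $\eta$ only inside the fibre above $X$.
\item That side becomes literally an instance of Proposition \ref{prop:Dprecomp}, which already packages the t.3 computation the paper redoes inline.
\end{itemize}
The paper's version is more self-contained, since it does not rely on Proposition \ref{prop:Dprecomp}, and it is phrased directly as the image under $\Tan$ of the naturality square of $\set{\eta}$. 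Yours is shorter and more modular.
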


\begin{proof}
$\eta$ is a natural transform from $\id_{\Simp(\C)}$ to $\U^\Simp \F^\Simp$. Hence, $\Tan(\set{\eta})$ is a natural transform from $\Tan \set{-}$ to $\Tan \banana{ \F^\Simp - }$. For every $(f,u) : (X, Y) \to (X', Y')$ in $\Simp(\C)$ the following diagram commutes:
\begin{center}
\begin{tikzcd}[column sep=4em, row sep=4em]
(X \times Y, \lambda(X \times Y)) \arrow[r, "\Tan(\set{\eta_{(X,Y)}})"] \arrow[d, "\Tan \set{(f,u)}"]& (X \times \U \F (Y), \lambda(X \times \U \F (Y))) \arrow[d, "\Tan \banana{\F^\Simp (f,u)}"] \\
(X \times Y, \lambda(X \times Y)) \arrow[r, "\Tan(\set{\eta_{(X',Y')}})"'] & (X \times \U \F Y, \lambda(X \times \U \F (Y))) 
\end{tikzcd}
\end{center}
By looking in the fiber above $X \times Y$ we get:
\begin{equation*}
\dif(\set{\eta_{(X,Y)}}); \set{\eta_{(X,Y)}}^*\dif(\banana{\F^\Simp (f,u)}) = \dif(\set{(f,u)}); \set{(f,u)}^*\dif(\set{\eta_{(X',Y')}})
\end{equation*}
Precomposing with $i^{X,Y}_2$ we obtain: 
\begin{align*}
i^{X,Y}_2; \dif(\set{\eta_{(X,Y)}}); &\set{\eta_{(X,Y)}}^*\dif(\banana{\F^\Simp (f,u)}) \\
&\underset{\ref{def:PartialDifferential}}{=} \dif_2(\set{\eta_{(X,Y)}}); \set{\eta_{(X,Y)}}^*\dif(\banana{\F^\Simp (f,u)}) \\
&\underset{\ref{lem:PartialZero}}{=} \dif_2(\set{\eta_{(X,Y)}}; \pi_2); \set{\eta_{(X,Y)}}^* i^{X, \F(Y)}_2; \set{\eta_{(X,Y)}}^*\dif(\banana{\F^\Simp (f,u)}) \\
&\underset{\ref{def:D}}{=} \D^X_Y; \set{\eta_{(X,Y)}}^* i^{X, \F(Y)}_2; \set{\eta_{(X,Y)}}^*\dif(\banana{\F^\Simp (f,u)}) \\
&\underset{\ref{def:PartialDifferential}}{=} \D^X_Y; \set{\eta_{(X,Y)}}^*\dif_2(\banana{\F^\Simp (f,u)}) \\
&\underset{\ref{prop:t3Differential}}{=} \D^X_Y; \set{\eta_{(X,Y)}}^*(\pi_1^*\F^\Simp (\id_X ,u);  \banana{\F^\Simp(f,u)}^*i^{X',\F(Y')}_2) \\
&\underset{\ref{def:SplitFib}}{=} \D^X_Y; \pi_1^*\F^\Simp (\id_X ,u);  (\set{\eta_{(X,Y)}} \banana{\F^\Simp(f,u)})^*i^{X',\F(Y')}_2 \\
&\underset{}{=} \D^X_Y; \pi_1^*\F^\Simp (\id_X ,u);  (\set{(f,u)}; \set{\eta_{(X',Y')}})^*i^{X',\F(Y')}_2
\end{align*}
\begin{align*}
i^{X,Y}_2; \dif(\set{(f,u)}); & \set{(f,u)}^*\dif(\set{\eta_{(X',Y')}}) \\
&\underset{\ref{def:PartialDifferential}}{=} \dif_2(\set{(f,u)}); \set{(f,u)}^*\dif(\set{\eta_{(X',Y')}}) \\
&\underset{\ref{lem:PartialZero}}{=} \dif_2(\set{(f,u)}; \pi_2); \set{(f,u)}^*i^{X',Y'}_2; \set{(f,u)}^*\dif(\set{\eta_{(X',Y')}}) \\
&\underset{\ref{def:PartialDifferential}}{=} \dif_2(\set{(f,u)}; \pi_2); \set{(f,u)}^*\dif_2(\set{\eta_{(X',Y')}}) \\
&\underset{\ref{lem:PartialZero}}{=} \dif_2(\set{(f,u)}; \pi_2); \set{(f,u)}^*(\D^{X'}_{Y'}; \set{\eta_{(X',Y')}}^*i^{X',\F(Y')}_2) \\
&\underset{\ref{def:SplitFib}}{=}  \dif_2(\set{(f,u)}; \pi_2); \set{(f,u)}^*\D^{X'}_{Y'}; (\set{(f,u)}; \set{\eta_{(X',Y')}})^*i^{X',\F(Y')}_2
\end{align*}
Looking at the definition of $i$ (\ref{def:I}) we notice that it is a section (since $\iota$ is a section and $\phi$ and isomorphism) hence:
\begin{equation*}
\D^X_Y; \pi_1^*\F^\Simp (\id_X ,u) = \dif_2(\set{(f,u)}; \pi_2); \set{(f,u)}^*\D^{X'}_{Y'} \qedhere
\end{equation*}
\end{proof}

\begin{prop}
\label{prop:Dnat}
The familly of morphisms $\D^X_A$, for every $(X, A) \in \LSimp(\C)$ is a natural transformation from $W \implies \oc^\Simp W$. Diagramatically, for every $(f,u) : (X,A) \to (Y,B) \in \LSimp(\C)$:
\begin{center}
\begin{tikzcd}[column sep=4em, row sep=4em]
(X \times \U(A), A) \arrow[r, "{\D^X_A}"] \arrow[d, "{W(f,u)}"'] & (X \times \U(A), \oc A) \arrow[d, "{\oc^\Simp W(f,u)}"] \\
(Y \times \U(B), B) \arrow[r, "{\D^Y_B}"'] & (Y \times \U(B), \oc B) 
\end{tikzcd}
\end{center} 
\end{prop}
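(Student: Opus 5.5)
The plan is to reduce naturality of $\D$ to Proposition~\ref{prop:DDifferential}, applied to the morphism $\U^\Simp(f,u)$ of $\Simp(\C)$. First I would split the naturality square into its two components. The first component of both composites is $\banana{f,u}$, so it suffices to compare the second components. Via the factorisation of Remark~\ref{rem:FibreFactor}, this amounts to comparing vertical morphisms above $X \times \U(A)$, after reindexing along $\banana{f,u} = \set{\U^\Simp(f,u)}$.

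By Definition~\ref{def:WeakFunctor}, $W(f,u)$ factors as $\pi_1^*(\id_X,u);\overline{\banana{f,u}}$. Similarly, $\oc^\Simp W(f,u)$ should factor as the vertical part $\pi_1^*\F^\Simp\U^\Simp(\id_X,u)$ followed by the cartesian lift, since $\oc^\Simp = \F^\Simp\U^\Simp$ and both functors are split fibred. Using Proposition~\ref{prop:ReindexingFunctors} to commute $\D^Y_B$ past the cartesian morphism, the square reduces to the vertical equation
\begin{equation*}
\pi_1^*(\id_X,u);\banana{f,u}^*\D^Y_B \;=\; \D^X_A;\pi_1^*\oc^\Simp(\id_X,u)
\end{equation*}
in the fibre above $X\times\U(A)$.

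Next, I would invoke Proposition~\ref{prop:DDifferential} with $(f',u') := \U^\Simp(f,u) : (X,\U(A)) \to (Y,\U(B))$. It gives
\begin{equation*}
\dif_2(\set{\U^\Simp(f,u)};\pi_2);\set{\U^\Simp(f,u)}^*\D^{Y}_{B} = \D^X_A;\pi_1^*\F^\Simp(\id_X,u').
\end{equation*}
Three identifications then finish the argument:
\begin{itemize}
\item $\set{\U^\Simp(f,u)} = \banana{f,u}$ holds by definition of $\banana{-}$ (Definition~\ref{def:Comprehension}).
\item $\set{\U^\Simp(f,u)};\pi_2 = \banana{f,u};\pi_2$ is linear in its second argument. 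So Lemma~\ref{lem:D2PartialLinear} gives $\dif_2(\banana{f,u};\pi_2) = \pi_1^*(\id_X,u)$.
\item $\F^\Simp(\id_X,u') = \F^\Simp\U^\Simp(\id_X,u)$, which is exactly the vertical part of $\oc^\Simp(f,u)$ restricted to $\id_X$. This uses that $\U^\Simp(\id_X,u)$ and $\U^\Simp(f,u)$ have the same second component.
\end{itemize}
Substituting these three facts yields the displayed vertical equation.

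The main obstacle is the bookkeeping of the factorisation of $\oc^\Simp W(f,u)$. Concretely, one must check that its vertical part is $\pi_1^*\oc^\Simp(\id_X,u)$, which needs $\oc^\Simp$ to commute with reindexing, and that the codomains match up to reindexing. Split fibredness of $\F^\Simp$ and $\U^\Simp$ (Propositions~\ref{U_simp_func} and~\ref{F_simp_func}) gives $\oc^\Simp\pi_1^* = \pi_1^*\oc^\Simp$ on the nose, and I would use that together with the splitting equalities of Definition~\ref{def:SplitFib}. A second point to verify is that $W(f,u)$ as defined has vertical component exactly $\pi_1^*(\id_X,u)$, i.e.\ that $\F(\pi_1)\tensor\id_A;u$ agrees with the second component of $\pi_1^*(\id_X,u)$ composed with the cartesian lift. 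I would check this directly from the composition formula in Definition~\ref{def_lnl_simp_cat}, using the comonoid counit law for $\weak$.
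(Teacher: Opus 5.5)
Your proposal is correct and follows essentially the same route as the paper. Both arguments apply Proposition~\ref{prop:DDifferential} to $\U^\Simp(f,u)$, identify $\dif_2(\banana{f,u};\pi_2)$ with $\pi_1^*(\id_X,u)$, and pass between the resulting vertical equation and the naturality square by composing with the cartesian lift $\overline{\banana{f,u}}$, using that $\oc^\Simp$ is split fibred. The differences are cosmetic: you cite Lemma~\ref{lem:D2PartialLinear} where the paper re-derives it inline from Propositions~\ref{prop:DifferentialComp} and~\ref{prop:t3Differential}, and you spell out the identification $\F^\Simp(\id_X,u') = \oc^\Simp(\id_X,u)$, which the paper leaves implicit.
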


\begin{proof}
Let $(f,u) : (X,A) \to (Y,B)$ in $\LSimp(\C)$. Using proposition \ref{prop:DDifferential} we know that 
\begin{equation*}
\D^X_A; \pi_1^* \oc^\Simp (\id_X ,u) = \dif_2(\banana{f,u}; \pi_2); \banana{f,u}^*\D^{Y}_B
\end{equation*}
We postcompose both sides with $\overline{\banana{f,u}}$:
\begin{align*}
\D^X_A; \pi_1^* \oc^\Simp (\id_X ,u); \overline{\banana{f,u}} &\underset{\ref{rem:FibFuncIndex}}{=} \D^X_A;  \oc^\Simp \pi_1^*(\id_X ,u); \overline{\banana{f,u}} \\
&\underset{\ref{def:WeakFunctor}}{=}  \D^X_A;  \oc^\Simp W(f,u)
\end{align*}
\begin{align*}
 \dif_2(\banana{f,u}; \pi_2); \banana{f,u}^*\D^{Y}_B; \overline{\banana{f,u}} &\underset{\ref{prop:ReindexingFunctors}}{=} \dif_2(\banana{f,u}; \pi_2); \overline{\banana{f,u}}; \D^{Y}_B\\
&\underset{\ref{prop:DifferentialComp}}{=}  \dif_2(\banana{f,u}); \banana{f,u}^*D(\pi_2); \overline{\banana{f,u}}; \D^{Y}_B \\
&\underset{\ref{prop:t3Differential}}{=}  \pi_1^*(\id_X, u); \banana{f,u}^*i^{Y,A}_2; \banana{f,u}^*D(\pi_2); \overline{\banana{f,u}}; \D^{Y}_B \\
&\underset{}{=}  \pi_1^*(\id_X, u); \banana{f,u}^*(i^{Y,A}_2; D(\pi_2)); \overline{\banana{f,u}}; \D^{Y}_B \\
&\underset{}{=} \pi_1^*(\id_X, u); \overline{\banana{f,u}}; \D^{Y}_B \\
&\underset{\ref{def:WeakFunctor}}{=} W(f,u); \D^{Y}_B
\end{align*}
The family of morphisms $\D^X_A$ is indeed a natural transformation from $W$ to $\oc^\Simp W$. 
\end{proof}

Notice that we restricted our setting to the familly $(\D^X_A)_{X \in \C, A \in \L}$ istead of the familly $(\D^X_Y)_{X,Y \in \C}$. This last family is also a natural transformation relating morphisms in $\Simp(\C)$ to their partial differential in $\LSimp(\C)$. 

\begin{defi}[Deriving transform]
\label{def:partial}
For every $(X,A) \in \LSimp(\C)$, define \\ $\partial^X_A : \oc^\Simp (X, A) \tensor_X (X,A) \to \oc^\Simp (X,A)$ as 
\begin{equation*}
\partial^X_A := \Sigma_{(X,A)} \D^X_A; \mu^{(X,A)}_{(X, ! A)}
\end{equation*}
Where $\mu$ is the counit of the coproduct adjunction (see prop. \ref{linear_sigma_types}). It defines a family of morphisms $\partial$ called a deriving transform. 
\end{defi}

In other words, for every $X \in \C$ and every $A \in \L$, $\partial^X_A$ is of type $(X, \oc A \tensor A) \to (X, \oc A)$. Now we work towards proving that for every $X \in \C$ fixed, the family $(\partial^X_A)_{A \in \L}$ is a deriving transform (def. \ref{def:DSC}) in the fiber $\LSimp(\C)_X$. 

\begin{prop}
For every $X \in \C$, the familly of morphisms $\partial^X_A : \oc^\Simp (X, A) \tensor_X (X,A) \to \oc^\Simp (X,A)$ is a natural transformation in $\LSimp(\C)_X$ from $\oc^\Simp_X \tensor_X \id_{\LSimp(\C)_X}$ to $\oc^\Simp_X$ where $\oc^\Simp_X$ denotes the restriction of the functor $\oc^\Simp$ to the fiber above $X$. 
\end{prop}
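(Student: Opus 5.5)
We must show: for a vertical $u=(\id_X,u'):(X,A)\to(X,B)$ in $\LSimp(\C)_X$,
\begin{equation*}
(\oc^\Simp u \tensor_X u);\partial^X_B = \partial^X_A;\oc^\Simp u .
\end{equation*}
The plan is to derive this from the naturality of $\D$ (Proposition \ref{prop:Dnat}) and the naturality of the $\Sigma \dashv \pi_1^*$ structure, specialising $(f,u)$ to a vertical morphism $(\id_X,u')$.

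First I would apply Proposition \ref{prop:Dnat} to $(\id_X,u')$. Then $\banana{\id_X,u'}$ is a morphism $X\times\U(A)\to X\times\U(B)$ over $\id_X$. The weakening functor gives $W(\id_X,u') = \pi_1^*(\id_X,u');\overline{\banana{\id_X,u'}}$, and similarly $\oc^\Simp W(\id_X,u')$ is $\pi_1^*\oc^\Simp(\id_X,u')$ followed by the same cartesian arrow. Using uniqueness of vertical factorisation (Remark \ref{rem:FibreFactor}), or equivalently Proposition \ref{prop:DDifferential} directly, I obtain the vertical identity
\begin{equation*}
\D^X_A;\pi_1^*\oc^\Simp(\id_X,u') = \pi_1^*(\id_X,u');\banana{\id_X,u'}^*\D^X_B
\end{equation*}
in the fibre above $X\times\U(A)$.

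Next I apply $\Sigma_{(X,A)}$ to both sides and postcompose with $\mu^{(X,A)}_{(X,\oc B)}$. The left-hand side becomes $\Sigma_{(X,A)}\D^X_A;\Sigma_{(X,A)}\pi_1^*\oc^\Simp u$. Naturality of the counit $\mu$ gives $\Sigma\pi_1^*(g);\mu=\mu;g$, and with $\mu$ taken at $(X,\oc A)$ this yields $\partial^X_A;\oc^\Simp u$. For the right-hand side I would use Lemma \ref{lem:SigmaF} with $J=\U(A)$, $K=\U(B)$ and the $\Simp(\C)_X$-morphism $\U^\Simp(\id_X,u')$, whose comprehension is $\banana{\id_X,u'}$. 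This rewrites $\Sigma_{(X,A)}(\banana{\id_X,u'}^*\D^X_B);\mu$ as $\F^\Simp\U^\Simp(\id_X,u')\tensor_X\id;\Sigma_{(X,B)}\D^X_B;\mu$, that is $(\oc^\Simp u\tensor_X \id_{(X,B)});\partial^X_B$. The remaining factor is $\Sigma_{(X,A)}\pi_1^*(\id_X,u')$, which by the description of $\Sigma$ as $\F^\Simp(X,\U(A))\tensor_X-$ (Remark \ref{rem:SigmaContWeak}) equals $\id_{(X,\oc A)}\tensor_X u$. This factor has to be commuted past $\Sigma$ of the reindexed $\D$ before Lemma \ref{lem:SigmaF} is applied. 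Functoriality of $\Sigma$ and bifunctoriality of $\tensor_X$ then combine the two pieces into $(\oc^\Simp u\tensor_X u);\partial^X_B$.

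I expect the main obstacle to be this bookkeeping step. The factor $\Sigma_{(X,A)}\pi_1^*(\id_X,u')$ must be identified with $\id\tensor_X u$, and Lemma \ref{lem:SigmaF} must be applied to the composite $\pi_1^*u;\banana{u}^*\D$ rather than to $\D$ alone. I would first try splitting the composite with functoriality of $\Sigma$ and checking the identification directly from Definition \ref{linear_sigma_types}. This uses $m_{X,\U(A)}\tensor\id;\pi_1^*$-style simplification via Lemma \ref{F_pi}, where weakening of the duplicated $\F(\U(A))$ cancels one copy of the contraction.
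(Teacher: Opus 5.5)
Your proof is correct and is essentially the paper's argument. Both rest on the same ingredients:
\begin{itemize}
\item the vertical identity $\pi_1^*(\id_X,u');\banana{\id_X,u'}^*\D^X_B = \D^X_A;\pi_1^*\oc^\Simp(\id_X,u')$, which you import from Proposition~\ref{prop:Dnat} (or from Proposition~\ref{prop:DDifferential} with Lemma~\ref{lem:D2PartialLinear}), while the paper re-derives it inline;
\item Lemma~\ref{lem:SigmaF};
\item the identification $\Sigma_{(X,A)}\pi_1^*(-) = \id_{(X,\oc A)}\tensor_X(-)$;
\item naturality of $\mu$.
\end{itemize}

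The only difference is cosmetic. You factor $\oc^\Simp u\tensor_X u$ as $(\id_{(X,\oc A)}\tensor_X u);(\oc^\Simp u\tensor_X \id_{(X,B)})$, split with functoriality of $\Sigma_{(X,A)}$, and apply Lemma~\ref{lem:SigmaF} to $\D^X_B$ alone. Contrary to your worry, nothing then needs to be commuted. The paper factors it the other way and applies the lemma to $\pi_1^*u;\D^X_B$.
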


\begin{proof}
Let $f : (X, A) \to (X, B)$ in $\LSimp(\C)_X$. 
\begin{align*}
\oc^\Simp f \tensor_X f; \partial^X_B &\underset{\ref{rem:SigmaContWeak}}{=} \oc^\Simp f \tensor_X \id_A; \Sigma_{X,A} \pi_1^*f; \partial^X_B \\
&\underset{\ref{def:partial}}{=}  \oc^\Simp f \tensor_X \id_A; \Sigma_{(X,B)} (\pi_1^*f; \D^X_B); \mu^{(X,B)}_{(X, ! B)}\\
&\underset{\ref{lem:SigmaF}}{=}  \Sigma_{(X,A)} [\banana{f}^*(\pi_1^*f; \D^X_B)]; \mu^{(X,A)}_{(X, ! B)} \\
&\underset{\ref{def:SplitFib}}{=}  \Sigma_{(X,A)} [\pi_1^*f; \banana{f}^*\D^X_B)]; \mu^{(X,A)}_{(X, ! B)} \\
&\underset{\ref{lem:D2PartialLinear}}{=} \Sigma_{(X,A)} [\dif_2(\banana{f};\pi_2); \banana{f}^*\D^X_B)]; \mu^{(X,A)}_{(X, ! B)} \\
&\underset{\ref{def:D}}{=} \Sigma_{(X,A)} [\dif_2(\banana{f};\pi_2); \banana{f}^*\dif_2(\set{\eta_{(X,B)}}; \pi_2))]; \mu^{(X,A)}_{(X, ! B)} \\
&\underset{\ref{lem:PartialZero}}{=} \Sigma_{(X,A)} [\dif_2(\banana{f}); \banana{f}^*\dif(\set{\eta_{(X,B)}}; \pi_2))]; \mu^{(X,A)}_{(X, ! B)} \\
&\underset{\ref{prop:DifferentialComp}}{=} \Sigma_{(X,A)} \dif_2(\banana{f};\set{\eta_{(X,B)}}; \pi_2)); \mu^{(X,A)}_{(X, ! B)} \\
&\underset{}{=} \Sigma_{(X,A)} \dif_2(\set{\eta_{(X,A)}}; \banana{\oc^\Simp f}; \pi_2)); \mu^{(X,A)}_{(X, ! B)} \\
&\underset{\ref{prop:DifferentialComp}}{=} \Sigma_{(X,A)} [\dif_2(\set{\eta_{(X,A)}}); \set{\eta_{(X,A)}}^*\dif(\banana{\oc^\Simp f}; \pi_2))]; \mu^{(X,A)}_{(X, ! B)} \\
&\underset{\ref{lem:PartialZero}}{=} \Sigma_{(X,A)} [\D^X_A; \set{\eta_{(X,A)}}^*\dif_2(\banana{\oc^\Simp f}; \pi_2))]; \mu^{(X,A)}_{(X, ! B)} \\
&\underset{\ref{lem:D2PartialLinear}}{=} \Sigma_{(X,A)} [\D^X_A; \set{\eta_{(X,A)}}^*\pi_1^*(\oc^\Simp f)]; \mu^{(X,A)}_{(X, ! B)} \\
&\underset{\ref{def:SplitFib}}{=} \Sigma_{(X,A)} [\D^X_A; \pi_1^*(\oc^\Simp f)]; \mu^{(X,A)}_{(X, ! B)} \\
&\underset{\ref{def:SplitFib}}{=} \Sigma_{(X,A)} \D^X_A; \Sigma_{(X,A)} \pi_1^*(\oc^\Simp f); \mu^{(X,A)}_{(X, ! B)} \\
&\underset{}{=} \Sigma_{(X,A)} \D^X_A;  \mu^{(X,A)}_{(X, ! A)}; \oc^\Simp f \\
&\underset{\ref{def:partial}}{=} \partial^X_A; \oc^\Simp f \qedhere
\end{align*}
\end{proof}

Now, exactly like with $\D$, one needs a way to compute the composition of the deriving transform with a morphism of shape $\oc A \to B$.

\begin{lem}
\label{lem:partial_dif}
Let $f : (X, \oc A) \to (X,B)$ in $\LSimp(\C)_X$, then 
\begin{equation*}
\Sigma_{(X,A)} \dif_2(\set{f^\dagger}; \pi_2) ; \mu^{(X,A)}_{(X,B)} = \partial_{(X,A)}; f
\end{equation*}
\end{lem}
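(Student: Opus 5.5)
The plan is to reduce the statement to Proposition \ref{prop:Dprecomp} and then move $\pi_1^*f$ past $\Sigma_{(X,A)}$ and the counit $\mu$. Here $f : (X,\oc A) \to (X,B)$ is a vertical morphism above $X$, and $\oc A = \F(\U(A))$. So $f$ is exactly of the shape required by Proposition \ref{prop:Dprecomp}, with $Y = \U(A)$. That proposition gives
\begin{equation*}
\dif_2(\set{f^\dagger};\pi_2) = \D^X_A; \pi_1^* f .
\end{equation*}
The left-hand side of the lemma therefore becomes $\Sigma_{(X,A)}(\D^X_A;\pi_1^*f);\mu^{(X,A)}_{(X,B)}$.

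Next, I will use that $\Sigma_{(X,A)}$ is a functor on the fibre above $X \times \U(A)$. This splits the expression into
\begin{equation*}
\Sigma_{(X,A)}\D^X_A;\ \Sigma_{(X,A)}\pi_1^*f;\ \mu^{(X,A)}_{(X,B)} .
\end{equation*}
Then I apply naturality of the counit $\mu^{(X,A)}$ of the adjunction $\Sigma_{(X,A)} \dashv \pi_1^*$. For the morphism $f : (X,\oc A) \to (X,B)$ in $\LSimp(\C)_X$, naturality reads
\begin{equation*}
\Sigma_{(X,A)}\pi_1^* f;\ \mu^{(X,A)}_{(X,B)} = \mu^{(X,A)}_{(X,\oc A)};\ f .
\end{equation*}
Substituting this in gives $\Sigma_{(X,A)}\D^X_A;\mu^{(X,A)}_{(X,\oc A)};f$. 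By Definition \ref{def:partial} this is precisely $\partial^X_A; f$, which is the right-hand side of the lemma.

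The only real check is naturality of $\mu$, which is immediate from the adjunction. If one wants it explicitly, it can be verified from the formulas. The map $\Sigma_{(X,A)}\pi_1^*f$ has second component that copies $\F(\U(A))$ with $\cont$, applies $m$, and then applies $f$ after weakening the $\U(A)$ part (Lemma \ref{F_pi}). Composing with $\mu$, which weakens the remaining $\F(\U(A))$, collapses $\cont$ followed by $\weak$ to the identity by the counit law of the comonoid.

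The step I expect to need the most care is a typing issue in applying Proposition \ref{prop:Dprecomp}. There, $f^\dagger$ is understood as the transpose of $f$ under the fibred adjunction $\F^\Simp \dashv \U^\Simp$ of Proposition \ref{U_F_lnl}. I must confirm that $\set{f^\dagger};\pi_2$ in the lemma denotes the same morphism $X \times \U(A) \to \U(B)$ as in Proposition \ref{prop:Dprecomp}, namely $\set{\eta_{(X,\U(A))}};\banana{f};\pi_2$. Once that is matched, the rest is the two-line calculation above.
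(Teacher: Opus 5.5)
Your proposal is correct and follows the paper's proof step for step: rewrite via Proposition \ref{prop:Dprecomp}, split using functoriality of $\Sigma_{(X,A)}$, move $\pi_1^*f$ past the counit by naturality of $\mu^{(X,A)}$, and conclude with Definition \ref{def:partial}. Your reading of $f^\dagger$ as the transpose under $\F^\Simp \dashv \U^\Simp$, i.e.\ $\set{f^\dagger} = \set{\eta_{(X,\U(A))}};\banana{f}$, is also the convention the paper uses in proving Proposition \ref{prop:Dprecomp}.
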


\begin{proof}
\begin{align*}
\Sigma_{(X,A)} (\dif_2(\set{f^\dagger}; \pi_2)) ; \mu^{(X,A)}_{(X,B)} 
&\underset{\ref{prop:Dprecomp}}{=}  \Sigma_{(X,A)} (\D^X_A; \pi_1^*f) ; \mu^{(X,A)}_{(X,B)}\\
&\underset{}{=}  \Sigma_{(X,A)} \D^X_A; \Sigma_{(X,A)} \pi_1^*f ; \mu^{(X,A)}_{(X,B)}\\
&\underset{}{=} \Sigma_{(X,A)} \D^X_A; \mu^{(X,A)}_{(X,\oc A)}; f \\
&\underset{\ref{def:partial}}{=} \partial^X_A; f \qedhere
\end{align*}
\end{proof}

\begin{rem}
\label{rem:genpartial}
Exactly as per the last remark, we could have defined a more general version of $\partial^X$, defining for every $X,Y \in \C$, $\partial^X_Y$ as $\Sigma_{(X,Y)} \D^X_Y; \mu^{(X,Y)}_{(X, \F(Y))}$ of type $(X, \F(Y) \tensor \lambda(Y)) \to (X, \F(Y))$. For simplicity, we stick to the usual version of the deriving transform and not this more general form. 
\end{rem}

Now that we know that $\partial^X$ is a natural transform, to show that it is indeed a deriving transform, it suffices to show that it satisfies the linear rule, the chain rule, and the Leibniz rule \cite{blute_differential_2020}.

\begin{prop}[Linear Rule]
\label{prop:linearrule}
For every $X \in \C$, $\partial^X$ satisfies the linear rule.
\end{prop}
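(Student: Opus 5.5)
The linear rule in the fibre above $X$ reads $\partial^X_A; \dere_{(X,A)} = \weak_{(X,A)} \tensor_X \id_{(X,A)}$. My plan is to apply Lemma \ref{lem:partial_dif} with $f := \dere_{(X,A)} = (\id_X, \weak_X \tensor \dere_A) : (X,\oc A) \to (X,A)$. This gives
\begin{equation*}
\partial^X_A; \dere_{(X,A)} = \Sigma_{(X,A)} \dif_2(\set{\dere_{(X,A)}^\dagger}; \pi_2); \mu^{(X,A)}_{(X,A)}.
\end{equation*}
The task then reduces to identifying $\set{\dere_{(X,A)}^\dagger}; \pi_2$ as a morphism of $\C$. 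Writing $\dagger$ as transposition across the fibred adjunction of Prop. \ref{U_F_lnl}, the counit $\dere_{(X,A)}$ transposes to the identity of $\U^\Simp(X,A)$. Concretely, $\dere_{(X,A)}^\dagger = \eta_{(X,\U(A))}; \U^\Simp(\dere_{(X,A)}) = \id_{(X,\U(A))} = (\id_X, \pi_2)$. Hence $\set{\dere_{(X,A)}^\dagger}; \pi_2 = \pi_2 : X \times \U(A) \to \U(A)$.

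Next I compute $\dif_2(\pi_2)$. The projection $\pi_2$ is linear in its second argument: by Lemma \ref{U_weak}, $\pi_2 = \banana{\id_X, \weak_X \tensor \id_A}; \pi_2$, since $\banana{\id_X, \weak_X\tensor\id_A} = \set{(\id_X,\pi_2)} = \id$. Lemma \ref{lem:D2PartialLinear} then gives $\dif_2(\pi_2) = \pi_1^*(\id_X, \weak_X \tensor \id_A) = \id_{(X\times\U(A), A)}$, because reindexing is functorial and $(\id_X,\weak_X\tensor\id_A)$ is the identity of $\LSimp(\C)_X$. Alternatively, this follows directly from the observation, used in the proof of Lemma \ref{lem:D2PartialLinear}, that $i^{X,A}_2; \dif(\pi_2)$ is the identity (via t.1 and $\varphi$).

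It remains to check $\Sigma_{(X,A)}(\id); \mu^{(X,A)}_{(X,A)} = \mu^{(X,A)}_{(X,A)}$. By the explicit formula for the counit, this equals $(\id_X, \weak_X \tensor \weak_{\U(A)} \tensor \id_A) = (\id_X, \weak_X \tensor \weak_A \tensor \id_A)$. On the other side, Remark \ref{rem:SigmaContWeak} identifies $\mu^{(X,A)}_{(X,A)}$ with $\weak^X_A \tensor_X \id_{(X,A)}$. Unfolding $\tensor_X$ (Prop. \ref{tensor_LS}) with the fibred weakening of Corollary \ref{def_f_c_w} gives the same morphism, using counitality of $\cont_X$ against $\weak_X$.

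The main obstacle I expect is the first identification, $\dere_{(X,A)}^\dagger = \id$. The paper's $\dagger$ is stated for $\L$, so I must check that its fibred version is the fibred adjunction transpose. This means unfolding $\eta_{(X,\U(A))} = (\id_X, \pi_2;\eta_{\U(A)})$ and $\U^\Simp(\id_X, \weak_X\tensor\dere_A) = (\id_X, \pi_2;\U(\dere_A))$ (Lemma \ref{U_weak}), composing them in $\Simp(\C)$, and applying the triangle identity $\eta_{\U(A)};\U(\dere_A) = \id$. The remaining steps are bookkeeping with strictness and splitness.
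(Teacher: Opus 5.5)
Your proposal is correct and follows the paper's proof step for step. You apply Lemma~\ref{lem:partial_dif} to $\dere^X_A$, reduce $\set{(\dere^X_A)^\dagger};\pi_2$ to $\pi_2$, use $\dif_2(\pi_2)=\id$, and conclude with Remark~\ref{rem:SigmaContWeak}. The only difference is that you justify the two middle equalities explicitly, via the triangle identity and Lemma~\ref{lem:D2PartialLinear}, whereas the paper leaves them unjustified.
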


\begin{proof}
\begin{align*}
\partial^X_A; \dere^X_A &\underset{\ref{lem:partial_dif}}{=} \Sigma_{(X,A)} \dif_2(\set{(\dere^X_A)^\dagger}; \pi_2) ; \mu^{(X,A)}_{(X,A)} \\
&\underset{}{=} \Sigma_{(X,A)} \dif_2(\pi_2) ; \mu^{(X,A)}_{(X,A)} \\
&\underset{}{=} \mu^{(X,A)}_{(X,A)} \\
&\underset{\ref{rem:SigmaContWeak}}{=} \weak^X_A \tensor_X \id_{(X,A)} \qedhere
\end{align*}
\end{proof}

To show that $\partial$ satisfies the chain rule, we need a lemma. This result is simply a matter of using the comonad identities of $\Sigma \pi_1$.

\begin{lem}
\label{lem:SD=partial}
For every $X \in \C$ and $A \in \L$, $\Sigma_{(X,A)} \D^X_A = \cont^X_A\tensor_X \id_{(X,A)}; \id_{(X, \oc A)} \tensor_X \partial^X_A$
\end{lem}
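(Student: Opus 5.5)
The plan is to unfold $\partial^X_A$ on the right-hand side and reduce the identity to the comonad laws of $\Sigma_{(X,A)}\pi_1^*$ recalled in Remark~\ref{rem:SigmaContWeak}. By Definition~\ref{def:partial}, $\partial^X_A = \Sigma_{(X,A)}\D^X_A;\mu^{(X,A)}_{(X,\oc A)}$. The right-hand side therefore reads
\begin{equation*}
\cont^X_A\tensor_X \id_{(X,A)};\ \id_{(X,\oc A)}\tensor_X\big(\Sigma_{(X,A)}\D^X_A;\mu^{(X,A)}_{(X,\oc A)}\big).
\end{equation*}
Since $\Sigma_{(X,A)}$ is just $\F^\Simp(X,\U(A))\tensor_X -$ on the fibre, the functor $\id_{(X,\oc A)}\tensor_X \Sigma_{(X,A)}(-)$ is, up to the strict associativity we assume, $\Sigma_{(X,A)}\Sigma_{(X,A)}\pi_1^*\pi_1^*(-)$-like. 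More concretely, it equals $\Sigma_{(X,A)}$ applied to the reindexed map $\pi_1^*\Sigma_{(X,A)}(\cdot)$.

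The first key step is to use Remark~\ref{rem:SigmaContWeak} to identify $\cont^X_A\tensor_X\id_{(X,A)}$ with the comultiplication $\delta^{(X,A)}_{(X,A)}$ of the comonad $\Sigma_{(X,A)}\pi_1^*$. It also identifies $\id_{(X,\oc A)}\tensor_X \mu^{(X,A)}_{(X,\oc A)}$ with $\Sigma_{(X,A)}\pi_1^*\mu^{(X,A)}$.

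The second key step is naturality of $\delta$, that is, of the unit $\nu$ whiskered by $\Sigma$. Writing $\Sigma_{(X,A)}\D^X_A$ as $\Sigma(\text{map in the fibre over } X\times\U(A))$, we get $\delta;\Sigma\pi_1^*\Sigma(\D^X_A) = \Sigma(\D^X_A);\delta$. This holds because $\delta=\Sigma\nu\pi_1^*$ and $\nu$ is natural, once we check that $\D^X_A$ lives over $X\times\U(A)$ with source $\pi_1^*(X,A)$, so that it is a map $\pi_1^*(X,A)\to\pi_1^*(X,\oc A)$. The right-hand side then becomes $\Sigma\D^X_A;\delta_{(X,\oc A)};\Sigma\pi_1^*\mu_{(X,\oc A)}$.

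The final step is the comonad counit law $\delta;\Sigma\pi_1^*\mu=\id$, which leaves exactly $\Sigma_{(X,A)}\D^X_A$.

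The main obstacle is the bookkeeping in the second step. We must verify, from the explicit formula in Definition~\ref{linear_sigma_types} and the unit $\nu$, that $\id_{(X,\oc A)}\tensor_X\Sigma_{(X,A)}(g)$ coincides with $\Sigma_{(X,A)}\pi_1^*\Sigma_{(X,A)}(g)$ for vertical $g$ over $X\times\U(A)$. We must also check that the comultiplication in Remark~\ref{rem:SigmaContWeak} is natural in this sense. If the abstract argument is awkward, I would fall back on a direct computation of second components in $\L$, using that $\cont_{\U(A)}$ is coassociative and counital with $\weak$ and that $m_{X,\U(A)}$ interacts with $\cont$ as in Lemma~\ref{cont_dagger}.
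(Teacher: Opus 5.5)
Your argument is correct and is essentially the paper's proof, run from right to left. The paper's first step, $\Sigma_{(X,A)}\D^X_A = \delta^{(X,A)}_{(X,A)};\Sigma_{(X,A)}\pi_1^*(\Sigma_{(X,A)}\D^X_A);\Sigma_{(X,A)}\pi_1^*\mu^{(X,A)}_{(X,\oc A)}$, is exactly your naturality-of-$\nu$ step applied to $\D^X_A:\pi_1^*(X,A)\to\pi_1^*(X,\oc A)$ (which indeed need not be a reindexed map) combined with the triangle identity; after that, both proofs fold in $\partial^X_A$ and use Remark~\ref{rem:SigmaContWeak} to identify $\delta$ with $\cont^X_A\tensor_X\id_{(X,A)}$ and $\Sigma_{(X,A)}\pi_1^*(-)$ with $\id_{(X,\oc A)}\tensor_X -$, so your version just spells out the comonad bookkeeping that the paper compresses into one unlabelled equality.
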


\begin{proof}
\begin{align*}
\Sigma_{(X,A)} \D^X_A &\underset{}{=} \delta^{(X,A)}_{(X,A)};  \Sigma_{(X,A)} \pi_1^*(\Sigma_{(X,A)} \D^X_A); \Sigma_{(X,A)} \pi_1^*\mu^{(X,A)}_{(X,\oc A)} \\
&\underset{\ref{def:partial}}{=} \delta^{(X,A)}_{(X,A)};  \Sigma_{(X,A)} \pi_1^*(\partial^X_A) \\
&\underset{\ref{rem:SigmaContWeak}}{=} \cont^X_A\tensor_X \id_{(X,A)}; \Sigma_{(X,A)} \pi_1^*(\partial^X_A) \\
&\underset{\ref{rem:SigmaContWeak}}{=} \cont^X_A\tensor_X \id_{(X,A)}; \id_{(X, \oc A)} \tensor_X \partial^X_A \qedhere
\end{align*}
\end{proof}

\begin{prop}[Chain Rule]
\label{prop:lchainrule}
For every $X \in \C$, $\partial^X$ satisfies the chain rule.
\end{prop}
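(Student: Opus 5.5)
We have to show, in the fibre $\LSimp(\C)_X$, that $\partial^X_A; \dig^X_A = \cont^X_A \tensor_X \id_{(X,A)}; \dig^X_A \tensor_X \partial^X_A; \partial^X_{\oc A}$. The plan mirrors the proof of the linear rule (Proposition \ref{prop:linearrule}). First apply Lemma \ref{lem:partial_dif} to $f := \dig^X_A : (X,\oc A) \to (X, \oc\oc A)$. This rewrites the left-hand side as $\Sigma_{(X,A)} \dif_2(\set{(\dig^X_A)^\dagger}; \pi_2); \mu^{(X,A)}_{(X,\oc\oc A)}$. Next observe that $(\dig^X_A)^\dagger$ factors as $\set{\eta_{(X,A)}}$ followed by $\banana{\oc^\Simp \eta\text{-type map}}$. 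Concretely, $\set{(\dig^X_A)^\dagger};\pi_2 = \set{\eta_{(X,A)}};\pi_2 ; \eta_{\U(\oc A)}$ up to the obvious pairing, since $\dig = \F(\eta_{\U})$. More usefully, write it as the composite $\set{\eta_{(X,A)}}; \set{\eta_{(X,\U\F\U A)}}$-style map $X\times \U A \to X\times \U\oc A \to \U\oc\oc A$. Then apply the composition rule Proposition \ref{prop:DifferentialComp} together with Lemma \ref{lem:PartialZero}, exactly as in the proof of naturality of $\partial$.

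This should yield, in the fibre above $X\times \U(A)$, an expression of the form
\begin{equation*}
\dif_2(\set{(\dig^X_A)^\dagger};\pi_2) = \D^X_A; \set{\eta}^*\bigl(\text{partial differential of } \U(\dig)\text{-like map in its } \U(\oc A)\text{ argument}\bigr).
\end{equation*}
The inner map is of the form $g^\dagger$ for $g = \dig$ viewed over $X\times \U(A)$, with input $\oc A$ and the point being $\eta(a)$. So Proposition \ref{prop:Dprecomp} (or its reindexed form via Proposition \ref{prop:DDifferential}) turns it into $\set{\eta}^*\D^{X}_{\oc A}$ followed by $\pi_1^*$ of the Kleisli lift. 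The reindexing along $\set{\eta_{(X,A)}}$ is exactly what, after applying $\Sigma_{(X,A)}$ and using Lemma \ref{lem:SigmaF}, produces the factor $\dig^X_A \tensor_X (-)$ on the $\oc A$ copy. This is the semantic content of "the point at which we differentiate $\oc A$ is the promoted point".

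Then push $\Sigma_{(X,A)}$ through the composite, using that $\Sigma$ is a functor. The first factor $\Sigma_{(X,A)}\D^X_A$ is replaced by Lemma \ref{lem:SD=partial}, giving $\cont^X_A\tensor_X\id; \id\tensor_X\partial^X_A$. The second factor, via Lemma \ref{lem:SigmaF} applied with $u = \eta$ (whose $\F^\Simp$-image, composed with the counit, is $\dig$), becomes $\dig^X_A \tensor_X \id$ followed by $\Sigma_{(X,\oc A)}\D^X_{\oc A};\mu = \partial^X_{\oc A}$. The final counit $\mu$ is absorbed by the comonad identities of $\Sigma\pi_1^*$ (Remark \ref{rem:SigmaContWeak}). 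Reordering the tensor factors with naturality of $\partial$ then yields the right-hand side.

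The main obstacle is the middle step: correctly identifying $\set{(\dig^X_A)^\dagger};\pi_2$ as a composite to which Lemma \ref{lem:PartialZero} and Proposition \ref{prop:DDifferential} apply. In particular, one must check that the intermediate reindexing $\set{\eta_{(X,A)}}^*\D^X_{\oc A}$, once passed through $\Sigma_{(X,A)}$, really gives $\dig^X_A\tensor_X\partial^X_{\oc A}$ rather than something with the roles of the two $\oc A$ copies swapped. I would first try Proposition \ref{prop:DDifferential} with $(f,u) = \eta_{(X,A)}$ regarded in $\Simp(\C)$, and use Lemma \ref{lem:SigmaF} to convert reindexing into precomposition with $\F^\Simp(\eta)$, whose second component is $\dig$.
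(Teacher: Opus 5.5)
Your proposal is essentially the paper's proof: Lemma~\ref{lem:partial_dif}, then rewriting $(\dig^X_A)^\dagger$ as $\eta_{(X,A)};\eta_{(X,\oc A)}$ so that Proposition~\ref{prop:DifferentialComp} and Lemma~\ref{lem:PartialZero} give $\D^X_A;\set{\eta_{(X,A)}}^*\D^X_{\oc A}$, then Lemma~\ref{lem:SD=partial} on the first factor and Lemma~\ref{lem:SigmaF} with $\F^\Simp(\eta_{(X,A)})=\dig^X_A$ on the second. Your fallback (Proposition~\ref{prop:DDifferential} with $(f,u)=\eta_{(X,A)}$, combined with Proposition~\ref{prop:Dprecomp}) gives the same middle identity. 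Two small clean-ups: the inner factor is $\D^X_{\oc A}$ directly by Definition~\ref{def:D}, with no extra $\pi_1^*$ term and no need to view it as a $\dig^\dagger$; and the final step is just bifunctoriality of $\tensor_X$, namely $(\id_{(X,\oc A)}\tensor_X\partial^X_A);(\dig^X_A\tensor_X\id_{(X,\oc A)})=\dig^X_A\tensor_X\partial^X_A$, so no reordering of factors and no naturality of $\partial$ is required.
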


\begin{proof}
\begin{align*}
\partial^X_A; \dig^X_A &\underset{\ref{lem:partial_dif}}{=} \Sigma_{(X,A)} \dif_2(\set{(\dig_A^X)^\dagger}; \pi_2) ; \mu^{(X,A)}_{(X,\oc \oc A)} \\
&\underset{\ref{def:dagger}}{=} \Sigma_{(X,A)} \dif_2(\set{\eta_{(X,A)}; \U^\Simp(\dig^X_A)}; \pi_2) ; \mu^{(X,A)}_{(X,\oc \oc A)} \\
&\underset{}{=} \Sigma_{(X,A)} \dif_2(\set{\eta_{(X,A)}; \U^\Simp( \F^\Simp \eta_{(X, A)})}; \pi_2) ; \mu^{(X,A)}_{(X,\oc \oc A)} \\
&\underset{}{=} \Sigma_{(X,A)} \dif_2(\set{\eta_{(X,A)}; \eta_{(X, \oc A)}}; \pi_2) ; \mu^{(X,A)}_{(X,\oc \oc A)} \\
&\underset{\ref{prop:DifferentialComp}}{=} \Sigma_{(X,A)} (\D^X_A; \set{\eta_{(X, A)}}^*\dif(\eta_{(X, \oc A)}; \pi_2)) ; \mu^{(X,A)}_{(X,\oc \oc A)} \\
&\underset{\ref{lem:PartialZero}}{=} \Sigma_{(X,A)} (\D^X_A; \set{\eta_{(X, A)}}^*\D^X_{\oc A}) ; \mu^{(X,A)}_{(X,\oc \oc A)} \\
&\underset{}{=} \Sigma_{(X,A)} \D^X_A;  \Sigma_{(X,A)} (\set{\eta_{(X, A)}}^*\D^X_{\oc A}) ; \mu^{(X,A)}_{(X,\oc \oc A)} \\
&\underset{\ref{lem:SD=partial}}{=}\cont^X_A\tensor_X \id_{(X,A)}; \id_{(X, \oc A)} \tensor_X \partial^X_A;  \Sigma_{(X,A)} (\set{\eta_{(X, A)}}^*\D^X_{\oc A}) ; \mu^{(X,A)}_{(X,\oc \oc A)} \\
&\underset{\ref{lem:SigmaF}}{=} \cont^X_A\tensor_X \id_{(X,A)}; \F(\eta_{(X, A)}) \tensor_X \partial^X_A;  \Sigma_{(X, \oc A)} \D^X_{\oc A} ; \mu^{(X,A)}_{(X,\oc \oc A)} \\
&\underset{}{=} \cont^X_A\tensor_X \id_{(X,A)}; \dig^X_A \tensor_X \partial^X_A;  \Sigma_{(X, \oc A)} \D^X_{\oc A} ; \mu^{(X,A)}_{(X,\oc \oc A)} \\
&\underset{\ref{def:partial}}{=} \cont^X_A\tensor_X \id_{(X,A)}; \dig^X_A \tensor_X \partial^X_A;  \partial^X_{\oc A} \qedhere
\end{align*}
\end{proof}

What is left is to show that the Leibniz rule holds in all fibers. The proof plans to exploit the fact that $(\cont^X_A)^\dagger = \eta_{(X,A)}; \Delta_{\U(\oc A)}; n_{\oc A, \oc A}$ (lem. \ref{cont_dagger}). If we know how to compute the second differential of each of these morphisms, then we will be able to compute $\partial^X_A; \cont^X_A$. We already know how to compute the differential of $\eta$, and the differential of the diagonal can be computed using the compatibility with the product. To compute the differential of $n$ we rely on the fact that it is bilinear (def. \ref{def:PartialLinear}). 

\begin{prop}
For all $A,B \in \L$, the morphism $n_{A,B} : \U(A) \times \U(B) \to \U(A \tensor B)$ is bilinear with $n_{A,B} = \banana{\id_{\U(A)}, \dere_A \tensor \id_B}; \pi_2$ and $\sigma_{\U(B), \U(A)}; n_{A,B} = \banana{\id_{\U(B)}, \sigma_{\oc B, A}; \id_A \tensor \dere_B}; \pi_2$. 
\end{prop}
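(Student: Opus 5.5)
The plan is to unfold Definition \ref{def:PartialLinear} directly. We must exhibit morphisms $(\id_{\U(A)}, u)$ in $\LSimp(\C)_{\U(A)}$ and $(\id_{\U(B)}, u')$ in $\LSimp(\C)_{\U(B)}$ whose images under $\banana{-}$, followed by $\pi_2$, recover $n_{A,B}$ and $\sigma_{\U(B),\U(A)}; n_{A,B}$. Here $u := \dere_A \tensor \id_B : \F(\U(A)) \tensor B = \oc A \tensor B \to A \tensor B$, and $u' := \sigma_{\oc B, A}; \id_A \tensor \dere_B$.

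By Definition \ref{def:Comprehension} and Proposition \ref{U_simp_func}, $\banana{f,u}; \pi_2$ is the second component of $\U^\Simp(f,u)$, namely $\eta_X \times \id_{\U(A)}; n_{\F(X),A}; \U(u)$. So the first claim reduces to the identity
\begin{equation*}
\eta_{\U(A)} \times \id_{\U(B)}; n_{\oc A, B}; \U(\dere_A \tensor \id_B) = n_{A,B}.
\end{equation*}
We prove it in two steps.
\begin{enumerate}
\item Naturality of the lax structure $n$ in its first argument gives $n_{\oc A,B}; \U(\dere_A \tensor \id_B) = \U(\dere_A) \times \id_{\U(B)}; n_{A,B}$.
\item The triangle identity $\eta_{\U(A)}; \U(\dere_A) = \id_{\U(A)}$ then collapses the left-hand side to $n_{A,B}$.
\end{enumerate}
This is a two-line computation.

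For the second claim we proceed the same way. The target is $\sigma_{\U(B),\U(A)}; n_{A,B}$, and we must show
\begin{equation*}
\eta_{\U(B)} \times \id_{\U(A)}; n_{\oc B, A}; \U(\sigma_{\oc B, A}; \id_A \tensor \dere_B) = \sigma_{\U(B),\U(A)}; n_{A,B}.
\end{equation*}
We use that $(\U,n)$ is a symmetric lax monoidal functor, which is part of Definition \ref{lnl} since the adjunction is symmetric monoidal. This gives $n_{\oc B, A}; \U(\sigma_{\oc B,A}) = \sigma_{\U(\oc B),\U(A)}; n_{A, \oc B}$. Then:
\begin{enumerate}
\item naturality of $n$ in its second argument moves $\U(\id_A \tensor \dere_B)$ to $\id_{\U(A)} \times \U(\dere_B)$ before $n_{A,B}$;
\item naturality of $\sigma$ in $\C$ slides $\eta_{\U(B)} \times \id$ past the symmetry;
\item the triangle identity finishes the computation.
\end{enumerate}

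The only point needing care is bookkeeping. The fibre of $(\U(A),B)$ must be read with $\F(X) = \F\U(A) = \oc A$, and $\sigma$ in $\C$ must be matched with $\sigma$ in $\L$ through the symmetry axiom of $n$. Beyond that, no earlier lemma from Section \ref{sec:GDSC} is needed.
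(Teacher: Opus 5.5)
Your proof is correct and matches the paper's: the first identity is obtained exactly as in the paper, by unfolding $\banana{-}$ through $\U^\Simp$, applying naturality of $n$ and then the triangle identity $\eta_{\U(A)};\U(\dere_A) = \id_{\U(A)}$. For the second identity the paper only says it is ``done similarly'', and your computation (symmetry of $n$, $n_{\oc B,A};\U(\sigma_{\oc B,A}) = \sigma_{\U(\oc B),\U(A)};n_{A,\oc B}$, followed by naturality of $n$ and $\sigma$ and the triangle identity) is exactly that omitted step.
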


\begin{proof}
We only show the first equality; the second one is done similarly.
\begin{align*}
\banana{\id_{\U(A)}, \dere_A \tensor \id_B}; \pi_2 
&\underset{\ref{def:Comprehension}}{=} \eta_{\U(A)} \times \id_{\U(B)}; n_{\oc A, B}; \U( \dere_A \tensor \id_B) \\
&\underset{}{=} \eta_{\U(A)} \times \id_{\U(B)}; \U( \dere_A) \times \id_{\U(B)}; n_{A, B} \\
&\underset{}{=} (\eta_{\U(A)};U( \dere_A))  \times \id_{\U(B)}; n_{A, B} \\
&\underset{}{=} n_{A, B} \qedhere
\end{align*}
\end{proof}

For clarity, we split the computation of $\partial^X_A; \cont^X_A$ into multiple lemmas.

\begin{lem}
\label{lem:trivial}
Let $u : (X,Y) \to (X,Z)$ be a morphism in $\Simp(\C)_X$ of shape $\pi_2; f$ with $f : Y \to Z$ in $\C$ then, $\dif_2(\set{u}; \pi_2) = \pi_2^*\dif(f)$. 
\end{lem}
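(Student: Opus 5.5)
The plan is to reduce everything to the chain rule for the differential (Prop.~\ref{prop:DifferentialComp}) and to the behaviour of $\dif$ on the projection $\pi_2 : X \times Y \to Y$. First I unfold the comprehension functor (Def.~\ref{def:Comprehension}). For $u = (\id_X, \pi_2; f)$ we get $\set{u} = \langle \pi_1, \pi_2; f\rangle$, hence $\set{u}; \pi_2 = \pi_2; f$ as a morphism $X \times Y \to Z$. The statement therefore becomes
\begin{equation*}
\dif_2(\pi_2; f) = \pi_2^*\dif(f).
\end{equation*}

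Next I apply Def.~\ref{def:PartialDifferential} and Prop.~\ref{prop:DifferentialComp}, which give
\begin{equation*}
\dif_2(\pi_2;f) = i^{X,Y}_2; \dif(\pi_2); \pi_2^*\dif(f).
\end{equation*}
It then suffices to show that $i^{X,Y}_2; \dif(\pi_2) = \id_{(X\times Y, \lambda(Y))}$ in the fibre above $X \times Y$. The same fact was already used silently in the proofs of Lemma~\ref{lem:D2PartialLinear} and Lemma~\ref{lem:D1D2}.

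To prove this identity I use axiom t.1 through Prop.~\ref{prop:PhiVertical}. Because $\varphi_{X,Y} = \langle \Tan(\pi_1), \Tan(\pi_2)\rangle$, the definition of products in $\LSimp(\C)$ (Prop.~\ref{prop:ProductLS}) gives $\varphi_{X,Y}; p_2 = \Tan(\pi_2)$. Factoring through the cartesian lifts, the vertical parts satisfy $\dif(\pi_2) = \varphi_{X,Y}; p^{X\times Y}_2$, where the fibrewise projection is the second component of $\pi_2$'s cartesian factorisation. This gives
\begin{equation*}
i^{X,Y}_2; \dif(\pi_2) = \iota^{X\times Y}_2; \varphi^{-1}; \varphi; p^{X\times Y}_2 = \iota^{X\times Y}_2; p^{X\times Y}_2 = \id,
\end{equation*}
where the last step holds because $\oplus$ is a biproduct in each fibre (Prop.~\ref{prop:LS_add}).

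The main obstacle is this bookkeeping step: identifying $\dif(\pi_2)$, the vertical part of $\Tan(\pi_2)$, with $\varphi; p^{X\times Y}_2$. This requires checking that the product projection $p_2$ in the total category $\LSimp(\C)$ factors as the fibrewise projection $p^{X\times Y}_2$ followed by the cartesian morphism $\overline{\pi_2}$. I will verify that directly from the explicit formulas: $p_2 = (\pi_2, \weak \tensor \pi_2)$, and cartesian maps are of the form $(f, \weak \tensor \id)$. With that identity in hand, the lemma follows.
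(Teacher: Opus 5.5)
Your proof is correct and follows the paper's route: rewrite $\set{u};\pi_2$ as $\pi_2;f$, apply Prop.~\ref{prop:DifferentialComp}, and use $\dif_2(\pi_2)=\id$. The paper asserts that last identity without comment; you justify it by deriving $\dif(\pi_2)=\varphi_{X,Y};p^{X\times Y}_2$ (from $p_2=p^{X\times Y}_2;\overline{\pi_2}$ and uniqueness of the vertical factorisation) and then using $\iota^{X\times Y}_2;p^{X\times Y}_2=\id$, which supplies the missing detail.
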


\begin{proof}
\begin{align*}
\dif_2(\set{u}; \pi_2) \underset{}{=} \dif_2( \pi_2; f) \underset{\ref{prop:DifferentialComp}}{=} \dif_2( \pi_2); \pi_2^*\dif(f) \underset{}{=} \pi_2^*\dif(f) 
\end{align*}
\end{proof}

\begin{lem}
\label{lem:Leib1}
For every $X \in \C$ and $A \in \L$, 
\begin{equation*}
\dif_2(\set{\eta_{(X,A)}; \Delta^X_{\U(\oc A}}; \pi_2) = \D^X_A; (\Delta^{\oplus})^{X \times \U(A)}_{\oc A};  (\set{\eta_{(X,A)};\Delta^X_{\U(\oc A)}}; \pi_2)^*\varphi^{-1}_{\oc A, \oc A}
\end{equation*}
\end{lem}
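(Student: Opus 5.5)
The morphism $\set{\eta_{(X,A)}; \Delta^X_{\U(\oc A)}}; \pi_2 : X \times \U(A) \to \U(\oc A) \times \U(\oc A)$ factors as $\set{\eta_{(X,A)}}; \pi_2 ; \Delta_{\U(\oc A)}$. The plan is to apply the chain rule for the differential (Prop. \ref{prop:DifferentialComp}) to this factorisation, precomposed with $i^{X,\U(A)}_2$. This gives
\begin{equation*}
\dif_2(\set{\eta_{(X,A)}}; \pi_2); (\set{\eta_{(X,A)}}; \pi_2)^*\dif(\Delta_{\U(\oc A)}).
\end{equation*}
By Definition \ref{def:D}, the first factor is exactly $\D^X_A$. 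It then remains to compute $\dif(\Delta_{\U(\oc A)})$ and transport it along the reindexing.

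For the diagonal, write $\Delta_{\U(\oc A)} = \langle \id, \id \rangle$ and use Proposition \ref{prop:DifferentialProduct}:
\begin{equation*}
\dif(\Delta_{\U(\oc A)}) = \langle \dif(\id), \dif(\id) \rangle_{\U(\oc A)}; \Delta^*\varphi^{-1}_{\U(\oc A),\U(\oc A)}.
\end{equation*}
Functoriality of $\Tan$ together with $\Tan(\id) = \id$ gives $\dif(\id) = \id$, using uniqueness in Definition \ref{def:Differential}. Hence the first factor is the fibrewise diagonal $\langle \id,\id\rangle_{\U(\oc A)}$ of the biproduct in the fibre above $\U(\oc A)$, i.e. $(\Delta^{\oplus})^{\U(\oc A)}_{\oc A}$, using $\lambda(\U(\oc A)) = \oc A$ from t.2.

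Next, reindex along $g := \set{\eta_{(X,A)}}; \pi_2$. Since the fibration is split (Def. \ref{def:SplitFib}), $g^*$ distributes over composition. By Proposition \ref{prop:LS_add}, the reindexing functors preserve the additive structure, so $g^*$ sends the fibrewise diagonal above $\U(\oc A)$ to the fibrewise diagonal $(\Delta^{\oplus})^{X\times\U(A)}_{\oc A}$. The remaining factor is $g^*\Delta^*\varphi^{-1} = (g;\Delta)^*\varphi^{-1}$, again by splitness. Since $g;\Delta = \set{\eta_{(X,A)};\Delta^X_{\U(\oc A)}};\pi_2$, this is precisely the last factor in the statement.

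The step needing the most care is the chain-rule step at the start. We must check that $\dif_2$ of a composite $h;k$ with $h : X\times Y \to Z$ equals $\dif_2(h); h^*\dif(k)$. This follows by precomposing Proposition \ref{prop:DifferentialComp} with $i_2$, exactly as in the proof of Lemma \ref{lem:D2PartialLinear}. A second point to check is that $\Delta^X_{\U(\oc A)}$ inside $\set{-}$ really corresponds to postcomposing the second component with $\Delta_{\U(\oc A)}$, so that the factorisation $g;\Delta$ is legitimate. Otherwise the proof is bookkeeping of split reindexing and preservation of biproducts.
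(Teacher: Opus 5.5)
Your proof is correct. It reaches the same intermediate expression $\D^X_A;(\set{\eta_{(X,A)}};\pi_2)^*\dif(\Delta_{\U(\oc A)})$ as the paper, but through a shorter decomposition.

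The paper cuts the composite at $X\times\U(\oc A)$, as $\set{\eta_{(X,A)}}$ followed by $\set{\Delta^X_{\U(\oc A)}};\pi_2$. After the chain rule it therefore needs three further steps:
\begin{itemize}
\item Lemma~\ref{lem:PartialZero}, to rewrite $\dif_2(\set{\eta_{(X,A)}})$ as $\D^X_A;\set{\eta_{(X,A)}}^*i_2$, i.e.\ to see that the $X$-component contributes nothing;
\item Lemma~\ref{lem:trivial}, to turn $\dif_2(\set{\Delta^X_{\U(\oc A)}};\pi_2)$ into $\pi_2^*\dif(\Delta_{\U(\oc A)})$;
\item splitness, to merge the reindexings.
\end{itemize}
You cut at $\U(\oc A)$ instead. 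The first factor is then $\D^X_A$ directly by Definition~\ref{def:D}, and neither auxiliary lemma is needed. From $\dif(\Delta_{\U(\oc A)})$ onward the two proofs coincide: product rule, $\dif(\id)=\id$, reindexing preserves the biproduct, and splitness.

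The identity $\set{\eta_{(X,A)};\Delta^X_{\U(\oc A)}};\pi_2=\set{\eta_{(X,A)}};\pi_2;\Delta_{\U(\oc A)}$ that you flag does hold, since both sides equal $\pi_2;\eta_{\U(A)};\Delta_{\U(\oc A)}$. The paper uses it silently in its last line.

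The paper's comprehension-level factorisation is the uniform template it needs elsewhere, for instance in Prop.~\ref{prop:Dprecomp}. There the second factor $\banana{f};\pi_2$ genuinely depends on $X$, so one cannot factor through the second component alone. Here $\Delta$ ignores $X$, which is what makes your shortcut available. The same shortcut would also simplify Lemma~\ref{lem:Leib2}, since $\set{n^X};\pi_2=\pi_2;n$.
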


\begin{proof}
\begin{align*}
\dif_2(\set{\eta_{(X,A)}; &\Delta^X_{\U(\oc A}}; \pi_2) \\
&\underset{\ref{prop:DifferentialComp}}{=} \dif_2(\set{\eta_{(X,A)}}); \set{\eta_{(X,A)}}^*\dif(\set{\Delta^X_{\U(\oc A)}}; \pi_2) \\
&\underset{\ref{lem:PartialZero}}{=}  \D^X_A; \set{\eta_{(X,A)}}^*\dif_2(\set{\Delta^X_{\U(\oc A)}}; \pi_2) \\
&\underset{\ref{lem:trivial}}{=} \D^X_A; \set{\eta_{(X,A)}}^*(\pi_2^*\dif(\Delta_{\U(\oc A)})) \\
&\underset{\ref{def:SplitFib}}{=} \D^X_A; (\set{\eta_{(X,A)}}; \pi_2)^*\dif(\Delta_{\U(\oc A)})) \\
&\underset{}{=} \D^X_A; (\set{\eta_{(X,A)}}; \pi_2)^*\dif(\Delta_{\U(\oc A)})) \\
&\underset{}{=} \D^X_A; (\set{\eta_{(X,A)}}; \pi_2)^*\dif(\langle \id_{\U(\oc A)}, \id_{\U(\oc A)} \rangle) \\
&\underset{\ref{prop:DifferentialProduct}}{=} \D^X_A; (\set{\eta_{(X,A)}}; \pi_2)^*[ \langle \dif(\id_{\U(\oc A)}), \dif(\id_{\U(\oc A)}) \rangle_{\U(\oc A)}; \Delta_{\U(\oc A)}^* \varphi^{-1}_{\U(\oc A), \U(\oc A)}] \\
&\underset{}{=} \D^X_A; (\set{\eta_{(X,A)}}; \pi_2)^*[ \langle \id_{\U(\oc A)}, \id_{\U(\oc A)} \rangle_{\U(\oc A)}; \Delta_{\U(\oc A)}^* \varphi^{-1}_{\U(\oc A), \U(\oc A)}] \\
&\underset{}{=} \D^X_A; (\set{\eta_{(X,A)}}; \pi_2)^*[(\Delta^{\oplus})^{\U(\oc A)}_{\oc A}; \Delta_{\U(\oc A)}^* \varphi^{-1}_{\U(\oc A), \U(\oc A)}] \\
&\underset{\ref{def:SplitFib}}{=} \D^X_A; (\set{\eta_{(X,A)}}; \pi_2)^*(\Delta^{\oplus})^{\U(\oc A)}_{\oc A};  (\set{\eta_{(X,A)}}; \pi_2; \Delta_{\U(\oc A)})^* \varphi^{-1}_{\U(\oc A), \U(\oc A)} \\
&\underset{}{=} \D^X_A; (\Delta^{\oplus})^{X \times \U(A)}_{\oc A};  (\set{\eta_{(X,A)}}; \pi_2; \Delta_{\U(\oc A)})^*\varphi^{-1}_{\U(\oc A), \U(\oc A)} \\
&\underset{}{=} \D^X_A; (\Delta^{\oplus})^{X \times \U(A)}_{\oc A};  (\set{\eta_{(X,A)};\Delta^X_{\U(\oc A)}}; \pi_2)^*\varphi^{-1}_{\U(\oc A), \U(\oc A)} \qedhere
\end{align*}
\end{proof}

\begin{lem}
\label{lem:Leib2}
For all $X \in \C$ and $A \in \L$ the following equality holds
\begin{equation*}
\dif_2(\set{(\cont^X_A)^\dagger}; \pi_2) = [\D^X_A; \pi_2^*(\id_{\U(A)}, \sigma_{\oc A, \oc A}) ] + [\D^X_A; \pi_2^*(\id_{\U(A)}, \id_{\oc A} \tensor \id_{\oc A}))]
\end{equation*}
\end{lem}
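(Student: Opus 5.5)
The plan is to factor $(\cont^X_A)^\dagger$ according to Lemma \ref{cont_dagger}, transported to the fibre above $X$. This gives
\[
\set{(\cont^X_A)^\dagger}; \pi_2 = \set{\eta_{(X,A)}; \Delta^X_{\U(\oc A)}}; \pi_2; n_{\oc A, \oc A}.
\]
We then apply the chain rule for the differential, Proposition \ref{prop:DifferentialComp}. Write $g := \set{\eta_{(X,A)}; \Delta^X_{\U(\oc A)}}; \pi_2$. The computation splits into two steps: $\dif_2(g)$ followed by the reindexing $g^*\dif(n_{\oc A, \oc A})$. The first factor is exactly Lemma \ref{lem:Leib1}, namely
\[
\D^X_A; (\Delta^{\oplus})^{X \times \U(A)}_{\oc A}; g^*\varphi^{-1}_{\oc A, \oc A}.
\]
So the remaining work is to compute $\dif(n_{\oc A, \oc A})$ and then compose.

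For the second factor I would use that $n_{\oc A, \oc A}$ is bilinear, by the preceding proposition, with $u = \dere_{\oc A} \tensor \id_{\oc A}$ and $u' = \sigma_{\oc \oc A, \oc A}; \id_{\oc A} \tensor \dere_{\oc A}$. Corollary \ref{cor:DifBi} then gives
\[
\dif(n_{\oc A,\oc A}) = \varphi; \bigl(p_1; \pi_2^*(\id, u')\bigr) + \bigl(p_2; \pi_1^*(\id, u)\bigr).
\]
Reindexing along $g$ is strict and additive by Proposition \ref{prop:LS_add}. The factor $g^*\varphi^{-1}$ cancels the $\varphi$, and we use splitness to rewrite $g^*\pi_i^* = (g;\pi_i)^*$. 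We then precompose with the diagonal $(\Delta^{\oplus})$, using $\Delta^{\oplus}; p_i = \id$, so that the biproduct sum distributes. The result is
\[
\D^X_A; (g;\pi_2)^*(\id, u') + \D^X_A; (g;\pi_1)^*(\id, u).
\]
Since $g$ composed with either projection equals $\set{\eta_{(X,A)}};\pi_2 = \eta$-component, both summands are reindexed along the same map $\set{\eta_{(X,A)}};\pi_2 : X\times \U(A) \to \U(\oc A)$.

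The final step rewrites $(\set{\eta_{(X,A)}};\pi_2)^*(\id_{\U(\oc A)}, u)$ in the shape $\pi_2^*(\id_{\U(A)}, -)$ demanded by the statement. Here $\set{\eta_{(X,A)}};\pi_2$ is $\pi_2;\eta_{\U(A)}$, so by splitness the reindexing equals $\pi_2^*\eta_{\U(A)}^*(\id, u)$. In $\LSimp(\C)$, reindexing $(\id,u)$ along $\eta_{\U(A)}$ precomposes the non-linear context with $\F(\eta_{\U(A)})$, giving the second component $\F(\eta_{\U(A)}) \tensor \id; u$. With $u = \dere_{\oc A}\tensor \id_{\oc A}$ this yields $(\F(\eta_{\U(A)});\dere_{\oc A}) \tensor \id_{\oc A} = \id_{\oc A}\tensor \id_{\oc A}$ by the triangle identity. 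Similarly $u'$ collapses to the symmetry $\sigma_{\oc A,\oc A}$, giving the two summands of the statement (the order of the sum is immaterial by commutativity).

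I expect the main obstacle to be bookkeeping rather than ideas. One must check precisely how $g^*$ interacts with $\varphi^{-1}$ and with the fibrewise diagonal, and how reindexing along $\eta_{\U(A)}$ acts on second components. In particular, in computing $\eta_{\U(A)}^*(\id,u)$ one must confirm that the contraction used in composition with the cartesian lift degenerates via the weakening in $\overline{f}$, leaving just precomposition by $\F(\eta_{\U(A)})$. I would verify that identity directly from Definition \ref{def_lnl_simp_cat} first.
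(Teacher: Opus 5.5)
Your proposal is correct and follows essentially the same route as the paper: you factor via Lemma \ref{cont_dagger}, apply the chain rule, use Lemma \ref{lem:Leib1} and Corollary \ref{cor:DifBi}, cancel $\varphi$ against $g^*\varphi^{-1}$, collapse the sum with $\Delta^{\oplus}; p_i = \id$, and finish with splitness and the triangle identity $\F(\eta_{\U(A)}); \dere_{\oc A} = \id_{\oc A}$. The differences are minor. You apply Proposition \ref{prop:DifferentialComp} directly to $g; n_{\oc A, \oc A}$, whereas the paper factors through $\set{n^X}$ and uses Lemmas \ref{lem:PartialZero} and \ref{lem:trivial} to reach the same expression $\dif_2(g); g^*\dif(n_{\oc A,\oc A})$. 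You also make explicit the reindexing formula $f^*(\id, v) = (\id, \F(f) \tensor \id; v)$, which the paper uses silently in its last step.
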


\begin{proof}
Let $X \in \C$ and $A \in \L$.
\begin{align*}
\dif_2(\set{(\cont^X_A)^\dagger}; \pi_2) 
&\underset{\ref{cont_dagger}}{=} \dif_2(\set{\eta_{(X,A)}; \Delta^X_{\U(\oc A)}; n^X_{\oc^\Simp A, \oc^\Simp A}}; \pi_2) \\
&\underset{}{=} \dif_2(\set{\eta_{(X,A)}; \Delta^X_{\U(\oc A)}}; \set{n^X_{\oc^\Simp A, \oc^\Simp A}}; \pi_2) \\
&\underset{\ref{prop:DifferentialComp}}{=} \dif_2(\set{\eta_{(X,A)}; \Delta^X_{\U(\oc A)}}); \set{\eta_{(X,A)}; \Delta^X_{\U(\oc A)}}^*\dif(\set{n^X_{\oc^\Simp A, \oc^\Simp A}}; \pi_2) \\
&\underset{\ref{lem:PartialZero}}{=} \dif_2(\set{\eta_{(X,A)}; \Delta^X_{\U(\oc A)}}; \pi_2); \set{\eta_{(X,A)}; \Delta^X_{\U(\oc A)}}^*\dif_2(\set{n^X_{\oc^\Simp A, \oc^\Simp A}}; \pi_2) \\
&\underset{\ref{lem:trivial}}{=} \dif_2(\set{\eta_{(X,A)}; \Delta^X_{\U(\oc A)}}; \pi_2); \set{\eta_{(X,A)}; \Delta^X_{\U(\oc A)}}^*(\pi_2^*\dif(n_{\oc A, \oc A})) \\
&\underset{\ref{def:SplitFib}}{=} \dif_2(\set{\eta_{(X,A)}; \Delta^X_{\U(\oc A)}}; \pi_2); (\set{\eta_{(X,A)}; \Delta^X_{\U(\oc A)}}; \pi_2)^*\dif(n_{\oc A, \oc A}) \\
&\underset{\ref{cor:DifBi}}{=} \dif_2(\set{\eta_{(X,A)}; \Delta^X_{\U(\oc A)}}; \pi_2); (\set{\eta_{(X,A)}; \Delta^X_{\U(\oc A)}}; \pi_2)^*[\varphi_{\U(\oc A),\U(\oc A)}; \\ 
&(p^{\U(\oc A) \times \U(\oc A)}_1; \pi_2^*(\id_{\U(\oc A)}, \sigma_{\oc \oc A, \oc A}; \id_{\oc A} \tensor \dere_{\oc A})) + (p^{\U(\oc A) \times \U(\oc A)}_2; \pi_1^*(\id_{\U(\oc A)}, \dere_{\oc A} \tensor \id_{\oc A}))] \\
&\underset{\ref{lem:Leib1}}{=} \D^X_A; (\Delta^{\oplus})^{X \times \U(A)}_{\oc A}; (\set{\eta_{(X,A)}; \Delta^X_{\U(\oc A)}}; \pi_2)^*[ \\ 
&(p^{\U(\oc A) \times \U(\oc A)}_1; \pi_2^*(\id_{\U(\oc A)}, \sigma_{\oc \oc A, \oc A}; \id_{\oc A} \tensor \dere_{\oc A})) + (p^{\U(\oc A) \times \U(\oc A)}_2; \pi_1^*(\id_{\U(\oc A)}, \dere_{\oc A} \tensor \id_{\oc A}))] \\
&\underset{}{=} [\D^X_A; (\Delta^{\oplus})^{X \times \U(A)}_{\oc A}; p^{X \times \U(A) }_1; (\set{\eta_{(X,A)}; \Delta^X_{\U(\oc A)}}; \pi_2)^*(\pi_2^*(\id_{\U(\oc A)}, \sigma_{\oc \oc A, \oc A}; \id_{\oc A} \tensor \dere_{\oc A}))] + \\ 
&[\D^X_A; (\Delta^{\oplus})^{X \times \U(A)}_{\oc A}; p^{X \times \U(A) }_2; (\set{\eta_{(X,A)}; \Delta^X_{\U(\oc A)}}; \pi_2)^*(\pi_1^*(\id_{\U(\oc A)}, \dere_{\oc A} \tensor \id_{\oc A}))] \\
&\underset{}{=} [\D^X_A; (\set{\eta_{(X,A)}; \Delta^X_{\U(\oc A)}}; \pi_2)^*(\pi_2^*(\id_{\U(\oc A)}, \sigma_{\oc \oc A, \oc A}; \id_{\oc A} \tensor \dere_{\oc A}))] + \\ 
&[\D^X_A; (\set{\eta_{(X,A)}; \Delta^X_{\U(\oc A)}}; \pi_2)^*(\pi_1^*(\id_{\U(\oc A)}, \dere_{\oc A} \tensor \id_{\oc A}))] \\
&\underset{\ref{def:SplitFib}}{=} [\D^X_A; (\set{\eta_{(X,A)}; \Delta^X_{\U(\oc A)}}; \pi_2; \pi_2)^*(\id_{\U(\oc A)}, \sigma_{\oc \oc A, \oc A}; \id_{\oc A} \tensor \dere_{\oc A})] + \\ 
&[\D^X_A; (\set{\eta_{(X,A)}; \Delta^X_{\U(\oc A)}}; \pi_2; \pi_1)^*(\id_{\U(\oc A)}, \dere_{\oc A} \tensor \id_{\oc A})] \\
&\underset{\ref{def:SplitFib}}{=} [\D^X_A; (\pi_2; \eta_{\U(A)})^*(\id_{\U(\oc A)}, \sigma_{\oc \oc A, \oc A}; \id_{\oc A} \tensor \dere_{\oc A})] + \\ 
&[\D^X_A; (\pi_2; \eta_{\U(A)})^*(\id_{\U(\oc A)}, \dere_{\oc A} \tensor \id_{\oc A})] \\
&\underset{}{=}[\D^X_A; \pi_2^*(\id_{\U( A)}, \sigma_{\oc A, \oc A}; \id_{\oc A} \tensor \id_{\oc A})) ] + [\D^X_A; \pi_2^*(\id_{\U( A)}, \id_{\oc A} \tensor \id_{\oc A})] \\
&\underset{}{=}[\D^X_A; \pi_2^*(\id_{\U(A)}, \sigma_{\oc A, \oc A}) ] + [\D^X_A; \pi_2^*(\id_{\U(A)}, \id_{\oc A} \tensor \id_{\oc A}))] \qedhere
\end{align*}
\end{proof}

\begin{lem}
\label{lem:Leib3}
For all $X \in \C$ and $A \in \L$ the following equalities holds
\begin{equation*}
\Sigma_{(X,A)} \pi_2^*(\id_{\U( A)}, \sigma_{\oc A, \oc A}) = \weak_X \tensor \cont_A \tensor \id_{\oc A}; \id_{\oc A} \tensor \sigma_{\oc A, \oc A}
\end{equation*}
\begin{equation*}
\Sigma_{(X,A)} \pi_2^*(\id_{\U(A)}, \id_{\oc A} \tensor \id_{\oc A} ) = \cont^X_A \tensor_X \id_{\oc^\Simp (X,A)}
\end{equation*}
\end{lem}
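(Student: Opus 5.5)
The plan is a direct unfolding of definitions: both identities are morphisms in the fibre above $X$, and we compare second components in $\L$. The objects match as follows. $\pi_2^*(\id_{\U(A)}, g)$ lives above $X \times \U(A)$ and has source $(X\times\U(A), \oc A)$. Its image under $\Sigma_{(X,A)}$ (def. \ref{linear_sigma_types}) therefore goes from $(X, \oc A \tensor \oc A)$ to $(X, \oc A \tensor \oc A \tensor \oc A)$.

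First I will compute the reindexed morphism. The cartesian lifts are $\overline{\pi_2} = (\pi_2, \weak \tensor \id)$, and the composition law of def. \ref{def_lnl_simp_cat} gives the vertical morphism
\begin{equation*}
\pi_2^*(\id_{\U(A)}, g) = (\id_{X \times \U(A)}, \F(\pi_2) \tensor \id_{\oc A}; g).
\end{equation*}
Here $\F(\pi_2)$ can be rewritten via $m^{-1}_{X,\U(A)}$ as $m_{X,\U(A)}^{-1}; \weak_X \tensor \id_{\oc A}$, using Lemma \ref{F_pi} read backwards.

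Next I will plug this into the formula for $\Sigma_{(X,A)}$. The factor $m_{X,\U(A)}$ appearing there cancels against the $m^{-1}$. The resulting second component is
\begin{equation*}
\id_{\F(X)} \tensor \cont_A \tensor \id_{\oc A};\ \sigma_{\F(X),\oc A} \tensor \id;\ \id_{\oc A} \tensor \weak_X \tensor \id_{\oc A} \tensor \id_{\oc A};\ \id_{\oc A} \tensor g.
\end{equation*}
By naturality of the symmetry and strictness, the $\weak_X$ slides to the front, which gives $\weak_X \tensor \cont_A \tensor \id_{\oc A}; \id_{\oc A} \tensor g$.

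The two cases then follow by choosing $g$:
\begin{itemize}
\item With $g = \sigma_{\oc A,\oc A}$, this is exactly the first claimed identity.
\item With $g = \id$, the second component becomes $\weak_X \tensor \cont_A \tensor \id_{\oc A}$, which is $\cont^X_A \tensor_X \id$ by Corollary \ref{def_f_c_w}. It remains to check that the fibrewise tensor of prop. \ref{tensor_LS} with an identity $(\id_X, \weak_X\tensor \id)$ reduces, after $\cont_X$ followed by the two weakenings and the counit law, to $\weak_X \tensor \cont_A \tensor \id$.
\end{itemize}

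The main obstacle I expect is bookkeeping rather than anything conceptual. Two points need care. The first is the cancellation between $m_{X,\U(A)}$ from $\Sigma$ and the rewriting of $\F(\pi_2)$. I will use that $m$ is an iso (prop. \ref{m_is_colax}) together with the identity $m_{X,J};\F(\pi_2) = \weak_X \tensor \id$ from Lemma \ref{F_pi}, with $f=\id$. The second is tracking the position of the $\weak_X$ through the symmetry. The identity $\weak_X\tensor\id_A$ of $\LSimp(\C)$ also has to be used in the counit law for $\cont_X$ when simplifying the second identity.
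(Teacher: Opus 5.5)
Your proposal is correct and follows the paper's own argument. You unfold $\pi_2^*$ and $\Sigma_{(X,A)}$ on second components, use Lemma \ref{F_pi} to turn $m_{X,\U(A)};\F(\pi_2)$ into $\weak_X \tensor \id_{\oc A}$, and slide $\weak_X$ past the symmetry; handling both identities with a general $g$ and checking $\cont^X_A \tensor_X \id$ via the counit law just fills in the step the paper dismisses as ``similarly''.
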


\begin{proof}
Let's start by computing the second component of $\Sigma_{(X,A)} \pi_2^*(\id_{\U( A)}, \sigma_{\oc A, \oc A})$.
\begin{align*}
&(\Sigma_{(X,A)} \pi_2^*(\id_{\U( A)}, \sigma_{\oc A, \oc A}))_2 \\
&\underset{\ref{linear_sigma_types}}{=} \id_{\F(X)} \tensor \cont_A \tensor \id_{\oc A}; \sigma_{\F(X), \oc A} \tensor \id_{\oc A \tensor \oc A}; \id_{\oc A} \tensor [ (m_{X, \oc A}; \F(\pi_2)) \tensor \id_{\oc A}; \sigma_{\oc A, \oc A}] \\
&\underset{}{=} \id_{\F(X)} \tensor \cont_A \tensor \id_{\oc A}; \sigma_{\F(X), \oc A} \tensor \id_{\oc A \tensor \oc A}; \id_{\oc A} \tensor [ \weak_X \tensor \sigma_{\oc A, \oc A}] \\
&\underset{}{=} \weak_X \tensor \cont_A \tensor \id_{\oc A}; \id_{\oc A} \tensor \sigma_{\oc A, \oc A}
\end{align*}
From this, we easely deduce that 
\begin{equation*}
\Sigma_{(X,A)} \pi_2^*(\id_{\U(\oc A)}, \sigma_{\oc A, \oc A}) = \cont^X_A \tensor_X \id_{\oc^\Simp (X,A)}; \id_{\oc^\Simp (X,A)} \tensor_X \sigma^X_{\oc^\Simp (X,A), \oc^\Simp (X,A)}
\end{equation*}
Similarly, 
\begin{equation*}
\Sigma_{(X,A)} \pi_2^*(\id_{\U(A)}, \id_{\oc A} \tensor \id_{\oc A} ) = \cont^X_A \tensor_X \id_{\oc^\Simp (X,A)}
\end{equation*}
\end{proof}

\begin{prop}[Leibniz Rule]
For every $X \in \C$, $\partial^X$ satisfies the Leibniz rule.
\end{prop}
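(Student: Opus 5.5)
The plan is to compute $\partial^X_A; \cont^X_A$ by the same pattern used for the linear and chain rules: apply Lemma \ref{lem:partial_dif} to $f = \cont^X_A : (X, \oc A) \to (X, \oc A \tensor \oc A)$. This gives
\begin{equation*}
\partial^X_A; \cont^X_A = \Sigma_{(X,A)} \dif_2(\set{(\cont^X_A)^\dagger}; \pi_2); \mu^{(X,A)}_{(X, \oc A \tensor \oc A)} .
\end{equation*}
We then substitute the computation of Lemma \ref{lem:Leib2}, which expresses $\dif_2(\set{(\cont^X_A)^\dagger}; \pi_2)$ as a sum of two terms, each of the form $\D^X_A; \pi_2^*(\id_{\U(A)}, s)$, where $s$ is either $\sigma_{\oc A, \oc A}$ or $\id_{\oc A \tensor \oc A}$. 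By Proposition \ref{prop:LS_add}, $\Sigma_{(X,A)}$ preserves the additive structure, and precomposition distributes over sums in the additive fibre. So the right-hand side splits as
\begin{equation*}
\Sigma_{(X,A)}\D^X_A; \Sigma_{(X,A)}\pi_2^*(\id, \sigma); \mu \;+\; \Sigma_{(X,A)}\D^X_A; \Sigma_{(X,A)}\pi_2^*(\id, \id); \mu .
\end{equation*}

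Next, rewrite $\Sigma_{(X,A)} \D^X_A$ in each summand using Lemma \ref{lem:SD=partial}, which gives $\cont^X_A \tensor_X \id; \id \tensor_X \partial^X_A$. The second factor of each summand is then evaluated by Lemma \ref{lem:Leib3}: one gives $\cont^X_A \tensor_X \id_{\oc^\Simp(X,A)}$, the other the same composed with the symmetry $\sigma^X$. For the trailing counit $\mu^{(X,A)}$, Remark \ref{rem:SigmaContWeak} identifies it with $\weak^X_A \tensor_X \id$.

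Each summand therefore becomes a string diagram in the fibre: a contraction, a copy of $\partial^X_A$, another contraction, possibly a symmetry, and a weakening. Counitality of the comonoid $(\oc^\Simp(X,A), \cont^X_A, \weak^X_A)$ (Corollary \ref{def_f_c_w}) cancels the extra contraction against the weakening. Coassociativity and naturality of the symmetry then move $\partial^X_A$ past the copies. This should yield
\begin{equation*}
\cont^X_A \tensor_X \id; \id \tensor_X \partial^X_A \quad\text{and}\quad \cont^X_A \tensor_X \id; \id \tensor_X \sigma; \partial^X_A \tensor_X \id ,
\end{equation*}
which are exactly the two summands of the Leibniz rule [d.2] in the fibre $\LSimp(\C)_X$.

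The main obstacle is the bookkeeping in this last step. The expression of $\Sigma_{(X,A)}\pi_2^*(\id, s)$ from Lemma \ref{lem:Leib3} produces an extra $\cont_A$ and a $\weak_X$. We must check that these cancel correctly against $\mu$ and against the contraction coming from Lemma \ref{lem:SD=partial}, so that only one contraction survives in each term. We must also check that the symmetry lands on the correct pair of wires, so that the first summand gives $\partial$ on the right factor and the second gives $\partial$ on the left factor, and not the same term twice. I would first do this check on second components in $\L$ using strictness of the monoidal structure, and only afterwards reassemble it fibrewise via the definition of $\tensor_X$ in Proposition \ref{tensor_LS}.
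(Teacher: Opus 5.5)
Your proposal is correct and is essentially the paper's own proof: the same chain Lemma~\ref{lem:partial_dif}, Lemma~\ref{lem:Leib2}, additivity of $\Sigma_{(X,A)}$, Lemmas~\ref{lem:Leib3} and~\ref{lem:SD=partial}, counit cancellation against $\mu^{(X,A)} = \weak^X_A \tensor_X \id$, and naturality of the symmetry; the only difference is that the paper cancels the counit first (reaching $\Sigma_{(X,A)}\D^X_A; (\sigma^X + \id)$) and only then rewrites $\Sigma_{(X,A)}\D^X_A$. One small correction to your final step: the property needed there is cocommutativity of $\cont^X_A$, not coassociativity, because naturality of $\sigma^X$ only yields $\cont^X_A \tensor_X \id; \sigma^X_{(X,\oc A),(X,\oc A \tensor A)}; \partial^X_A \tensor_X \id$, and identifying $\cont^X_A \tensor_X \id; \sigma^X_{(X,\oc A),(X,\oc A \tensor A)}$ with $\cont^X_A \tensor_X \id; \id \tensor_X \sigma^X_{(X,\oc A),(X,A)}$ uses $\cont^X_A;\sigma^X = \cont^X_A$ (a step the paper's last line also takes silently).
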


\begin{proof}
Let $X \in \C$ and $A \in \L$.
\begin{align*}
\partial^X_A; \cont^X_A 
&\underset{\ref{lem:partial_dif}}{=} \Sigma_{(X,A)} \dif_2(\set{\eta_{(X,A)}; \Delta^X_{\U(\oc A)}}; \pi_2); \mu^{(X,A)}_{(X, \oc A \tensor \oc A)} \\
&\underset{\ref{lem:Leib2}}{=} \Sigma_{(X,A)} ([\D^X_A; \pi_2^*(\id_{\U(A)}, \sigma_{\oc A, \oc A}) ] + [\D^X_A; \pi_2^*(\id_{\U(A)}, \id_{\oc A} \tensor \id_{\oc A})]); \mu^{(X,A)}_{(X, \oc A \tensor \oc A)} \\
&\underset{\ref{prop:LS_add}}{=} \Sigma_{(X,A)} \D^X_A; ([\Sigma_{(X,A)} \pi_2^*(\id_{\U(A)}, \sigma_{\oc A, \oc A})] + [\Sigma_{(X,A)} \pi_2^*(\id_{\U(A)}, \id_{\oc A} \tensor \id_{\oc A})]); \mu^{(X,A)}_{(X, \oc A \tensor \oc A)} \\
&\underset{\ref{lem:Leib3}}{=} \Sigma_{(X,A)} \D^X_A; ([\cont^X_A \tensor_X \id_{\oc^\Simp (X,A)}; \id_{\oc^\Simp (X,A)} \tensor_X \sigma^X_{\oc^\Simp (X,A), \oc^\Simp (X,A)}] + [\cont^X_A \tensor_X \id_{\oc^\Simp (X,A)}]); \\ & \mu^{(X,A)}_{(X, \oc A \tensor \oc A)} \\
&\underset{\ref{rem:SigmaContWeak}}{=} \Sigma_{(X,A)} \D^X_A; (\sigma^X_{\oc^\Simp (X,A), \oc^\Simp (X,A)} + \id_{\oc^\Simp (X,A) \tensor_X \oc^\Simp (X,A)}) \\
&\underset{\ref{lem:SD=partial}}{=} \cont^X_A \tensor_X \id_{(X,A)}; \id_{(X, \oc A)} \tensor_X \partial^X_A; (\sigma^X_{ (X,\oc A), (X, \oc A)} + \id_{ (X, \oc A) \tensor_X (X, \oc A)}) \\
&\underset{}{=} \cont^X_A \tensor_X \id_{(X,A)}; [ (\id_{(X, \oc A)} \tensor_X \partial^X_A;\sigma^X_{ (X,\oc A), (X, \oc A)}) + (\id_{(X, \oc A)} \tensor_X \partial^X_A)] \\
&\underset{}{=} \cont^X_A \tensor_X \id_{(X,A)}; [ (\sigma^X_{ (X,\oc A), (X, \oc A \tensor A)}; \partial^X_A  \tensor_X  \id_{(X, \oc A)}) + (\id_{(X, \oc A)} \tensor_X \partial^X_A)] \\
&\underset{}{=} \cont^X_A \tensor_X \id_{(X,A)}; [ ( \id_{(X,\oc A)} \tensor_X \sigma^X_{ (X,\oc A), (X, A)}; \partial^X_A  \tensor_X  \id_{(X, \oc A)}) + (\id_{(X, \oc A)} \tensor_X \partial^X_A)] \qedhere
\end{align*}
\end{proof}

We showed that for every $X \in \C$, $\partial^X$ satisfies the linear rule, the chain rule, and the Leibniz rule. Thus, we know \cite{blute_differential_2020}[cor. 6.11] that $\partial^X$ is a deriving transform. We now state the main theorem.

\begin{thm}
\label{theo_1}
In any generalised differential Seely category, every slice of the linear simple fibration is a differential Seely category.
\end{thm}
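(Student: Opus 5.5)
The plan is to assemble Theorem \ref{theo_1} from the results already established in this section, fibre by fibre. Fix a generalised differential Seely category and an object $X \in \C$. I will show that the fibre $\LSimp(\C)_X$ carries all the data of Definition \ref{def:DSC}.

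\textbf{Step 1: the additive Seely structure on the fibre.} I will check that $\LSimp(\C)_X$ is an additive monoidal category with biproducts, using Proposition \ref{prop:LS_add} together with Proposition \ref{tensor_LS}. The comonad is $\oc^\Simp_X$, the restriction of the fibred comonad $\oc^\Simp$. It comes from the fibrewise linear-non-linear adjunction $\F^\Simp_X \dashv \U^\Simp_X$ of Proposition \ref{U_F_lnl}, between $\Simp(\C)_X$ (which is cartesian) and $\LSimp(\C)_X$. Contraction and weakening in the fibre are $\cont^X_A$ and $\weak^X_A$ from Corollary \ref{def_f_c_w}.

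The one point needing care is that the definition of a differential Seely category asks for Seely isomorphisms, that is, a storage structure. Since the fibre has biproducts and $\oc^\Simp_X$ is induced by a monoidal adjunction from a cartesian category, I would obtain these from $\F^\Simp_X$ being strong monoidal, which sends products in $\Simp(\C)_X$ to tensors. For this I need $\U^\Simp_X$ to preserve products, which follows from it being a right adjoint. By the same argument as in the non-fibred case (\cite{blute_differential_2020}), these combine into the Seely isomorphisms $\oc^\Simp(A\oplus B)\cong \oc^\Simp A\tensor_X\oc^\Simp B$ in the fibre.

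\textbf{Step 2: the deriving transform.} The candidate is $\partial^X_A$ from Definition \ref{def:partial}. I will cite its naturality in the fibre, the Linear Rule (Proposition \ref{prop:linearrule}), the Chain Rule (Proposition \ref{prop:lchainrule}) and the Leibniz Rule proved just above. In the presence of the Seely structure, \cite{blute_differential_2020}[cor. 6.11] says these three rules imply the constant and interchange rules, so $\partial^X$ is a deriving transform and the fibre is a differential Seely category. If that corollary's hypotheses turned out not to be met, I would derive the two remaining rules directly.
\begin{itemize}
\item \emph{Constant rule.} Use Lemma \ref{lem:partial_dif} with $f=\weak^X_A$. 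The morphism $(\weak^X_A)^\dagger$ factors through the terminal object, so its second partial differential is $0$ by Lemma \ref{lem:trivial} and preservation of products.
\item \emph{Interchange rule.} This would come from a symmetry of second partial differentials, which I expect to be the main obstacle.
\end{itemize}

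The main obstacle is therefore Step 1's verification that the Seely isomorphisms exist fibrewise and are compatible with the biproduct structure. I would first try to reduce it to remark \ref{rem:LSIisoL} applied after reindexing. Failing that, I would check directly that the canonical map $\oc^\Simp_X(A\with B)\to \oc^\Simp_X A\tensor_X \oc^\Simp_X B$ is inverted by the fibrewise lift of the inverse Seely map, since all the structure is computed pointwise in the second component.
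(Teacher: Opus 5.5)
Your proposal is correct and follows essentially the same route as the paper. The fibre $\LSimp(\C)_X$ inherits its additive Seely structure from the fibrewise linear-non-linear adjunction $\F^\Simp_X \dashv \U^\Simp_X$ together with Propositions \ref{prop:ProductLS} and \ref{prop:LS_add}. Then $\partial^X$ is shown to be a deriving transform by combining its naturality with the linear, chain and Leibniz rules and invoking \cite{blute_differential_2020}[cor. 6.11]. Your extra argument for the fibrewise Seely isomorphisms (strong monoidality of $\F^\Simp_X$, plus $\U^\Simp_X$ preserving products as a right adjoint) is sound but not needed: the paper takes a Seely category to be simply an LNL whose linear category has products. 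Likewise, your fallback derivations of the constant and interchange rules are unnecessary given the cited corollary.
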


\begin{cor}
As an immediate corollary, we have that the monoidal category of a generalised differential Seely category is a differential Seely category.
\end{cor}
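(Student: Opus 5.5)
The corollary follows by specialising Theorem \ref{theo_1} to the fibre above the terminal object $I$ of $\C$. Concretely, Theorem \ref{theo_1} tells us that $\LSimp(\C)_I$, equipped with the fibred comonad $\oc^\Simp_I$, its fibrewise biproducts, the fibrewise monoidal structure $\tensor_I$ and the deriving transform $\partial^I$ of Definition \ref{def:partial}, is a differential Seely category. By Remark \ref{rem:LSIisoL}, there is an isomorphism of categories $\LSimp(\C)_I \cong \L$, sending $(I,A)$ to $A$ and a vertical morphism $(\id_I, u)$ with $u : \F(I) \tensor A \to B$ to $m_I ; u$ (using $m_I : 1 \to \F(I)$, with inverse given by precomposition with $m_I^{-1}$). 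Moreover, $\U^\Simp_I$ and $\F^\Simp_I$ correspond to $\U$ and $\F$ under this isomorphism and the identification $\Simp(\C)_I \cong \C$. The plan is to check that this isomorphism transports every piece of the differential Seely structure on the fibre to the corresponding piece on $\L$. Since being a differential Seely category is a property of structure invariant under isomorphism, $\L$ then inherits a deriving transform satisfying [d.1]--[d.5].

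The routine checks, in order, are the following.

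\begin{itemize}
\item \textbf{Monoidal structure.} By Proposition \ref{tensor_LS}, $(\id_I,u)\tensor_I(\id_I,v)$ has second component $\cont_I \tensor \id ; \id\tensor\sigma\tensor\id ; u\tensor v$. Since $\F(I)\cong 1$ and $\cont_I$ corresponds to the identity under $m_I$, this becomes $u \tensor v$ in $\L$. The same argument handles symmetry and unit.
\item \textbf{Additive structure and biproducts.} These are inherited from $\L$ by Proposition \ref{prop:LS_add}, since the fibre structure is defined componentwise.
\item \textbf{Comonad $\oc$.} The comonad $\oc^\Simp_I = \F^\Simp_I\U^\Simp_I$ corresponds to $\F\U = \oc$, with $\dere_{(I,A)} = (\id_I, \weak_I \tensor \dere_A)$ corresponding to $\dere_A$ because $\weak_I = m_I^{-1}$.
\item \textbf{Contraction and weakening.} By Corollary \ref{def_f_c_w}, $\cont_{(I,A)}$ and $\weak_{(I,A)}$ correspond to $\cont_A$ and $\weak_A$, and similarly $\dig$ corresponds to $\dig$.
\end{itemize}

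Under these identifications, the transported deriving transform $\partial_A := m_I ; \partial^I_A$, viewed as a map $\oc A \tensor A \to \oc A$, satisfies the axioms of Definition \ref{def:DSC} in $\L$, because each axiom is an equation built from the structure maps just identified.

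The one point requiring care is the Seely condition. Definition \ref{def:DSC} asks for an \emph{additive Seely category}, whereas a GDSC is only an additive LNL with biproducts. The hypotheses of Theorem \ref{theo_1} already force the fibre to carry the Seely structure, namely Seely isomorphisms $\oc(A\with B)\cong \oc A\tensor\oc B$. These come from $\U$ preserving products as a right adjoint, together with $\F$ being strong monoidal by Proposition \ref{m_is_colax}. The corollary should be read in the same sense as the theorem: whatever structure the theorem certifies on $\LSimp(\C)_I$ is carried verbatim to $\L$ along the isomorphism of Remark \ref{rem:LSIisoL}. So the main obstacle is only bookkeeping, namely verifying that the isomorphism $\LSimp(\C)_I\cong\L$ is strict monoidal and commutes with $\oc$, $\dere$, $\dig$, $\cont$, $\weak$ and sums. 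The strictness assumption on monoidal structures makes $m_I$ the only nontrivial coherence to track.
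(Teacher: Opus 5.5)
Your proposal is correct and is exactly the paper's argument: apply Theorem \ref{theo_1} to the fibre over the terminal object $I$, then transport the structure along the isomorphism $\LSimp(\C)_I \cong \L$ of Remark \ref{rem:LSIisoL}. Your explicit checks that this isomorphism, built from $m_I$, respects $\tensor$, the biproducts, $\oc$, $\dere$, $\dig$, $\cont$ and $\weak$ (and your remark that the Seely structure comes for free) spell out details the paper leaves implicit.
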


\section{Correspondance with Cartesian Differential Categories}
\label{sec:CDC}

Cartesian Differential Categories (CDC) are an axiomatization of differentiation focusing on the cartesian structure instead of the structure of linear maps \cite{blute_cartesian_2009} \cite{cruttwell_cartesian_2017}. They are involved in a correspondence with DSC:
\begin{itemize}
\item The Kleisli category of a DSC is a CDC \cite{blute_cartesian_2009}.
\item Every CDC embeds fully faithfully into the Kleisli category of a DSC \cite{garner_cartesian_2021}.
\end{itemize}

Generalised Cartesian Differential Categories (GCDC) are a generalisation of CDC \cite{cruttwell_cartesian_2017}. It is natural to ask whether or not such a correspondence still holds between GCDC and our GDSC. In this section, we briefly investigate this question. First, we define Cartesian differential categories.

\begin{defi}[Left Additive Categories \cite{blute_cartesian_2009}]
A category $\mathbb{C}$ is left additive if each hom-set is a commutative monoid and $f (g + h) = (f g) + (f h)$ and $f 0 = 0$. 
\end{defi}

\begin{defi}[Cartesian Left Additive Categories \cite{blute_cartesian_2009}]
A Cartesian left additive category is a left additive category with products such that the structure maps $\pi_1$, $\pi_2$, and $\Delta$ are additive and that whenever $f$ and $g$ are additive, then $f \times g$ is additive.
\end{defi}

\begin{defi}[Cartesian Differential Categories \cite{blute_cartesian_2009}]
A CDC is a cartesian left additive category, equipped with an operator $D_\times$ sending a morphism $f : X \to Y$ to a morphism $D_\times [f]:  X \times X \to Y$ such that
\begin{enumerate}[{[CDC.1]}]
\item $D_\times [f+g] = D_\times [f] + D_\times [g]$ and $D_\times[0] = 0$  
\item $\langle v, h + k \rangle; D_\times [f] = \langle v,h \rangle D_\times [f] + \langle v,k \rangle D_\times [f]$ and $\langle v,0 \rangle D_\times [f] = 0$
\item $D_\times [1] = \pi_2$, $ D_\times [\pi_2] = \pi_2;\pi_2$ and $D_\times [\pi_1] = \pi_2;\pi_1$
\item  $D_\times[\langle f, g \rangle] = \langle D_\times [f], D_\times[g] \rangle$
\item $D_\times[f;g] = \langle \pi_1; f, D_\times[f] \rangle; D_\times[g]$ 
\item $ \langle \langle g, h \rangle, \langle 0, k \rangle \rangle; D_\times[D_\times[f]] = \langle g, k \rangle; D_\times[f]$
\item $ \langle \langle g, h \rangle, \langle k, 0 \rangle \rangle; D_\times[D_\times[f]] = \langle \langle g, k \rangle, \langle h, 0 \rangle \rangle; D_\times[D_\times[f]] $
\end{enumerate}
\end{defi}

\begin{exa}
An example of CDC is the category of finite dimentional reel vector spaces with smooth maps. $D_\times$ is given by the Jacobian. 
\end{exa}

Given a DSC $\L$, its Kleisli category $\L_\oc$ is a cartesian left additive category. For every $f : \oc A \to B$, defining $D_\times[f]$ as 
\begin{center}
\begin{tikzcd}[column sep=4em, row sep=4em]
\oc (A \oplus A) \arrow[r, "{s_{A,B}}"] & \oc A \tensor \oc A \arrow[r, "\id_{\oc A} \tensor \dere_A"] & \oc A \tensor A \arrow[r, "\partial_A; f"] & B
\end{tikzcd}
\end{center}
Defines a cartesian differential operator, endowing $\L_\oc$ with a CDC structure. Notice that defining $D_\times[f]$ this way is the same as defining it as $D_\times[f] = \banana{\Tan(f)}; \pi_2$ (see lem. \ref{lem:a1}). 
\begin{center}
\begin{tikzcd}[column sep=4em, row sep=4em]
\Simp(\L_\oc) \arrow[dr, "\simp"'] \arrow[rr, "\F^\Simp"{name=U}, bend left=8, , shift left=0.5]  &  & \LSimp(\L_\oc)  \arrow[ll, "\U^\Simp"{name=D}, bend left=8, shift left=0.5]  \arrow[dl, "\lsimp", shift left= 1] \arrow[phantom, from=U, to=D, "\dashv" rotate=-90]  \arrow[dl, "{\banana{-}}",  shift left= 2, bend left=30]&  \\
& \L_\oc \arrow[ur, "\Tan",shift left=1] & 
\end{tikzcd}
\end{center}
In this case $\banana{\Tan(-)} : \L_\oc \to \L_\oc$ is the tangent structure associated to the CDC $\L_\oc$ \cite{cockett_differential_2014}. This observation raises the question: is it in general the case that in any GDSC, the endofunctor $\banana{\Tan(-)}: \C \to \C$ defines a GCDC? To answer it, we first define generalised Cartesian differential categories.

\begin{defi}[Generalised Cartesian Differential Categories \cite{garner_cartesian_2021}]
A generalised Cartesian differential category consists of a Cartesian category $\mathbb{C}$ with:
\begin{itemize}
\item for each object $X$, a commutative monoid $L(X) = (\lambda(X), +_X , 0_X )$, satisfying  $L(\lambda(X)) = L(X)$ and $L(X \times Y ) = L(X) \times L(Y )$. 
\item for each map $f : X  \to Y$ , a map $D_\times[f] : X \times \lambda(X) \to \lambda(Y)$ that satisfies the following axioms.
\end{itemize} 
\begin{enumerate}[{[GCDC.1]}]
\item $D_\times [+_X] = \pi_2; +_X$ and $D_\times[0_X] = \pi_2;0_X$  
\item $\langle v, h + k \rangle; D_\times [f] = \langle v,h \rangle D_\times [f] + \langle v,k \rangle D_\times [f]$ and $\langle v,0 \rangle D_\times [f] = 0$
\item $D_\times [1] = \pi_2$, $ D_\times [\pi_2] = \pi_2;\pi_2$ and $D_\times [\pi_1] = \pi_2;\pi_1$
\item  $D_\times[\langle f, g \rangle] = \langle D_\times [f], D_\times[g] \rangle$
\item $D_\times[f;g] = \langle \pi_1; f, D_\times[f] \rangle; D_\times[g]$ 
\item $ \langle \langle g, h \rangle, \langle 0, k \rangle \rangle; D_\times[D_\times[f]] = \langle g, k \rangle; D_\times[f]$
\item $ \langle \langle g, h \rangle, \langle k, 0 \rangle \rangle; D_\times[D_\times[f]] = \langle \langle g, k \rangle, \langle h, 0 \rangle \rangle; D_\times[D_\times[f]] $
\end{enumerate}
\end{defi}

\begin{exa}
Every CDC is a GCDC with $\lambda(X) = X$ for every $X$.
\end{exa}

In every CDSC we can define an operator $D_\times$ as $D_\times[f] = \banana{\Tan(f)}; \pi_2$. We know that considering the functor $\U^\Simp \comp \Tan : \C \to \Simp(\C)$ allows us to deduce that this operator satisfies axioms GCDC.1 to GCDC.5 \cite{capucci_fibrational_2024}. The question is whether or not GCDC.6 and GCDC.7 hold. Showing GCDC.6 is straightforward, as it is in essence the same as axiom t.3. Using the same technique used in proving that the Kleisli of DSC is a CDC \cite{blute_cartesian_2009}, we can show that GCDC.7 holds for all morphisms of shape $X \to \U(A)$. This is because such morphisms are representable in $\L$ as morphisms of type $\F(X) \to A$, and we can describe their differential in $\L$ using the generalised deriving transform (see rem. \ref{rem:genpartial}). Proving it for all morphisms is still an open question. Perhaps, one should add one more axiom to the definition of GDSC in order for $\C$ to always be a GCDC.

\section{Conclusion and Future Works}
\label{conclu}

Generalised differential Seely categories are a more general notion of logical differentiation than differential Seely categories. They are fundamentally a dependent construction; limiting oneself to only the fibred adjunction we described is too restrictive. The natural setting for this construction seems to be a general model of dependent LL \cite{pitts_categorical_2015} \cite{lundfall_models_nodate}, where one allows pairs of fibred linear-non-linear adjunctions that are not necessarily $\Simp(C)$ and $\LSimp(\C)$. Moreover, most of the proofs done here could be done in the general setting of comprehension categories, and the fibrational setting for tangent categories described in the recent work of M. Capucci, G. Cruttwell, G. Neil, and F. Zanasi \cite{capucci_fibrational_2024}. \\
This work is the starting point for a search for structures satisfying the more general setting we described. A source of such structures should be cartesian tangent categories, whose differential bundle fibration makes it into a model of dependent LL. Looking for such structures in the tangent category of convenient manifold or in the tangent categories arising as the Eilenberg-Moore category of DSC \cite{cockett_tangent_2020} could be a first step in understanding the mix between dependence and differentiation in LL. Considering tangent categories is natural, as it is the natural dependent generalisation of GCDC. \\

\begin{center}
\begin{tikzcd}[column sep=4em, row sep=3em]
|[draw, rectangle]|{\text{Tangent Categories}} 
  & |[draw, rectangle]|{\text{Dependent DiLL}} \arrow[l, "dual ? "', dotted] \\
|[draw, rectangle]|{\text{GCDC}}  \arrow[u, "generalisation"]
  & |[draw, rectangle]|{\text{GSCD}} \arrow[l, "dual"', dotted]   \arrow[u, "generalisation"', dotted]\\
|[draw, rectangle]|{\text{CDC}} \arrow[u, "generalisation"]
  & |[draw, rectangle]|{\text{DSC}} \arrow[l, "coKleisli"'] \arrow[u, "generalisation"']
\end{tikzcd}
\end{center}

\clearpage

\bibliographystyle{alphaurl}
\bibliography{biblio.bib}

\end{document}